\theoremstyle{definition}
\newtheorem{theorem}{Theorem}
\newtheorem{proposition}{Proposition}
\newtheorem{lemma}{Lemma}
\newtheorem{remark}{Remark}
\theoremstyle{definition}
\newtheorem{definition}{Definition}
\providecommand{\url}[1]{#1}
\begin{document}
\title{Pinching-Antenna Systems (PASS): Power Radiation Model and Optimal Beamforming Design}

\author{
Xiaoxia Xu, Xidong Mu, Zhaolin Wang, 
Yuanwei Liu, \textit{Fellow, IEEE},  
and Arumugam Nallanathan, \textit{Fellow, IEEE}
\thanks{X. Xu, Z. Wang, and A. Nallanathan are with the School of Electronic Engineering and Computer Science, Queen Mary University of
London, London E1 4NS, U.K. (e-mail: \{x.xiaoxia, zhaolin.wang, a.nallanathan\}@qmul.ac.uk).}
\thanks{X. Mu is with the Centre for Wireless Innovation (CWI), Queen's University Belfast, Belfast, BT3 9DT, U.K. (e-mail: x.mu@qub.ac.uk)}
\thanks{Y. Liu is with the Department of Electrical and Electronic Engineering (EEE), The University of Hong Kong, Hong Kong (e-mail: yuanwei@hku.hk).}
\vspace{-3em}
}
\vspace{-1.5em}

\maketitle

\vspace{-1.5em}
\begin{abstract}  
    Pinching-antenna systems (PASS) improve wireless links by configuring the locations of activated pinching antennas along dielectric waveguides, namely pinching beamforming.
    In this paper, a novel adjustable power radiation model is proposed for PASS, 
    where power radiation ratios of pinching antennas can be flexibly controlled by tuning coupling spacing between pinching antennas and waveguides. 
    The closed-form coupling spacings are derived to achieve flexible and equal-power radiation. 
    Based on the commonly-assumed equal-power radiation, a practical PASS framework relying on discrete activation is considered, where pinching antennas can only be activated among a set of predefined locations. 
    A transmit power minimization problem is formulated, which jointly optimizes the transmit beamforming, pinching beamforming, and the numbers of activated pinching antennas, subject to each user's minimum rate requirement. 
    (1) To obtain globally optimal solutions of the resulting highly coupled mixed-integer nonlinear programming (MINLP) problem, 
    branch-and-bound (BnB)-based algorithms are proposed for both single-user and multi-user scenarios.
    (2) A low-complexity many-to-many matching algorithm is further developed. Combined with the Karush-Kuhn-Tucker (KKT) theory, locally optimal and pairwise-stable solutions are obtained within polynomial-time complexity. Simulation results demonstrate that: 
    (i) PASS significantly outperforms conventional multi-antenna architectures, particularly when the number of users and the spatial range increase; and 
    (ii) The proposed matching-based algorithm achieves near-optimal performance, resulting in only a slight performance loss while significantly reducing computational overheads.
    Code is available at \url{https://github.com/xiaoxiaxusummer/PASS_Discrete}. 
\end{abstract}
\begin{IEEEkeywords}
    Activation, beamforming, optimization, pinching antenna, pinching-antenna system (PASS).
\end{IEEEkeywords}

\section{Introduction}
Wireless networks have been long pursuing higher capacity and enhanced connectivity, 
driving the development of flexible-antenna techniques in the sixth-generation (6G) networks, such as reconfigurable intelligent surfaces (RISs) \cite{RIS}, 
simultaneous transmitting and reflecting surfaces (STARS) \cite{STARS}, 
and fluid/movable antennas \cite{FluidAntenna,MovableAntenna}. 
By manipulating the propagation environment or antenna geometry, 
existing flexible-antenna techniques significantly enhance spectral and energy efficiency, but primarily affect small-scale fading and local scattering characteristics.  
To further adjust large-scale channel effects, e.g., path loss and long-range shadowing, 
pinching-antenna system (PASS) has emerged recently as a revolutionary flexible-antenna technique \cite{PAr_Liu,PAr_Ding}. 
PASS enables wireless signals to  dynamically and closely follow mobile users. 
The original concept of PASS and the world's first prototype were developed by NTT DOCOMO \cite{PA_DOCOMO}.
Specifically, PASS comprise dielectric waveguides spanning across several to tens of meters. 
Acting similar to leaky-wave antennas, these waveguides transmit and receive wireless signals via multiple small dielectric particles, known as \textit{pinching antennas}, that are discretely attached along the waveguide \cite{PA_DOCOMO}. 
Since pinching antennas can be deployed and selectively activated at arbitrary positions along the waveguide, 
signal radiation and reception can be delivered to the ``last meter'' of user proximity. 
Therefore, PASS not only reshape the path loss profile experienced by mobile users but also maintain line-of-sight (LoS) connectivity, even in dense obstacle environments.

PASS reconfigure wireless links by changing the  locations of activated pinching antennas along waveguides, namely \textit{pinching beamforming}. 
Existing PASS structures can be categorized into continuous activation and discrete activation \cite{PAr_Liu}. 
Continuous activation allows pinching antennas to be placed/activated at arbitrary positions over waveguides. 
In contrast, discrete activation selectively activates pinching antennas installed at a finite set of pre-configured discrete locations, thus reducing 
the implementation complexity and hardware complexity. 
For single-waveguide continuous activation, the authors of \cite{Rate_DL_PAr_SU} maximized the downlink transmission rate by optimizing the locations of the pinching antennas. 
Moreover, the authors of \cite{Rate_UL_Par_MU} jointly optimized the positions of pinching antennas and the bandwidth/time resource allocation for uplink communications.
The authors of \cite{Optimal_spacing} derived the optimal number of pinching antennas and the optimal inter-antenna spacing to maximize the array gain. 
The authors of  \cite{outage_PASS} derived the outage probability and average rate, and analyzed the optimal placement for a single pinching antenna. 
The authors of \cite{Discrete_matching} investigated the discrete activation of pinching antennas for non-orthogonal multiple access (NOMA) assisted PASS. 
Considering a general multi-waveguide PASS in downlink multiple-input single-output (MISO) networks, 
the authors of \cite{PASS_modeling} revisited the physics of PASS based on electromagnetic coupling and derived power radiation models. 
Penalty-based and zero-forcing based beamforming designs were developed for both continuous and discrete activations. 
For continuous activation, the authors of \cite{PASS_beamforming_optimization} proposed both optimization-based and learning-based methods 
to jointly optimize transmit and pinching beamforming for system sum rate maximization. 
More recently, \cite{PASS_ISAC} investigated PASS-enabled integrated sensing and communications by exploiting look-angle dependent radar cross-section (RCS) to achieve target diversity, 
and \cite{PASS_near_far} introduced wireless powered pinching antenna networks to address the double near-far problem.

While existing studies have demonstrated the promising prospects of PASS, two fundamental challenges remain unresolved for unlocking its potentials:
\textbf{(i) Globally optimal PASS design:} 
The joint optimization of conventional digital beamforming and pinching beamforming suffers from a highly coupled and nonconvex problem structure. 
Hence, existing studies typically investigated 
suboptimal joint beamforming solutions under a fixed number of activated antennas \cite{PASS_modeling,PASS_beamforming_optimization}. 
However, the globally optimal joint beamforming design of PASS remains unexplored, and the performance gaps between suboptimal solutions and the global optimum are unknown.
\textbf{(ii) Adjustable power radiation}: 
    How to achieve adjustable power radiation control remains an open problem. 
    Existing PASS studies commonly relied on the equal-power radiation assumption \cite{Rate_DL_PAr_SU,PASS_beamforming_optimization}. 
    The fundamental physical power radiation model was proposed in \cite{PASS_modeling} based on coupled-mode theory.  
    However, the power radiation ratio is determined by customizing coupling length of each pinching antenna. 
    Since the coupling length is typically fixed by fabrication, it cannot be altered in the real time. 
    Even when the number of activated antennas change, all antennas' coupling lengths need to be reshaped for adaptation. Hence, adaptive designs are necessitated for dynamic adjustment. 

Against the above background, this paper proposes a novel adjustable power radiation model and a globally optimal beamforming design for PASS with discrete activation.
\textit{First}, the proposed power radiation model adjusts power radiation ratios by flexibly altering coupling spacing between pinching antennas and waveguides. 
We derive the closed-form coupling spacing adjustment scheme to ensure flexible or equal-power radiation given arbitrary combinations of activated pinching antennas. 
\textit{Secondly}, for the equal-power radiation case, we formulate the joint pinching beamforming and transmit beamforming optimization problem, 
 which is a nonconvex mixed integer nonlinear programming (MINLP). 
We propose globally optimal solutions based on a tailored branch-and-bound (BnB) approach.
\textit{Thirdly}, to achieve a low-complexity design, we develop a welfare-driven many-to-many matching algorithm, which is demonstrated to achieve near-optimal performance.
The key contributions of this paper are summarized as follows.
\begin{itemize}
    \item 
    We propose a novel adjustable power radiation model for PASS, where power radiation ratios can be controlled by tuning coupling spacing between waveguides and pinching antennas.
    We derive closed-form spacing solutions to support both flexible and equal power radiations given arbitrary antenna activation numbers and combinations. 
    Based on the generally assumed equal-power radiation, we investigate a practical PASS communication framework with discrete activation, where 
    pinching antennas can be activated from a set of pre-mounted discrete locations on waveguides.
    A transmit power minimization problem is formulated, which jointly optimizes the transmit beamforming, pinching beamforming, and the numbers of activated pinching antennas,
    while ensuring the minimum rate requirements of users.
   \item We propose globally optimal joint beamforming algorithms for both single-user and multi-user scenarios. 
    For the single-user scenario, the MINLP is reduced to a non-convex quadratic constrained quadratic programming (QCQP), and we develop a BnB algorithm to find the global optimum. 
    For the multi-user scenario, we construct tractable convex relaxation based on McCormick envelopes to enable bound estimation of BnB.
    We mathematically prove that the resulting algorithm can optimally determine the numbers and locations of activated antennas and the corresponding transmit beamforming.
    \item We further propose a low-complexity suboptimal algorithm based on the many-to-many matching theory. 
    The pinching antenna activation is modelled as a many-to-many matching game with externalities and non-substitutability,  
    where agents' preferences depend on beamforming solutions obtained via the Karush-Kuhn-Tucker (KKT) theory.
    A  welfare-driven many-to-many matching algorithm is developed,  
    which can converge to local optima in polynomial time complexity and ensure pairwise equilibrium.
    \item We provide numerical results to verify the effectiveness of the proposed algorithms, which demonstrate that:  
    Relying on the proposed framework, PASS outperforms conventional MIMO systems in both single-user and multi-user cases, and the achievable performance gains significantly increase with the number of multiplexed users and the spatial range. 
    Moreover, the proposed low-complexity many-to-many matching algorithm achieves near-optimal performance, which only suffers from a marginal loss compared to the optimal algorithm.
\end{itemize}

The rest of this paper is organized as follows. 
Section II presents the proposed PASS framework with adjustable power radiation model and formulates the optimization problem. 
Section III proposes the globally optimal BnB algorithms, and Section IV developes the low-complexity many-to-many matching algorithm. 
Section V provides numerical results to verify the efficiency of the proposed framework and algorithms. 
Finally, Section VI concludes the paper.

\textit{Notations}: The variable, vector, and matrix are denoted by $x$, $\mathbf{x}$, and $\mathbf{X}$, respectively. $|x|$ denotes the absolute value of a real number and the modulus of a complex number. 
$\left\Vert\mathbf{x}\right\Vert$ is the vector Euclidean norm, 
and $\left\Vert\mathbf{X}\right\Vert$ is the matrix Frobenius norm.
$\mathrm{Re}\left\{x\right\}$ and $\mathrm{Im}\left\{x\right\}$ denote the real and image parts of $x$, and $x^{H}$ is the complex conjugate number of $x$. 
$\mathbf{1}_{N\times 1}$ denotes an $N$-dimension all-ones vector.
$\mathbf{X}^{T}$ and $\mathbf{X}^{H}$  denote the transpose and the Hermitian matrix.

\begin{figure}[!t]
    \centering
    \includegraphics[width=0.49\textwidth]{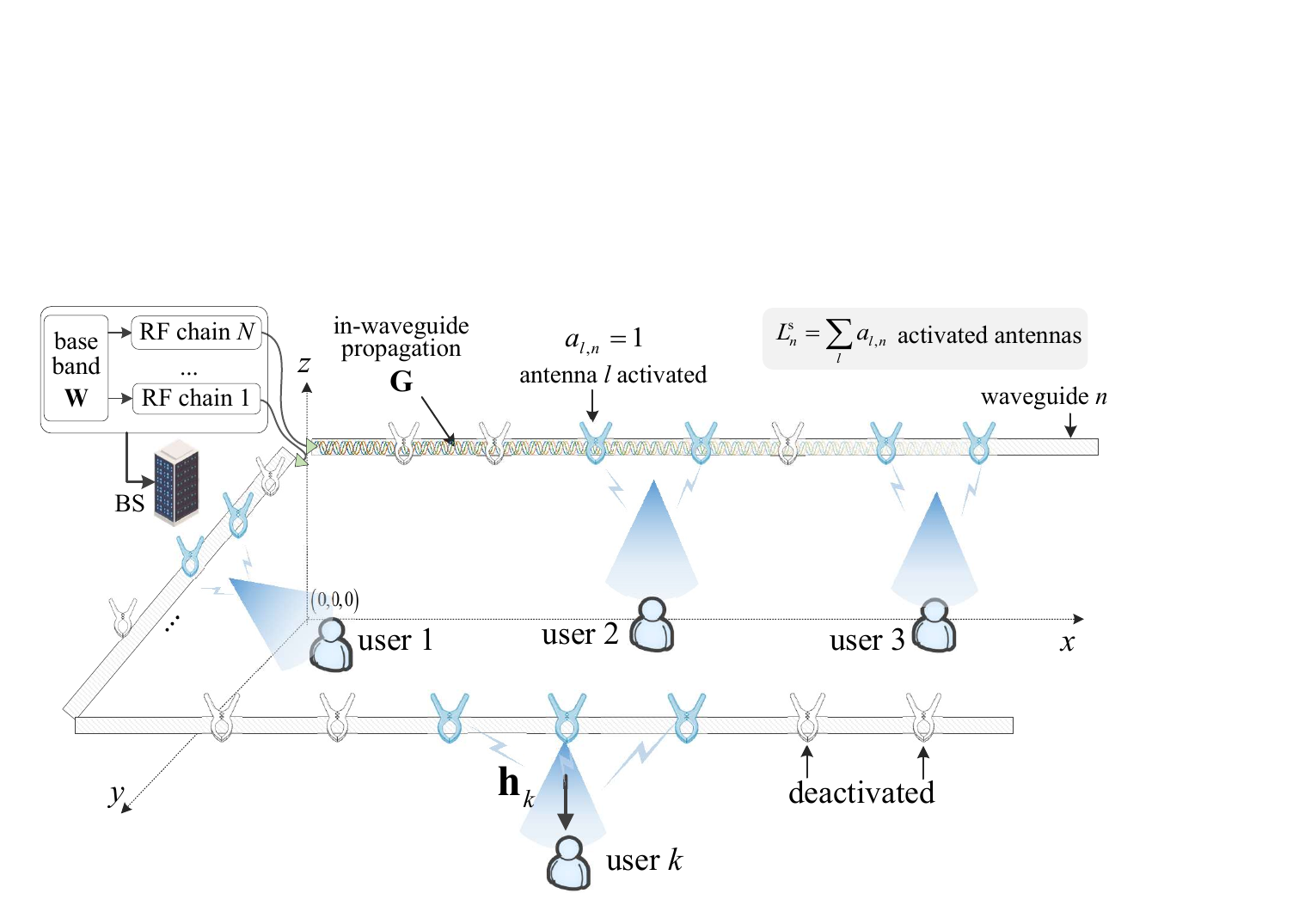}
    \caption{PASS enabled downlink MISO communication with discrete pinching antenna activation.}\label{fig_pass}
\end{figure}

\section{System Model and Problem Formulation} 

As shown in Fig. \ref{fig_pass}, we consider a PASS enabled downlink MISO communication framework with discrete activation, which serves a set $\mathcal{K}$ of $K$ single-antenna users.   
The PASS comprise a set $\mathcal{N}$ of $N$ waveguides  deployed over a rectangular area of $S_{\mathrm{x}} \times S_{\mathrm{y}}$ $\text{m}^2$. 
$L$ pinching antennas, indexed by $\mathcal{L}=\{1,2,\dots,L\}$, are pre-mounted at a set of discrete locations over each waveguide. 
The total number of pinching antennas is $M = NL$. 
Each waveguide is connected to a single radio frequency (RF) chain, thus enabling baseband processing and spatial multiplexing. 
The multiplexed baseband signals are fed into waveguides and radiated via the activated pinching antennas.
By selectively activating the pinching antennas, the system can adjust both the phases and the large-scale path loss of incident signals, 
leading to low-cost pinching beamforming design. 
We adopt a three-dimensional (3D) Cartesian coordinate. 
The base station (BS) of PASS is deployed at $\bm{\eta}_{0}=(0,0,h_{\mathrm{PA}})$, 
where $h^{\mathrm{PA}}$ is the fixed height. 
The waveguides can be deployed parallel to both the $x$- and $y$-axes, thus accommodating different user distributions. 
Moreover, the feed point of each waveguide is fixed at $\bm{\eta}_{n}^{\mathrm{W}} = \left[x_{n}^{\mathrm{W}}, y_{n}^{\mathrm{W}},  h^{\mathrm{PA}}\right]^{T}$,  
and $\bm{\eta}_{k}^{\mathrm{U}} = \left[x_{k}^{\mathrm{U}}, y_{k}^{\mathrm{U}}, 0\right]^{T}$ denotes the position of user $k$.
Each pinching antenna $l$ on waveguide $n$ is mounted at a fixed location $\bm{\eta}_{l,n} = \left[x_{l,n}^{\mathrm{PA}}, y_{l,n}^{\mathrm{PA}}, h^{\mathrm{PA}}\right]^{T}$ from a discrete set,  
where $x_{l,n}^{\mathrm{PA}}$ and $y_{n}^{\mathrm{W}}$ are pre-defined coordinates.
Let $\mathbf{a}_{n}=\left[a_{1,n}, a_{2,n}, \dots, a_{L,n}\right]^{T}\in\mathbb{Z}^{L\times 1}$ denote the binary pinching antenna activation vector for waveguide $n$, 
where $a_{l,n}=1$ if pinching antenna $l$ at waveguide $n$ is activated, and  $a_{l,n}=0$ otherwise.

\begin{figure}[!t]
    \centering
    \includegraphics[width=0.49\textwidth]{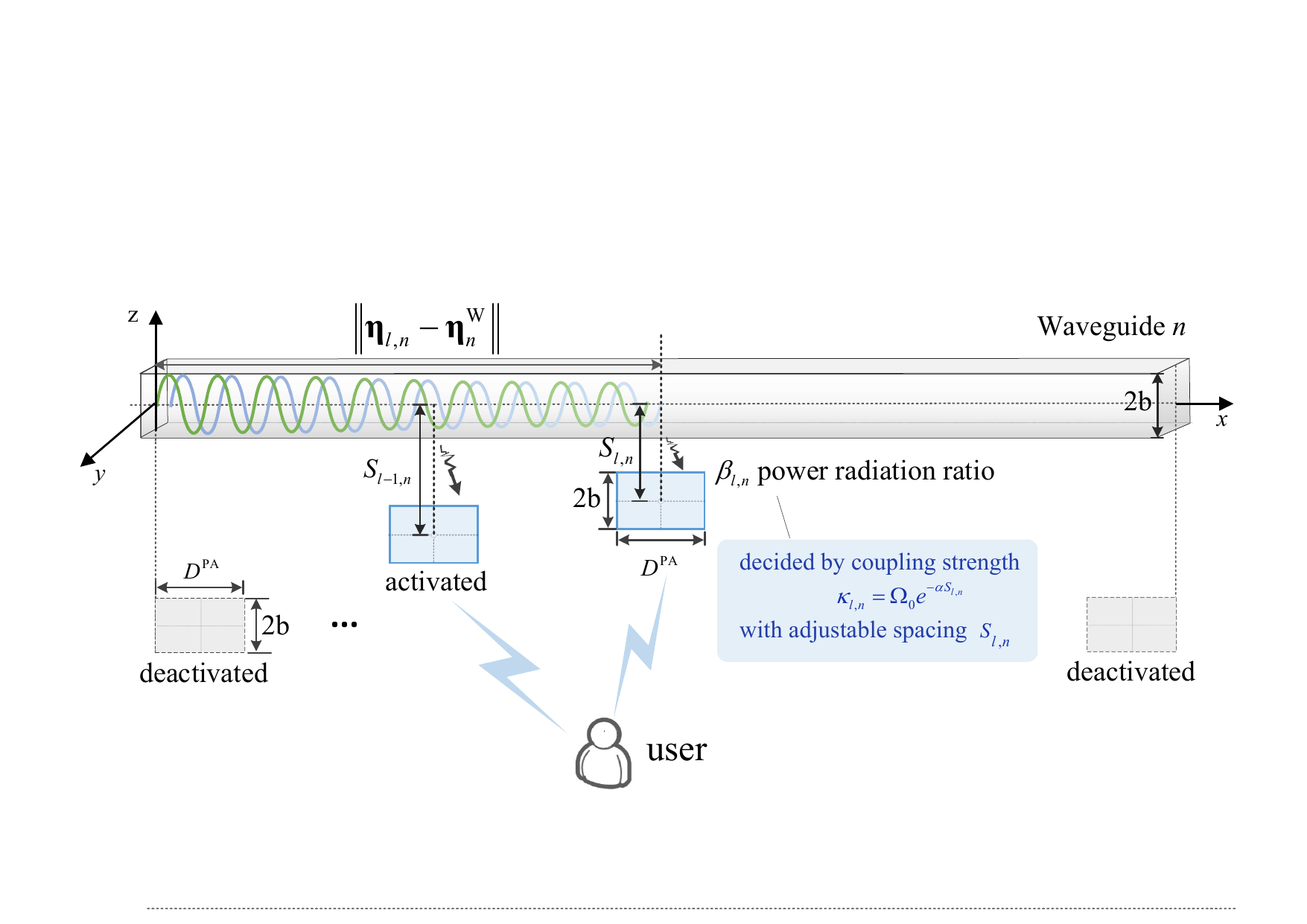}
    \caption{The proposed adjustable power radiation model with a local Cartesian coordinate system. The power radiation ratio of each pinching antenna is decided by adjustable spacing $S_{l,n}$.}
    \label{fig_waveguide}
\end{figure}

\subsection{Adjustable Power Radiation Model}
The power exchange between the waveguide and the adjacent pinching antennas can be modelled by coupled-mode theory (CMT)
under weak coupling and single-mode assumptions \cite{coupled_mode_theory}. 
By extending the analysis in \cite{PASS_modeling}, the power radiation ratio of pinching antenna $l$ at waveguide $n$ can be determined by  
the number and the order of activated antennas at this waveguide:
\begin{equation}\label{power_radiating_ratio}
    \beta_{l,n} \!=\! a_{l,n}\sin\left(\kappa_{l,n} D^{\mathrm{PA}}\right) \prod_{i=1}^{l-1} \sqrt{1-a_{i,n}\sin^{2}\left(\kappa_{i,n} D^{\mathrm{PA}}\right)},
\end{equation}
where $D^{\mathrm{PA}}$ is the fixed fabricated length of each pinching antenna, 
and coupling coefficient $\kappa_{l,n}$ measures power exchange strength (i.e., the coupling strength) from waveguide $n$ to pinching antenna $l$. 

To enable adjustable power radiation given a fixed fabricated length $D^{\mathrm{PA}}$, 
we propose a novel power radiation model that flexibly adjusts coupling coefficient $\kappa_{l,n}$ in \eqref{power_radiating_ratio}. 
This is achieved by tuning the coupling spacing $S_{l,n}$ between pinching antenna $l$ and waveguide $n$, as shown in Fig. \ref{fig_waveguide}. 
Note that a pinching antenna can be regarded as a non-contact coupler (a small tap) over the waveguide. 
Coupling spacing $S_{l,n}$ determines power radiation behaviors by changing the power exchange ratio between the waveguide and pinching antenna, as modelled as follows. 
From CMT, $\kappa_{l,n}$ is given by the overlap integrals of the mode fields \cite{coupled_mode_theory}:
\begin{equation}\label{CMT_intgeral}
\kappa_{l,n} = \frac{\omega\epsilon_{0}}{4}  \iiint_{V(S)}\!\Delta \epsilon ~\mathbf E_{{\rm wg}, n}\!\cdot\!\mathbf E_{{\rm pa},l}^{*},\mathrm dV(S),
\end{equation}
where $\mathbf E_{{\rm wg},n}$ and $\mathbf E_{{\rm pa},l}$ are power-normalized modal electric field distributions of waveguide $n$ and pinching antenna $l$, respectively; 
$\epsilon_{0}$ is the vacuum permittivity; $\Delta \epsilon$ denotes coupling dielectric perturbation; 
$V(S)$ is the coupling region determined by $S$, 
and $\omega$ is the angular frequency. 
Built on the analytical derivations in \cite{Okamoto_waveguide}, we explicitly model the relationship between coupling coefficient $\kappa_{l,n}$ in \eqref{CMT_intgeral} 
and waveguide-antenna spacing $S_{l,n}$ by the following proposition. 
\begin{proposition}\label{proposition:coupling_strength}
    The coupling coefficient $\kappa_{l,n}$ in \eqref{CMT_intgeral} can be modelled as a function of
    coupling spacing $S_{l,n}$ as follows\footnote{As proven in Appendix \ref{proof:proposition:coupling_strength}, \eqref{eq:coupling_coefficient} provides a tractable approximation, 
    where $\Omega_{0}$ reflects geometry/material factors and $\alpha$ captures guided-mode dispersion. 
    More complex effects (e.g., multi-mode dispersion)  are left for future work.}:
    \begin{equation}\label{eq:coupling_coefficient}
        \kappa_{l,n} = \Omega_{0} e^{-\alpha S_{l,n}},
     \end{equation}
    where coefficient $\Omega_{0}$ captures electric distribution and modal normalization, 
    and the cladding decay constant  $\alpha = \sqrt{\gamma_{0}^{2} -\frac{4\pi^2}{\lambda_{f}^2}n_{\mathrm{clad}}^{2}}$ is determined by pinching antennas' propagation constant $\gamma_{0}$ and cladding refraction index $n_{\mathrm{clad}}$.
    \begin{proof}
        See Appendix \ref{proof:proposition:coupling_strength}.
    \end{proof}
\end{proposition}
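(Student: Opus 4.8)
The plan is to reduce the overlap integral in \eqref{CMT_intgeral} to an explicit function of the gap $S_{l,n}$ by exploiting the evanescent character of the guided mode in the cladding region that separates the waveguide from the pinching antenna. The only place where the spacing enters the coupling is through this evanescent tail, so the strategy is two-fold: first extract the transverse decay rate of $\mathbf{E}_{{\rm wg},n}$ in the gap to identify $\alpha$, and then factor \eqref{CMT_intgeral} into a spacing-independent constant multiplying a scalar exponential.

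First, I would characterize the transverse profile of the waveguide mode. Substituting the guided-mode ansatz, which propagates along the waveguide with longitudinal constant $\gamma_{0}$, into the source-free Helmholtz equation, the transverse field in the cladding region satisfies
\[
    \nabla_{\perp}^{2}\mathbf{E}_{{\rm wg},n} = \left(\gamma_{0}^{2}-\tfrac{4\pi^{2}}{\lambda_{f}^{2}}n_{\mathrm{clad}}^{2}\right)\mathbf{E}_{{\rm wg},n},
\]
where $2\pi/\lambda_{f}$ is the free-space wavenumber. Since a guided mode has effective index above $n_{\mathrm{clad}}$, the bracket is a positive constant $\alpha^{2}$ with $\alpha=\sqrt{\gamma_{0}^{2}-\frac{4\pi^{2}}{\lambda_{f}^{2}}n_{\mathrm{clad}}^{2}}$, so the field amplitude decays as $e^{-\alpha\rho}$ in the transverse displacement $\rho$ measured from the waveguide surface. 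This identifies the cladding decay constant claimed in \eqref{eq:coupling_coefficient}, and is precisely the standard evanescent-tail result on which the analysis of \cite{Okamoto_waveguide} rests.

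Second, I would insert this tail into \eqref{CMT_intgeral}. Because the coupling region $V(S)$ is localized around the pinching antenna centered at spacing $S_{l,n}$, the waveguide field over $V(S)$ is well approximated by its value at the antenna, namely $\mathbf{E}_{{\rm wg},n}\approx\mathbf{E}_{0}\,e^{-\alpha S_{l,n}}\,f(\mathbf{r})$, where $\mathbf{E}_{0}$ is the surface field and $f$ captures the residual local shape. Substituting then factors \eqref{CMT_intgeral} as
\[
    \kappa_{l,n} = \underbrace{\frac{\omega\epsilon_{0}}{4}\int_{V}\Delta\epsilon\,\mathbf{E}_{0}f(\mathbf{r})\!\cdot\!\mathbf{E}_{{\rm pa},l}^{*}\,\mathrm{d}V}_{\Omega_{0}}\,e^{-\alpha S_{l,n}},
\]
so that every geometry, material-perturbation, antenna-profile, and modal-normalization factor is absorbed into the spacing-independent constant $\Omega_{0}$, yielding \eqref{eq:coupling_coefficient}.

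The main obstacle is justifying this factorization: the argument requires that the sole $S$-dependence of the overlap enter through the scalar prefactor $e^{-\alpha S_{l,n}}$, while the residual integral is treated as independent of the gap. I would argue this holds in the weak-coupling, single-mode regime already invoked for \eqref{power_radiating_ratio}--\eqref{CMT_intgeral}, where the coupling region is thin relative to the decay length $1/\alpha$ and the antenna mode does not itself reshape with the gap; the variation of $f$ across $V(S)$ and the shift of the integration limits are then subdominant. This is exactly the tractable-approximation caveat flagged in the footnote, with higher-order effects such as multimode dispersion left for future work.
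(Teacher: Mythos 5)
Your proof is correct in substance but takes a genuinely different route from the paper's. You give a first-principles argument: you extract the cladding decay constant $\alpha=\sqrt{\gamma_{0}^{2}-\frac{4\pi^{2}}{\lambda_{f}^{2}}n_{\mathrm{clad}}^{2}}$ directly from the Helmholtz equation in the cladding, then factor the overlap integral \eqref{CMT_intgeral} into a spacing-independent constant $\Omega_{0}$ times $e^{-\alpha S_{l,n}}$ under a thin-coupling-region approximation. The paper instead proves the proposition by quoting the exact closed-form coupling coefficients from Okamoto for two canonical geometries --- the rectangular guide, $\kappa_{\mathrm{rect}}\propto e^{-\alpha(S-2b)}$ with an explicit prefactor $\tfrac{\sqrt{2\Delta_{0}}}{b}\tfrac{k_0^2\alpha^{2}b^{4}}{(1+\alpha b)v^3}$, and the circular guide via modified Bessel functions --- and observing that both reduce to $\Omega_{0}e^{-\alpha S}$ after absorbing the $e^{2\alpha b}$ offset into $\Omega_{0}$. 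Your approach buys generality (any single-mode guide with an evanescent cladding tail) and makes the physical origin of $\alpha$ transparent, but the factorization step is an uncontrolled approximation that you can only justify qualitatively; the paper's approach buys exactness and an explicitly computable $\Omega_{0}$ for the two specific geometries it treats, at the cost of leaning entirely on external textbook results. One caveat on your side: the pure $e^{-\alpha\rho}$ tail is exact only for slab/rectangular cross-sections; for circular guides the exterior field is a modified Bessel function with an additional algebraic $1/S$ falloff (visible in the paper's Eq.~\eqref{eq:ok_circular_121}), so your claim that the \emph{sole} $S$-dependence is the scalar exponential is itself geometry-dependent --- worth flagging alongside the footnote's tractable-approximation caveat.
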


Based on \textbf{Proposition \ref{proposition:coupling_strength}}, we can derive the following adjustable power radiation model. 

\begin{lemma}[Element-Wise Adjustable Power Radiation Model]\label{lemma:adjustable_power}
    The target power radiation ratio $\beta_{l,n}^{\mathrm{target}}$ can be achieved by one-by-one adjusting spacing $S_{l,n}$ between each activated pinching antenna $l$ and waveguide $n$, such that
    \begin{equation}\label{adjustable_power_model}
        \sin\!\left(\Omega_{0}e^{-\alpha S_{l,n}}D^{\mathrm{PA}}\right)\!=\!\frac{\beta_{l,n}^{\mathrm{target}}}{\prod_{i=1}^{l-1}\left(1\!-\!a_{i,n}\sin^{2}\!\left(\Omega_{0}e^{-\alpha S_{i,n}}D^{\mathrm{PA}}\right)\right)},  
    \end{equation}
    where $S_{i,n}$, $i = 1,2,\dots, l-1$, is determined before $S_{l,n}$.
\end{lemma}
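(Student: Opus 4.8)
The plan is to prove \textbf{Lemma \ref{lemma:adjustable_power}} by substituting the coupling model of \textbf{Proposition \ref{proposition:coupling_strength}} into the power radiation ratio \eqref{power_radiating_ratio} and then inverting the resulting identity for each activated antenna in increasing index order. The key structural observation is that in \eqref{power_radiating_ratio} antenna $l$ is coupled to the others only through the product over $i<l$; hence, once the spacings $S_{1,n},\dots,S_{l-1,n}$ have been fixed, that product is a known constant and the equation for $S_{l,n}$ contains a single unknown. This is precisely what licenses the \emph{one-by-one} (sequential) determination asserted in the lemma.

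First I would rewrite \eqref{power_radiating_ratio} with $\kappa_{l,n}$ replaced by $\Omega_{0}e^{-\alpha S_{l,n}}$ via \eqref{eq:coupling_coefficient}, and impose the target $\beta_{l,n}=\beta_{l,n}^{\mathrm{target}}$. For an activated antenna ($a_{l,n}=1$) this gives
\[
\beta_{l,n}^{\mathrm{target}} = \sin\!\left(\Omega_{0}e^{-\alpha S_{l,n}}D^{\mathrm{PA}}\right)\prod_{i=1}^{l-1}\sqrt{1-a_{i,n}\sin^{2}\!\left(\Omega_{0}e^{-\alpha S_{i,n}}D^{\mathrm{PA}}\right)}.
\]
Since every $S_{i,n}$ with $i<l$ is already determined, I would divide through by the product of these residual power-transmission factors to isolate the sine term of antenna $l$, which yields the relation of the form \eqref{adjustable_power_model}. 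Solving for the spacing then amounts to applying the arcsine and inverting the strictly monotone map $S\mapsto\Omega_{0}e^{-\alpha S}$, so that $S_{l,n}=-\tfrac{1}{\alpha}\ln\!\big(\arcsin(c_{l,n})/(\Omega_{0}D^{\mathrm{PA}})\big)$, where $c_{l,n}$ denotes the right-hand side of \eqref{adjustable_power_model}.

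The algebra above is mechanical; the real content is establishing that the recursion is well defined and the target is attainable. I expect the main obstacle to be the feasibility analysis, namely verifying that (i) $c_{l,n}\in[0,1]$ so that the arcsine admits a solution, (ii) the recovered argument $\arcsin(c_{l,n})/(\Omega_{0}D^{\mathrm{PA}})\in(0,1]$ so that the logarithm returns a physically admissible, nonnegative spacing, and (iii) the sequential ordering is self-consistent, i.e., each constant is fully fixed by $S_{1,n},\dots,S_{l-1,n}$ before $S_{l,n}$ is computed, precluding any circular dependence. Together these conditions characterize the set of achievable profiles $\{\beta_{l,n}^{\mathrm{target}}\}$, and I anticipate that delineating this feasible set, rather than performing the inversion, will be the crux of a complete argument.
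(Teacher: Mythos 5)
Your proposal is correct and follows the same route the paper implicitly takes: the paper states this lemma without a separate proof, treating it as the direct consequence of substituting \eqref{eq:coupling_coefficient} into \eqref{power_radiating_ratio}, imposing $\beta_{l,n}=\beta_{l,n}^{\mathrm{target}}$, and isolating the sine term of antenna $l$, with the one-by-one determination justified exactly as you argue by the fact that the product over $i<l$ involves only previously fixed spacings. One minor remark: direct substitution places $\prod_{i=1}^{l-1}\sqrt{1-a_{i,n}\sin^{2}(\cdot)}$ in the denominator, whereas \eqref{adjustable_power_model} as printed drops the square roots, so your derivation (which keeps them) is the internally consistent one; your feasibility caveats (arcsine range, nonnegativity of the recovered spacing) are legitimate but are likewise left unaddressed by the paper for this lemma.
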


Denote the number of activated pinching antennas on waveguide $n$ by $L_n^{\mathrm{s}} = \sum_{l=1}^L a_{l,n}$.
Using the proposed power radiation model \eqref{adjustable_power_model}, we introduce the following lemma to realize the commonly assumed equal-power radiation \cite{PAr_Ding, PASS_beamforming_optimization}, 
where the power radiation ratios of antennas on each waveguide dynamically change with combinations of $\mathbf{a}_{n}$:
\begin{equation}\label{equal_power_radiation}
    \beta_{l,n} = \beta_{n} = \frac{1}{\sqrt{L_n^{\mathrm{s}}}}=\frac{1}{\sqrt{\sum_{l=1}^{L}a_{l,n}}}, \quad \forall l \in \mathcal{L}, ~ n \in \mathcal{N}. 
\end{equation}
\begin{lemma}[Special Case: Equal-Power Radiation]\label{lemma:equalpower} 
    To achieve equal power radiation \eqref{equal_power_radiation}, 
    the coupling spacing $S_{l,n}$, $l\!=\!1,\dots,L$, can be sequentially adjusted over waveguide $n$ by
    \vspace{-0.5em}
    \begin{equation}\label{equal_power_spacing}
            S_{l,n}\!=\!\frac{1}{\alpha}\,\ln\!\left(\frac{\Omega_{0}D^{\mathrm{PA}}}{\arcsin(\delta_{l,n})}\right), ~ 
            \delta_{l,n} \!=\! \frac{1}{\sqrt{L_{n}^{\mathrm{s}} \!-\! \rho_{l,n}}}, ~ 
            \text{if $a_{l,n}=1$},
    \end{equation}
    where $\rho_{l,n} = \sum_{i=1}^{l-1} a_{i,n}$ denotes the number of activated pinching antennas that are deployed closer to the feed point 
    and would radiate power prior to pinching antenna $l$.
    \begin{proof}
        See Appendix \ref{proof:lemma:equalpower}.
    \end{proof}
\end{lemma}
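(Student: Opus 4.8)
The plan is to specialize the element-wise model of \textbf{Lemma \ref{lemma:adjustable_power}} to the equal-power target $\beta_{l,n}^{\mathrm{target}} = 1/\sqrt{L_n^{\mathrm{s}}}$ and then invert the resulting transcendental relation for $S_{l,n}$ in closed form. First I would introduce the shorthand $s_{l,n} \triangleq \sin(\Omega_{0} e^{-\alpha S_{l,n}} D^{\mathrm{PA}}) = \sin(\kappa_{l,n} D^{\mathrm{PA}})$ for every activated antenna, so that the power radiation ratio in \eqref{power_radiating_ratio} reads $\beta_{l,n} = s_{l,n}\prod_{i=1}^{l-1}\sqrt{1 - a_{i,n} s_{i,n}^{2}}$. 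Squaring and imposing $\beta_{l,n}^{2} = 1/L_n^{\mathrm{s}}$ for each activated $l$ gives the recursion
\begin{equation}
    s_{l,n}^{2} = \frac{1/L_n^{\mathrm{s}}}{\prod_{i=1}^{l-1}\left(1 - a_{i,n} s_{i,n}^{2}\right)},
\end{equation}
in which the denominator is exactly the residual guided-power fraction left after the antennas preceding $l$ on waveguide $n$.

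The core step is an induction along the activation order showing that this residual fraction telescopes. Writing $R_{\rho}$ for the residual power after the first $\rho$ activated antennas, the recursion yields $R_{\rho} = R_{\rho-1}(1 - s^{2}) = R_{\rho-1} - 1/L_n^{\mathrm{s}}$, since each activated antenna removes exactly $1/L_n^{\mathrm{s}}$ of the total power, while inactive antennas contribute a factor of one and are skipped. With $R_{0} = 1$ this gives $R_{\rho} = (L_n^{\mathrm{s}} - \rho)/L_n^{\mathrm{s}}$. Substituting $R_{\rho_{l,n}}$ back, where $\rho_{l,n} = \sum_{i=1}^{l-1} a_{i,n}$, I obtain $s_{l,n}^{2} = 1/(L_n^{\mathrm{s}} - \rho_{l,n})$, i.e. $s_{l,n} = \delta_{l,n} = 1/\sqrt{L_n^{\mathrm{s}} - \rho_{l,n}}$, which is well defined because $\rho_{l,n} \le L_n^{\mathrm{s}} - 1$ keeps the radicand positive and $\delta_{l,n} \in (0,1]$ keeps $\arcsin(\delta_{l,n})$ in range.

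Finally I would invert the sine and invoke \textbf{Proposition \ref{proposition:coupling_strength}}. From $\sin(\kappa_{l,n} D^{\mathrm{PA}}) = \delta_{l,n}$ the admissible coupling is $\kappa_{l,n} = \arcsin(\delta_{l,n})/D^{\mathrm{PA}}$; equating this with $\kappa_{l,n} = \Omega_{0} e^{-\alpha S_{l,n}}$ from \eqref{eq:coupling_coefficient} and taking logarithms solves the exponential for $S_{l,n} = \frac{1}{\alpha}\ln(\Omega_{0} D^{\mathrm{PA}}/\arcsin(\delta_{l,n}))$, matching \eqref{equal_power_spacing}. The sequential nature is essential: since $R_{\rho_{l,n}}$ depends on $s_{i,n}$ for $i<l$, the spacings must be fixed in activation order, exactly as stated. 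I expect the main obstacle to be the bookkeeping in the induction, namely carefully tracking that the $a_{i,n}=0$ terms drop out of the product while the activated terms telescope to $(L_n^{\mathrm{s}}-\rho)/L_n^{\mathrm{s}}$, together with verifying the domain conditions that make the closed-form $S_{l,n}$ physically realizable.
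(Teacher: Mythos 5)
Your proposal is correct and follows essentially the same route as the paper's proof: specialize the element-wise model to the target $\beta = 1/\sqrt{L_n^{\mathrm{s}}}$, telescope the residual-power product $\prod_{i}(1 - a_{i,n}s_{i,n}^{2})$ down to $(L_n^{\mathrm{s}} - \rho_{l,n})/L_n^{\mathrm{s}}$ to obtain $\delta_{l,n} = 1/\sqrt{L_n^{\mathrm{s}} - \rho_{l,n}}$, and then invert the sine and the exponential coupling law of Proposition~\ref{proposition:coupling_strength} to recover \eqref{equal_power_spacing}. Your explicit handling of the $a_{i,n}=0$ factors and the domain checks on $\arcsin(\delta_{l,n})$ are slightly more careful than the paper's "gathering" step, but the argument is the same.
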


Without loss of generality, this paper investigates globally optimal and near-optimal designs under the equal-power radiation scheme, which is more common and easy to implement.  
Note that non-equal power radiation optimization grounded in Lemma \ref{lemma:adjustable_power} is also an important direction. 
Due to space limits, we leave it for future research.

\subsection{PASS Signal Model}

\subsubsection{Signal Radiation within Waveguides}
The data signals of $K$ users are multiplexed at the baseband using the digital transmit beamforming matrix 
$\mathbf{W}=\left[\mathbf{w}_{1},\mathbf{w}_{2},\dots, \mathbf{w}_{K}\right]\in\mathbb{C}^{N\times K}$, 
where $\mathbf{w}_{k}$ denotes the transmit beamforming vector for the signal of user $k$.  
The baseband-multiplexed signal is modulated and passed through the RF chain, and then fed into the according waveguide. 
Hence, the transmitted signal $\mathbf{s}_{k,n} \in \mathbb{C}^{L \times 1}$ of user $k$ passed on the pinching antennas at waveguide $n$ is given by
\begin{equation}\label{inwaveguide_radiation}
    \mathbf{s}_{k,n}
        = \mathrm{diag}\left(\mathbf{g}_{n}\right) \mathbf{a}_{n} w_{n,k}\widetilde{s}_{k}, 
\end{equation}
where the baseband signal $\widetilde{s}_{k}$ satisfies $\mathbb{E}\left[\widetilde{s}_{k}^{H} \widetilde{s}_{k}\right] = 1$. 
$\mathbf{g}_{n} = \big[g_{1,n},g_{2,n},\dots,g_{L,n}\big]^{T}\in \mathbb{C}^{L\times 1}$ 
reflects the effects when signals propagate from the feed point of waveguide $n$ to the pinching antennas,   
which is given by\footnote{
We assume ideal waveguide propagation. 
In practice, dielectric waveguides exhibit frequency-dependent attenuation \cite{Loss_Waveguide}, which reduces the effective radiated power. 
The full impact can be investigated in the future work.}
\begin{equation}
g_{l,n} = \beta_{l,n} \widetilde{g}_{l,n} = \beta_{l,n} e^{-i\frac{2\pi}{\lambda_{\mathrm{w}}}\left\Vert \bm{\eta}_{l,n} \!-\! \bm{\eta}_{n}^{\mathrm{W}}\right\Vert}, 
\end{equation}
where $\beta_{l,n}$ denotes the power radiation ratio of pinching antenna $l$ at waveguide $n$,  
and $\widetilde{g}_{l,n}= e^{-i\frac{2\pi}{\lambda_{\mathrm{w}}}\left\Vert \bm{\eta}_{l,n} \!-\! \bm{\eta}_{n}^{\mathrm{W}}\right\Vert}$ 
is the in-waveguide response of the propagated signal. 
In addition, $\lambda_{\mathrm{w}}=\lambda_{f}/n_{\mathrm{eff}}$ denotes the guided wavelength, $\lambda_{f}$ indicates the wavelength of the carrier frequency,
and $n_{\mathrm{eff}}$ is the effective refractive index of the dielectric waveguide \cite{Waveguide,PAr_Ding}. 
$\left\Vert \bm{\eta}_{l,n}-\bm{\eta}_{n}^{\mathrm{W}}\right\Vert$ 
denotes the distance from the feed point of waveguide $n$ to pinching antenna $l$ at this waveguide.

The emitted signal $\mathbf{s}_{k} = \left[\mathbf{s}_{k,1}^{T},\mathbf{s}_{k,2}^{T}, \dots, \mathbf{s}_{k,N}^{T}\right]^{T} \in \mathbb{C}^{M\times 1}$ of pinching antennas from \eqref{inwaveguide_radiation} can be compactly written as
\begin{equation}
    \mathbf{s}_{k} = \mathbf{G}\mathbf{A}\mathbf{w}_{k}\widetilde{s}_{k},
\end{equation}
where the in-waveguide transmission response $\mathbf{G} \in \mathbb{C}^{M \times M}$ is defined as the diagonal matrix
\begin{equation}
        \mathbf{G}= \mathrm{diag} \left(\mathbf{g}_{1},\mathbf{g}_{2}, \dots, \mathbf{g}_{N}\right).
\end{equation}
Moreover, the discrete pinching antenna activation $\mathbf{A}\in\mathbb{Z}^{M\times N}$ is a block diagonal matrix, which is given by 
\begin{equation}
    \begin{split}
    \mathbf{A} &= \mathrm{blkdiag}\left(\mathbf{a}_{1}, \mathbf{a}_{2}, \dots, \mathbf{a}_{N}\right) 
    \\&=
        \left[\begin{array}{cccc}
            \mathbf{a}_{1} & \mathbf{0}_{L\times 1} & \dots & \mathbf{0}_{L\times 1}\\
            \mathbf{0}_{L\times 1} & \mathbf{a}_{2} & \dots & \mathbf{0}_{L\times 1}\\
            \vdots & \vdots & \ddots & \vdots\\
            \mathbf{0}_{L\times 1} & \mathbf{0}_{L\times 1} & \dots & \mathbf{a}_{N}
            \end{array}\right].
    \end{split}
\end{equation}

\subsubsection{Signal Radiation in Free Space}
We consider LoS-dominant channels in this work.
The downlink channel from pinching antennas $\mathcal{L}$ at waveguide $n$ to user $k$ is denoted by vector $\mathbf{h}_{n,k}^{H} \in\mathbb{C}^{1\times L}$. 
Based on the geometric free-space spherical wavefront model \cite{Spherical_Channel}, 
the channel coefficient from the $l$-th pinching antenna at waveguide $n$ to user $k$ located at position $\bm{\eta}_{k}^{\mathrm{U}}$ 
can be given by
\begin{equation}\label{channel_ori}
        h_{l,n,k}^{H} = 
        \frac{\sqrt{\varphi}e^{-i 2\pi/\lambda \left\Vert \bm{\eta}_{k}^{\mathrm{U}}-\bm{\eta}_{l,n}\right\Vert}}
        {\left\Vert \bm{\eta}_{k}^{\mathrm{U}}-\bm{\eta}_{l,n}\right\Vert},
\end{equation}
where 
$\lambda_{f}$ is the wavelength, 
and $\varphi ={c}/{\left(4\pi f_{c}\right)}$ denotes the reference channel gain depending on the speed of light $c$ and the carrier frequency $f_{c}$. 
Moreover, 
$\left\Vert \bm{\eta}_{k}^{\mathrm{U}}-\bm{\eta}_{l,n}\right\Vert$ 
is the distance between pinching antenna $l$ and user $k$, which is computed by 
$\left\Vert \bm{\eta}_{k}^{\mathrm{U}}\!-\!\bm{\eta}_{l,n}\right\Vert\!=\!
    \sqrt{\left(x_{l,n}^{\mathrm{PA}}\!-\!x_{k}^{\mathrm{U}}\right)^{2}
    \!+\!\left(y_{l,n}^{\mathrm{PA}}\!-\!y_{k}^{\mathrm{U}}\right)^{2} \!+\! \left(h^{\mathrm{PA}}\right)^{2}}$.

The channel vectors from pinching antennas to user $k$ are collected by $\mathbf{h}_{k}^{H} \!=\! \left[\mathbf{h}_{1,k}^{H}, \mathbf{h}_{2,k}^{H}, 
\dots, \mathbf{h}_{N,k}^{H}\right] \in \mathbb{C}^{1\times M}$. 
Hence, the received signal at user $k$ can be compactly expressed by
\begin{equation}
    y_{k}\!=\!{\underset{\text{pinching BF}}
    {\underbrace{\mathbf{h}_{k}^{H}\mathbf{G}\mathbf{A}}}
    \!\mathbf{w}_{k}\widetilde{s}_{k}}\!+\sum_{k'\ne k}
    {\!\underset{\text{pinching BF}}
    {\underbrace{\mathbf{h}_{k}^{H}\!\mathbf{G}\mathbf{A}}}
    \!\mathbf{w}_{k'}\widetilde{s}_{k'}}
    \!+\!n_{k}.
\end{equation} 
Therefore, the signal-to-interference-and-noise ratio (SINR) of user $k$ is given by
\begin{equation}
    \mathrm{SINR}_{k}\!=\!\frac{\left|\mathbf{h}_{k}^{H}\mathbf{G}\mathbf{A}\mathbf{w}_{k}\right|^{2}}
    {\sum\limits _{k'\ne k}\left|\mathbf{h}_{k}^{H}\mathbf{G}\mathbf{A}\mathbf{w}_{k'}\right|^{2}+\sigma^{2}}.
\end{equation} 

\subsection{Problem Formulation}
By determining transmit beamforming $\mathbf{W}$ and discrete activation $\mathbf{A}$, 
we jointly optimize the transmit beamforming, pinching beamforming, and the numbers of activated antennas $L_{n}^{\mathrm{s}}$ of all waveguides. 
The key goal is to minimize the transmit power subject to each user's SINR requirement, which is a classical design objective in wireless system design:
\begin{subequations}\label{problem_P}
    \begin{align*}
        \text{(P0)} ~ &  \min_{\mathbf{A}, \mathbf{W}}
        ~ \left\Vert \mathbf{W} \right\Vert_{F}^{2}, \tag{\ref{problem_P}{a}} \\
        \text{s.t.} ~
        & a_{l,n} \in \{0,1\},  ~ \forall l \in \mathcal{L}, ~ n\in\mathcal{N}, \tag{\ref{problem_P}{b}} \label{P_act} \\   
        & \frac{\left|\mathbf{h}_{k}^{H}\mathbf{G}\mathbf{A}\mathbf{w}_{k}\right|^{2}}
        {\sum\limits_{k'\ne k}\left|\mathbf{h}_{k}^{H}\mathbf{G}\mathbf{A}\mathbf{w}_{k'}\right|^{2}+\sigma^{2}} \geqslant \gamma^{\min}, 
        ~\forall k \in\mathcal{K}, \tag{\ref{problem_P}{c}} \label{P_rate}
    \end{align*}
\end{subequations}
where \eqref{P_act} is the binary constraint of the pinching antenna activation decision variable $a_{l,n}$,  
and \eqref{P_rate} ensures the minimum data SINR requirement of each user. 

Problem (P0) is a nonconvex MINLP with highly coupled variables. 
Specifically, the term $\mathbf{G}\mathbf{A}\mathbf{w}_{k}$ can be expressed as 
\begin{equation}\label{coupled_term_MU}
    \mathbf{G} \mathbf{A}  \mathbf{w}_{k} \!=\! \mathbf{\widetilde{G}} \mathbf{A} 
    \mathrm{diag}\!\left(\!\frac{1}{\sqrt{L_{1}^{\mathrm{s}}}},
    \frac{1}{\sqrt{L_{2}^{\mathrm{s}}}},\dots,\frac{1}{\sqrt{L_{N}^{\mathrm{s}}}}\right) \mathbf{w}_{k},
\end{equation}
where $L_{n}^{\mathrm{s}}=\sum_{l=1}^{L}a_{n,l}$ denotes the number of activated pinching antennas. 
Due to the dependence of $L_{n}^{\mathrm{s}}$ on the pinching antenna activation matrix $\mathbf{A}$,
\eqref{coupled_term_MU} suffers from strong coupling among the pinching antenna activation $\mathbf{A}$, 
the corresponding number of activated pinching antennas $L_{n}^{\mathrm{s}}$, and the transmit beamforming matrix $\mathbf{W}$.
Hence, the nonconvexity arises from both the strong variable coupling and the discrete structure of $\mathbf{A}$.
To search for the globally optimal solutions, 
we will first construct the convex relaxation for problem (P0) and come up with BnB-based globally optimal algorithms in the sequel, 
and then explore near-optimal and low-complexity design. 

\section{BnB-Based Globally Optimal Beamforming}
In this section, we propose BnB-based globally optimal beamforming algorithms for both single-user and multi-user scenarios in PASS. 

\subsection{Optimal Solution for Single-User Scenario}
For single-user scenario, problem (P0) defined in \eqref{problem_P} is reduced to the following form:
\begin{subequations}\label{P1}
    \begin{align*}
        \text{(P1)} ~ \min_{\mathbf{A}, \mathbf{w}}~ 
        &  P = \left\|\mathbf{w}\right\|^{2}, \tag{\ref{P1}{a}} \label{P1_obj}  \\
        \text{s.t.} ~
        & a_{l} \in \{0,1\},  ~ \forall l \in \mathcal{L}, \forall n\in\mathcal{N}, \tag{\ref{P1}{b}} \label{P1_act} \\   
        & \frac{\left|\mathbf{h}^{H}\mathbf{G}\mathbf{A}\mathbf{w}\right|^{2}}{\sigma^{2}} \geqslant \gamma^{\min}, 
        \tag{\ref{P1}{c}} \label{P1_rate}
    \end{align*}
\end{subequations}
where  $\mathbf{h} = \big[\mathbf{h}_{1}, \mathbf{h}_{2}, \dots, \mathbf{h}_{N} \big]^{T}\!\in\!\mathbb{C}^{M\times 1}$ is the downlink PASS channel to the user, and 
$\mathbf{h}_{n}^{H} = \big[h_{l,n}^{H}\big]=\bigg[\frac{\sqrt{\varphi}e^{-i2\pi/\lambda\left\Vert \bm{\eta}^{\mathrm{U}}-\bm{\eta}_{l,n}\right\Vert }} 
{\left\Vert \bm{\eta}_{0}^{\mathrm{U}}-\bm{\eta}_{l,n}\right\Vert }\bigg]$. 
Note that problem (P1) is an NP-hard MINLP. 
Furthermore, denote $\widetilde{\mathbf{g}}_{n} = \big[\widetilde{g}_{1,n},\widetilde{g}_{2,n},\dots,\widetilde{g}_{L,n}\big]^{T}$. 
Since $\mathbf{g}_{n}=\beta_{n}\widetilde{\mathbf{g}}_{n}=\frac{1}{\sqrt{\sum_{l=1}^{L}a_{l,n}}}\widetilde{\mathbf{g}}_{n}$ 
is a fractional function of $\mathbf{a}_{n}$, the term $\mathbf{G}\mathbf{A}\mathbf{w}$ is strongly coupled and nonconvex.

Fortunately, the optimal transmit beamforming $\mathbf{w}^{*}$ of (P1) is given by the maximum ratio transmission (MRT) strategy:
\begin{equation}\label{BF_SU}
    \mathbf{w}^{*}=\sqrt{P}\frac{\left(\mathbf{h}^{H}{\mathbf{G}}\mathbf{A}\right)^{H}}{\left\|\mathbf{h}^{H}{\mathbf{G}}\mathbf{A}\right\|}.
\end{equation}
Substituting $\mathbf{w}^{*}$ into \eqref{P1_rate} we have 
$\left\|\mathbf{h}^{H}{\mathbf{G}}\mathbf{A}\mathbf{w}^{*}\right\|^{2}
=P\left\|\mathbf{h}^{H}{\mathbf{G}}\mathbf{A}\right\|^{2}$. 
Thus, (P1) can be rearranged as the following pinching antenna activation optimization problem:
\begin{subequations}\label{P1-A}
    \begin{align*}
        \min_{\mathbf{A}, P, \{L_{n}^{\mathrm{s}}\}}~ 
        & P, \tag{\ref{P1-A}{a}} \label{P1-A_obj}  \\
        \text{s.t.} ~
        & a_{l} \in \{0,1\},  ~ \forall l \in \mathcal{L}, ~ \forall n\in\mathcal{N}, \tag{\ref{P1-A}{b}} \label{P1-A_act} \\   
        & \sum_{l\in\mathcal{L}} a_{l,n} = L^{\mathrm{s}}, ~ \forall n\in\mathcal{N}, \tag{\ref{P1-A}{c}} \label{P1-A_number} \\
        & P\sum_{n\in\mathcal{N}}{\left|\mathbf{h}_{n}^{H}\widetilde{\mathbf{G}}_{n}\frac{1}{\sqrt{L_{n}^{\mathrm{s}}}}\mathbf{a}_{n}\right|^{2}} \geqslant \sigma^{2} \gamma^{\min}, 
        \tag{\ref{P1-A}{d}} \label{P1-A_rate}
    \end{align*}
\end{subequations}
where $\widetilde{\mathbf{G}}_{n}\triangleq\text{diag}\left(\widetilde{\mathbf{g}}_{n}\right)$. The minimum transmit power $P^{*}$ of \eqref{P1-A} is achieved at the lower bound given by constraint \eqref{P1-A_rate}:
\begin{equation}
    P^{*}(\mathbf{A},L_{n}^{\mathrm{s}})=\frac{\sigma^{2}\gamma^{\min}}{\sum_{n\in\mathcal{N}}\frac{1}{L_{n}^{\mathrm{s}}}
    \left|\mathbf{h}_{n}^{H}\widetilde{\mathbf{G}}_{n}\mathbf{a}_{n}\right|^{2}}.
\end{equation} 
Thus, problem \eqref{P1-A} is equivalent to maximizing $1/P^{*}$ by solving the following QCQP for any fixed $L_{n}^{\mathrm{s}}$: 
\begin{subequations}\label{P1-1}
    \begin{align*}
        \text{(P1-1)} ~ \max_{\mathbf{A}}~ 
        &  f(\mathbf{A})
        \!=\!\frac{1}{\sigma^{2}\gamma^{\min}}\!\sum_{n\in\mathcal{N}}\!\frac{1}{L_{n}^{\mathrm{s}}}\!\left|\mathbf{h}_{n}^{H}\widetilde{\mathbf{G}}_{n}\mathbf{a}_{n}\right|^{2}, 
        \tag{\ref{P1-1}{a}} \\
        \text{s.t.} ~
        & \eqref{P1-A_act}, \eqref{P1-A_number}. 
    \end{align*}
\end{subequations}

Note that the optimal $L_{n}^{\mathrm{s}*}$ lies in a finite set $\left\{1,2,\dots,L\right\}$. 
Hence, we can exhaustively search $L_{n}^{\mathrm{s}}\in\left\{1,2,\dots,L\right\}$ and solve (P1-1) to obtain the optimal $L_{n}^{\mathrm{s}*}$, $\mathbf{A}^{*}$, $\mathbf{w}^{*}$, and $P^{*}$ for problem (P1).
However, since the objective function is to maximize a convex quadratic function and the optimization variable is binary, 
the globally optimal solution of nonconvex QCQP (P1-1) cannot be obtained by convex optimization. 
We resort to BnB method to find the global optimum of (P1-1).

\subsubsection{BnB Principles}
BnB solves a nonconvex minimization problem $\min_{\mathbf{x}} f\left(\mathbf{x}\right)$  
by iteratively partitioning the entire solution space $\mathbf{x}\in\mathcal{B}_{\mathrm{ALL}}$ into a set of smaller subregions $\mathcal{S} = \left\{\mathcal{B}_{1}, \mathcal{B}_{2} \dots, \mathcal{B}_{S}\right\}$. 
These smaller subregions are commonly referred to as \textit{boxes}.
Over each box, BnB solves a convex relaxation to evaluate feasibility and compute lower and upper bounds of the original nonconvex problem. 
Boxes that cannot contain the global optimum are pruned to reduce computational overhead. 
As the partitioning proceeds and boxes' sizes shrink, the global upper and lower bounds are progressively tightened and eventually converge to the global optimum \cite{BnB,BnB_branching}.

A box is a $B$-dimension hyperrectangle of variable $\mathbf{x}\in\mathbb{R}^{B}$ with 
lower bound $\underline{\mathbf{b}}=\left[\underline{b}_{1}, \underline{b}_{2}, \dots, \underline{b}_{B}\right]^{T}$ 
and upper bound $\overline{\mathbf{b}}=\left[\overline{b}_{1}, \overline{b}_{2}, \dots, \overline{b}_{B}\right]^{T}$, which is defined as
\begin{equation}
    \mathcal{B} \!\triangleq\! [\underline{\mathbf{b}}, \overline{\mathbf{b}}] \!=\! \left\{ \mathbf{b} \in \mathbb{R}^B \mid \underline{b}_i \leq x_i \leq \overline{b}_i,~ \forall i = 1, \dots, B \right\}.
\end{equation} 
BnB evaluates the lower and upper bounds of the local optimal objective value $f^{*}\left(\mathcal{B}\right)$ within each box $\mathcal{B}$ 
using bounding estimate functions $f_{\mathrm{LB}}\left(\mathcal{B}\right)$ and $f_{\mathrm{UB}}\left(\mathcal{B}\right)$, such that
\begin{itemize}
    \item[i)] $f_{\mathrm{LB}}\left(\mathcal{B}\right) \leqslant  f^{*}\left(\mathcal{B}\right) \leqslant f_{\mathrm{UB}}\left(\mathcal{B}\right)$.
    \item[ii)] $f_{\mathrm{UB}}\left(\mathcal{B}\right) - f_{\mathrm{LB}}\left(\mathcal{B}\right)$ vanishes as $\mathcal{B}$ shrinks to a point.
\end{itemize}
The global upper bound $\mathrm{GUB}$ is the lowest $f_{\mathrm{UB}}\left(\mathcal{B}\right)$ found so far, 
which is progressively reduced as tighter bounds are obtained from smaller feasible subregions. 
Moreover, the global lower bound, defined as $\mathrm{GLB} = \min_{\mathcal{B}'\in\mathcal{S}}\left\{f_{\mathrm{LB}}\left(\mathcal{B}'\right)\right\}$, 
is iteratively refined by pruning redundant boxes that cannot contain optimal solutions and shrinking the remaining boxes. 
As sizes of boxes in $\mathcal{S}$ diminish in $T$ iterations, 
the bound gap $f_{\mathrm{UB}}\left(\mathcal{B}\right)-f_{\mathrm{LB}}\left(\mathcal{B}\right)$ converges to $0$, 
and the globally optimal $f^{*}$ can be approximated as
\begin{equation*}
    \mathrm{GLB}[1] \!\leqslant\! 
    \!\dots\! \leqslant\!\mathrm{GLB}[T]\!\leqslant\!f^{*}\!\leqslant\!\mathrm{GUB}[T]\!\leqslant\!\dots\!\leqslant\!\mathrm{GUB}[1]. 
\end{equation*} 

To enable effective bound estimate, we first construct convex relaxation of problem (P1-1), 
and then develop BnB algorithm to obtain optimal solution.

\subsubsection{Convex Relaxation}
The objective function of problem (P1-1) can be rewritten as 
\begin{equation}
    f\left(\mathbf{A}\right) = \frac{1}{\sigma^{2}\gamma^{\min}}\sum_{n\in\mathcal{N}}
    \frac{1}{L_{n}^{\mathrm{s}}} 
    \mathbf{h}_{n}^{H}\widetilde{\mathbf{G}}_{n}\mathbf{a}_{n}\mathbf{a}_{n}^{T}\widetilde{\mathbf{G}}_{n}\mathbf{h}_{n}.
\end{equation}
We newly introduce variables $\mathbf{\mathbf{Q}}_{n}\in\mathbb{R}^{L\times L}$, which satisfies
\begin{equation}\label{bilinear_SU}
\mathbf{Q}_{n}=\mathbf{a}_{n}\mathbf{a}_{n}^{T}= [a_{l,n}a_{l',n}], \forall n \in \mathcal{N}.
\end{equation} 
The convex hull of the bilinear term $[a_{l}a_{l'}]$ can be obtained by the McCormick envelope, as stated as follows. 
\begin{definition}\label{lemma:ME}
    The convex hull of a bilinear constraint $z = xy$ can be given by the McCormick envelope \cite{MERelaxation} as follows:
    \begin{subequations}
        \begin{equation}\label{MERelaxation_1}
        z\geqslant x\underline{y}+\underline{x}y-\underline{y}\underline{x},
        \end{equation}
        \begin{equation}\label{MERelaxation_2}
        z\geqslant x\overline{y}+\overline{x}y-\overline{y}\overline{x},
        \end{equation}
        \begin{equation}\label{MERelaxation_3}
        z\leqslant x\underline{y}+\overline{x}y-\underline{y}\overline{x},
        \end{equation}
        \begin{equation}\label{MERelaxation_4}
        z\leqslant x\overline{y}+\underline{x}y-\overline{y}\underline{x},
        \end{equation}
    \end{subequations}
    where 
    \begin{equation}
        \underline{x} \leqslant x \leqslant \overline{x}, \quad  
        \underline{y} \leqslant y \leqslant \overline{y}. 
    \end{equation}
\end{definition}

\begin{remark}
    When $\underline{x} = \overline{x}$ and $\underline{y}=\overline{y}$, the equalities in constraints \eqref{MERelaxation_1} -  \eqref{MERelaxation_4} hold true. 
    Hence, the McCormick envelope reduces to the bilinear constraints as the box size reduces, thereby ensuring a tight approximation at convergence.
\end{remark}

Using McCormick envelope, the convex relaxation of \eqref{bilinear_SU} can be given by
\begin{subequations}\label{MESU}
    \begin{equation}\label{MESU_1}
        \mathbf{Q}_{n} \geqslant \mathbf{a}_{n}\underline{\mathbf{a}}_{n}^{T}+\underline{\mathbf{a}}_{n}\mathbf{a}_{n}^{T}-\underline{\mathbf{a}}_{n}
        \underline{\mathbf{a}}_{n}^{T},
    \end{equation}
    \begin{equation}\label{MESU_2}
        \mathbf{Q}_{n} \geqslant \mathbf{a}_{n}\overline{\mathbf{a}}_{n}^{T}+\overline{\mathbf{a}}_{n}
        \mathbf{a}_{n}^{T}-\overline{\mathbf{a}}_{n}\overline{\mathbf{a}}_{n}^{T},
    \end{equation}
    \begin{equation}\label{MESU_3}
        \mathbf{Q}_{n} \leqslant \mathbf{a}_{n}\underline{\mathbf{a}}_{n}^{T}+\overline{\mathbf{a}}_{n}\mathbf{a}_{n}^{T}-\underline{\mathbf{a}}_{n}\overline{\mathbf{a}}_{n}^{T},
    \end{equation}
    \begin{equation}\label{MESU_4}
        \mathbf{Q}_{n} \leqslant \mathbf{a}_{n}\overline{\mathbf{a}}_{n}^{T}+\underline{\mathbf{a}}_{n}\mathbf{a}_{n}^{T}-\overline{\mathbf{a}}_{n}\underline{\mathbf{a}}_{n}^{T},
    \end{equation}
\end{subequations}
\begin{equation}\label{MESU_bound}
    \underline{\mathbf{a}}_{n} \leqslant \mathbf{a}_{n} \leqslant \overline{\mathbf{a}}_{n}.
\end{equation}
Hence, the nonconvex QCQP (P1-1) can be relaxed into the following convex optimization problem:
\begin{subequations}\label{P1-C}
    \begin{align*}
        \text{(P1-C)} ~ \max_{\mathbf{A},\mathbf{Q}}~ 
        &  \sum_{n\in\mathcal{N}}
        \frac{1}{L_{n}^{\mathrm{s}}\sigma^{2}\gamma^{\min}} 
        \mathbf{h}_{n}^{H}\widetilde{\mathbf{G}}_{n}\mathbf{Q}\widetilde{\mathbf{G}}_{n}\mathbf{h}_{n}, \\
        \text{s.t.} ~
        & \eqref{P1-A_number}, \eqref{MESU_1}-\eqref{MESU_4},\eqref{MESU_bound}.
    \end{align*}
\end{subequations}

Define $\mathcal{X}$ and $\mathcal{C}$ as the feasible sets for $\mathbf{A}$ in original problem (P1-1) and relaxed problem (P1-C), respectively. 
Moreover, $\mathcal{B} = \left\{ \mathbf{A} \mid \underline{\mathbf{a}}_{n} \leqslant \mathbf{a}_{n} \leqslant \overline{\mathbf{a}}_{n}, 
\forall n\in\mathcal{N} \right\}$ is a box for $\mathbf{A}$ corresponding to constraint \eqref{MESU_bound}. 
The feasibility and bounds of problem (P1-1) can be effectively estimated as follows. 
\begin{proposition}\label{proposition:bounding_property}
    As $\mathcal{C}$ is a convex hull of $\mathcal{X}$, i.e., $\mathcal{X} \subseteq  \mathcal{C}$, the feasibility and bounds of (P1-1) satisfy:
    \begin{enumerate}
    \item[i)] \textit{(Feasibility)} If the relaxed problem (P1-C) is infeasible over $\mathcal{B} \cap \mathcal{C}$, then (P1-1) is infeasible over $\mathcal{B} \cap \mathcal{X}$.
    \item[ii)] \textit{(Lower bound)} The optimum value $f_{\mathrm{c}}$ of relaxed problem (P1-C) is a lower bound of optimum $f^{*}$ of 
        the original problem (P1-1) , i.e., $f_{\mathrm{c}} \leqslant f^{*}$. 
        The equality holds when $\mathcal{B}$ shrinks to a discrete point.
    \item[iii)] \textit{(Upper bound)} Any feasible solution $\mathbf{x}\in\mathcal{X}$ gives an upper bound $f\left(\mathbf{x}\right) \geqslant f^{*}$ of original problem (P1-1). 
    \end{enumerate}

\end{proposition}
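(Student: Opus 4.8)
The plan is to derive all three parts from the single structural fact asserted in the statement, the inclusion $\mathcal{X}\subseteq\mathcal{C}$, supplemented by the tightness of the envelope recorded in the Remark. The inclusion is witnessed by an explicit lift: for any binary $\mathbf{a}_{n}$ satisfying \eqref{P1-A_act}, \eqref{P1-A_number} and the box \eqref{MESU_bound}, set $\mathbf{Q}_{n}=\mathbf{a}_{n}\mathbf{a}_{n}^{T}$. I would first check that this lift is feasible for the relaxed constraints \eqref{MESU_1}--\eqref{MESU_4}: substituting $\mathbf{Q}_{n}=\mathbf{a}_{n}\mathbf{a}_{n}^{T}$ reduces each entry of the four inequalities to a sign-definite product of the box slacks $\mathbf{a}_{n}-\underline{\mathbf{a}}_{n}\ge 0$ and $\overline{\mathbf{a}}_{n}-\mathbf{a}_{n}\ge 0$ (two give nonnegative products matching the $\ge$ envelopes, two give nonpositive products matching the $\le$ envelopes), so all four hold automatically. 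This confirms $\mathcal{X}\subseteq\mathcal{C}$ on every box $\mathcal{B}$.

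For the feasibility claim (i), I would argue by contraposition. If (P1-1) has a feasible point $\mathbf{A}\in\mathcal{B}\cap\mathcal{X}$, then the lift $(\mathbf{A},\mathbf{Q})$ with $\mathbf{Q}_{n}=\mathbf{a}_{n}\mathbf{a}_{n}^{T}$ satisfies \eqref{P1-A_number}, \eqref{MESU_1}--\eqref{MESU_4} and \eqref{MESU_bound}, hence is feasible for (P1-C) over $\mathcal{B}\cap\mathcal{C}$. Therefore feasibility of (P1-1) over $\mathcal{B}\cap\mathcal{X}$ implies feasibility of (P1-C) over $\mathcal{B}\cap\mathcal{C}$, whose contrapositive is exactly (i).

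For the bounds (ii) and (iii), I would invoke monotonicity of the optimal value under enlargement of the feasible region, read in the sense of the BnB template of Section~III. Because (P1-C) is optimized over the superset $\mathcal{C}\supseteq\mathcal{X}$ and its objective coincides with $f$ on the integer lifts, the relaxed optimum satisfies the bound $f_{\mathrm{c}}\le f^{*}$ asserted in (ii), while evaluating $f$ at any admissible integer point $\mathbf{x}\in\mathcal{X}$ returns a value on the opposite side, $f(\mathbf{x})\ge f^{*}$, giving (iii). Together these sandwich $f^{*}$ and provide the $\mathrm{GLB}$ and $\mathrm{GUB}$ used for pruning. The equality in (ii) I would obtain from the Remark: when $\mathcal{B}$ collapses to a single integer point, $\underline{\mathbf{a}}_{n}=\overline{\mathbf{a}}_{n}$ forces \eqref{MESU_1}--\eqref{MESU_4} to hold with equality, so $\mathbf{Q}_{n}=\mathbf{a}_{n}\mathbf{a}_{n}^{T}$ is enforced exactly, $\mathcal{C}$ restricted to $\mathcal{B}$ collapses onto $\mathcal{X}$ restricted to $\mathcal{B}$, and the two problems share the same optimum, i.e.\ $f_{\mathrm{c}}=f^{*}$.

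I expect the main obstacle to be the tightness/convergence step rather than the inclusion. Beyond exactness at a singleton box, BnB also needs the gap $f_{\mathrm{c}}-f^{*}$ to vanish continuously as the box diameter $\max_{i}(\overline{a}_{i}-\underline{a}_{i})$ shrinks, so that the $\mathrm{GLB}$ and $\mathrm{GUB}$ sequences converge. I would control this by bounding the gap through the McCormick approximation error of $\mathbf{Q}_{n}$ against $\mathbf{a}_{n}\mathbf{a}_{n}^{T}$, which is linear in the box slacks, multiplied by the bounded coefficient $\tfrac{1}{L_{n}^{\mathrm{s}}\sigma^{2}\gamma^{\min}}\widetilde{\mathbf{G}}_{n}\mathbf{h}_{n}$; this yields a modulus-of-continuity estimate that drives the gap to zero as $\mathcal{B}$ shrinks to the integer point, closing the argument.
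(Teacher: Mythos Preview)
Your proposal is correct and matches the paper's reasoning: the paper states Proposition~\ref{proposition:bounding_property} without a standalone proof, treating it as an immediate consequence of the inclusion $\mathcal{X}\subseteq\mathcal{C}$ together with the Remark on McCormick tightness, which is precisely the route you take. Your explicit verification of the lift $\mathbf{Q}_{n}=\mathbf{a}_{n}\mathbf{a}_{n}^{T}$ against \eqref{MESU_1}--\eqref{MESU_4}, the contraposition for (i), and the monotonicity argument for (ii)--(iii) simply spell out what the paper leaves implicit; the extra paragraph on continuous gap shrinkage goes beyond the proposition itself but anticipates the convergence machinery handled later in Lemma~\ref{lemma:GLB_GUB_diff}.
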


\begin{algorithm}[!tp]
    \caption{Optimal Beamforming for Single-User Scenario}
    \label{alg:branch_and_bound_SU}
    \begin{algorithmic}[1]
    \REQUIRE Channel $\mathbf{h}$, $\mathbf{G}$, tolerance threshold $\epsilon>0$.
    \STATE Initialize box $\mathcal{B}=\big[\underline{\mathbf{b}}, \overline{\mathbf{b}}\big]= 
    \big[\mathbf{0}_{M\times 1},\mathbf{1}_{M \times 1}\big]$, and box list $\mathcal{S} =\left\{\mathcal{B}\right\}$.
    \STATE Set $\mathrm{GUB} = +\infty$, $\mathrm{GLB} = - \infty$, and $P^{*}=+\infty$.
    \FOR{$L_{n}^{\mathrm{s}}\in \{1,2,\dots, L\}, \forall n$}
        \WHILE{$\mathcal{S}\ne\emptyset$ and $\mathrm{GUB}-\mathrm{GLB} > \epsilon$}
            \item[] {//*~\textit{\textbf{Branching}}:}
            \STATE Select branching box and edge by \eqref{box_selection} and \eqref{edge_selection}. 
            \STATE Obtain $\mathcal{B}_{-}$ and $\mathcal{B}_{+}$ by \eqref{child_boxes}. Update $\mathcal{S}$ by \eqref{box_list}.
            \FOR{$\mathcal{B} \in \left\{\mathcal{B}_{-}, \mathcal{B}_{+}\right\}$}
                \item[] {//*~\textit{\textbf{Bounding}}:}
                \STATE If (P1-C) is infeasible, prune $\mathcal{B}$, turn to Line 7.
                \STATE Update $f_{\mathrm{LB}}\left(\mathcal{B}\right)$ by \eqref{child_boxes_SU}.
                \STATE Obtain $\mathbf{A}_{\mathrm{prj}}$. Compute $f_{\mathrm{UB}}$,  $\mathbf{w}_{\mathrm{prj}}$ by \eqref{LUB_SU}, \eqref{BF_SU}. 
                \STATE If $f_{\mathrm{UB}}\!<\!\mathrm{GUB}$, update $\mathbf{x}^{*} \!=\!  \left\{\mathbf{A}_{\mathrm{prj}}, 
                \mathbf{w}_{\mathrm{prj}}\right\}$.
                \STATE Update $\mathrm{GLB}$ and $\mathrm{GUB}$ by \eqref{GLB_SU} and \eqref{GUB_SU}.
                \item[] {//*~\textit{\textbf{Pruning}}:}
                \STATE Prune $\mathcal{B}$ if it meets fathomed condition \eqref{fathomed_box}. 
                \STATE Prune non-optimal boxes $\mathcal{B}'\in\mathcal{S}$ satisfying \eqref{nonoptimal_box}.
            \ENDFOR
        \ENDWHILE
        \STATE If $\mathrm{GUB} < P^{*}$, update $P^{*}$ and the optimal $L_{n}^{\mathrm{s}*}, \mathbf{A}^{*}, \mathbf{w}^{*}$.
    \ENDFOR
    \ENSURE Optimal $L_{n}^{\mathrm{s}*}, \mathbf{A}^{*}, \mathbf{w}^{*}$, and $P^{*}$.
    \end{algorithmic}
\end{algorithm}

\subsubsection{BnB Algorithm}
To optimally solve problem (P1-1), we perform branching over discrete variables $\mathbf{A}=[a_{l,n}]$.  
Hence, the $B$-dimension branching variables are given by $\mathbf{b} = \left[\mathbf{a}_{1}^{T}, \mathbf{a}_{2}^{T}, \dots, 
\mathbf{a}_{N}^{T}\right]^{T}\in\mathbb{R}^{B\times 1}$ with $B=M$.  
From definitions, the initial space region $\mathcal{B}_{\mathrm{ALL}}$ can be given by $\underline{\mathbf{b}}=0$ and $\overline{\mathbf{b}}=1$.
The BnB procedure performs the following branching, bounding, and pruning steps in each iteration:

\textit{(i) Branching}: 
At each iteration, we select a box $\mathcal{B}_{o}$ from the candidate list $\mathcal{S}$,  
and branch it into two children boxes $\mathcal{B}_{-}$ and $\mathcal{B}_{+}$ along a certain edge $e\in\left\{1,2,\dots,B\right\}$. 
We exploit best-bound-first (BBF) box selection rule \cite{BnB_branching}, 
where the box achieving the best lower bound is chosen to branch:
\begin{equation}\label{box_selection}
    \mathcal{B}_{o} = \mathop{\arg\min}_{\mathcal{B}\in\mathcal{S}} ~ f_{\mathrm{LB}}\left(\mathcal{B}\right).
\end{equation}
Moreover, the maximum-length-first (MLF) edge selection rule is adopted. 
The selected box $\mathcal{B}_{o}$ is equally divided into $\mathcal{B}_{-}$ and $\mathcal{B}_{+}$ 
along its longest edge 
\begin{equation}\label{edge_selection}
    e = \mathop{\arg\max}\limits_{i \in \left\{1,2,\dots,B\right\}} \phi_{i} = \mathop{\arg\max}\limits_{i \in \left\{1,2,\dots,B\right\}}  |\overline{b}_{i} - \underline{b}_{i}|,
\end{equation}
where $\phi_{i}$ is the length of the $i$-th edge of box $\mathcal{B}_{o}$. 
The resultant boxes can be defined as $\mathcal{B}_{-} = \left[\underline{\mathbf{b}}, \overline{\mathbf{b}}_{\mathrm{new}}\right]$ 
and $\mathcal{B}_{+} = \left[\underline{\mathbf{b}}_{\mathrm{new}}, \overline{\mathbf{b}}\right]$, 
where the new corner points $\overline{\mathbf{b}}_{\mathrm{new}}$ and $\underline{\mathbf{b}}_{\mathrm{new}}$ are given by
\begin{subequations}\label{child_boxes_SU}
    \begin{equation}
        \overline{\mathbf{b}}_{\mathrm{new}}\!=\!
            \left[\overline{b}_{1},\overline{b}_{2},\dots,\overline{b}_{e-1},0,\overline{b}_{e+1},\dots,\overline{b}_{B}\right]^{T}\!,
    \end{equation} 
    \begin{equation}
        \underline{\mathbf{b}}_{\mathrm{new}}\!=\!
            \left[\underline{b}_{1},\underline{b}_{2},\dots,\underline{b}_{e-1},1,\underline{b}_{e+1},\dots,\underline{b}_{B}\right]^{T}\!.
    \end{equation} 
\end{subequations}
Then, the candidate box list $\mathcal{S}$ is updated by 
\begin{equation}\label{box_list}
    \mathcal{S} \leftarrow \mathcal{S} \setminus \left\{\mathcal{B}_{o}\right\} \cup \left\{\mathcal{B}_{-},\mathcal{B}_{+}\right\}.
\end{equation}

\textit{(ii) Bounding:} 
We evaluate the bounds of $P^{*}$ over each children box $\mathcal{B} \in \left\{\mathcal{B}_{+}, \mathcal{B}_{-}\right\}$. 
Let $\mathbf{x}_{\mathrm{c}} = \left\{\mathbf{A}_{\mathrm{c}}, \mathbf{Q}_{\mathrm{c}}\right\}$
denote the optimal solution of the relaxed problem (P-C) within box $\mathcal{B}$. 
According to property ii) in \textbf{Proposition \ref{proposition:bounding_property}}, 
the optimal value of relaxed problem (P1-C) provides a valid lower bound 
for the original problem (P1), i.e.,
\begin{equation}\label{LLB_SU}
    f_{\mathrm{LB}}\left(\mathcal{B}\right) \!= P^{*} \left(  \mathbf{A}_{\mathrm{c}},L_{n}^{\mathrm{s}}\right)\!=
    \!\frac{\sigma^{2}\gamma^{\min}}{\sum\limits_{n\in\mathcal{N}}\frac{1}{L_{n}^{\mathrm{s}}}\!\left|\mathbf{h}_{n}^{H}\widetilde{\mathbf{G}}_{n}\mathbf{a}_{n,\mathrm{c}}^{*}\right|^{2}}.
\end{equation}
The global lower bound $\mathrm{GLB}$ is then updated by the minimum $f_{\mathrm{LB}}\left(\mathcal{B}'\right)$ over all candidate boxes $\mathcal{B}'\in\mathcal{S}$:
\begin{equation}\label{GLB_SU}
    \mathrm{GLB} = \min_{\mathcal{B}' \in \mathcal{S}} f_{\mathrm{LB}}\left(\mathcal{B}'\right).
\end{equation}

By projecting the relaxed pinching antenna activation solution $\mathbf{A}_{\mathrm{c}}$ into binary variables  $\mathbf{A}_{\mathrm{prj}}$ through rounding operations, 
we can further obtain a feasible solution of the original problem and evaluate the upper bounds of $\mathcal{B}$ as
\begin{equation}\label{LUB_SU}
    f_{\mathrm{UB}}\!\left(\mathcal{B}\right) \!=\! P^{*}(\mathbf{A}_{\mathrm{prj}},L_{n}^{\mathrm{s}})\!=
    \!\frac{\sigma^{2}\gamma^{\min}}{\sum\limits_{n\in\mathcal{N}}\frac{1}{L_{n}^{\mathrm{s}}}\!\left|\mathbf{h}_{n}^{H}\widetilde{\mathbf{G}}_{n}\mathbf{a}_{n,\mathrm{prj}}\right|^{2}}.
\end{equation}
The global upper bound $\mathrm{GUB}$ can be updated by the best feasible solution currently found:
\begin{equation}\label{GUB_SU}
    \mathrm{GUB} \leftarrow \min\!\left\{\mathrm{GUB}, f_{\mathrm{UB}}\left(\mathcal{B}_{-}\right),f_{\mathrm{UB}}\left(\mathcal{B}_{+}\right)\right\}.
\end{equation}

\textit{(iii) Pruning:}
Boxes that could not contain optimal solutions  can be identified and pruned, 
thus accelerating convergence without impacting the global optimality. 
Specifically, the following boxes can be pruned from $\mathcal{S}$:
\begin{itemize}
    \item {Infeasible}:  If the relaxed problem (P1-C) is infeasible over $\mathcal{B}$, 
    $\mathcal{B}$ is infeasible for problem (P1-1) according to property i) in \textbf{Proposition \ref{proposition:bounding_property}}, 
    and can be pruned from $\mathcal{S}$. 
    \item {Fathomed}: $\mathcal{B}$ can be pruned if it is fathomed (fully explored), i.e., local upper and lower bounds satisfy
    \begin{equation}\label{fathomed_box}
        f_{\mathrm{UB}}\left(\mathcal{B}\right) - f_{\mathrm{LB}}\left(\mathcal{B}\right) \leqslant \varepsilon, ~\forall \mathcal{B} \in \left\{\mathcal{B}_{-},\mathcal{B}_{+}\right\}.
    \end{equation} 
    \item {Nonoptimal}: For any box $\mathcal{B}'$ in $\mathcal{S}$, if the local lower bound $f_{\mathrm{LB}}\left(\mathcal{B}'\right)$ 
    exceeds the global upper bound $\mathrm{GUB}$, i.e., 
    \begin{equation}\label{nonoptimal_box}
        f_{\mathrm{LB}}\left(\mathcal{B}'\right) > \mathrm{GUB}, ~ \forall \mathcal{B}'\in\mathcal{S},
    \end{equation}
    $\mathcal{B}$ must not contain optimal solutions and can be pruned.
\end{itemize}

Algorithm \ref{alg:branch_and_bound_SU} summarizes the entire procedure to obtain the globally optimal design for PASS. 
The best-case complexity and the worst-case complexity of BnB search are given by $\mathcal{O}\left(M\right)$ and $\mathcal{O}\left(2^{M}\right)$, respectively.

\subsection{Optimal Solution for Multi-User Scenario}
This part investigates the optimal solution of problem (P0) for the general multi-user scenario. 
Unlike the single-user case, the optimal multi-user transmit beamforming cannot be expressed by explicit closed-form expressions to simplify the optimization problem. 
Alternatively, we construct a tractable convex relaxation problem to handle sophisticated coupling among $\mathbf{W}$, $\mathbf{A}$, and $L_{n}^{\mathrm{s}}$, 
and develop a tailored BnB algorithm. 
\subsubsection{Convex Relaxation}
By exploiting the phase-rotation invariance property of modulus operations  \cite{BF_Emil}, the minimum SINR constraints in \eqref{P1_rate} 
can be equivalently reformulated as
\begin{multline}\label{P2_SINR}
    \hspace{-1em}\!\mathrm{Re}\!\left\{ \mathbf{h}_{k}^{H}\mathbf{G}\mathbf{A}\mathbf{w}_{k}\right\} \! \geqslant \! \!
\sqrt{\gamma^{\min}\!\!\left(\sum\limits _{k'\ne k}\!\left|\mathbf{h}_{k}^{H}\!\mathbf{G}\mathbf{A}\mathbf{w}_{k'}\right|^{2}\!+\!\sigma^{2}\!\right)}, \forall k \in \mathcal{K},
\\\mathrm{Im}\left\{ \mathbf{h}_{k}^{H}\mathbf{G}\mathbf{A}\mathbf{w}_{k}\right\}=0, \forall k \in \mathcal{K}. 
\end{multline}

To make the coupling term \eqref{coupled_term_MU} tractable, we newly introduce an auxiliary transmit beamforming matrix $\mathbf{D} = [d_{n,k}]\in\mathbb{C}^{N\times K}$  
normalized by the equal power radiation ratios. 
Specifically, the $n$-th row vector $\mathbf{D}_{n,:}$ indicates effective transmit beamforming for signals radiated by each pinching antenna over waveguide $n$, 
which is defined as
\begin{equation}\label{w_d_transfer}
    \mathbf{D}_{n,:} = \frac{1}{\sqrt{L_{n}^{\mathrm{s}}}} \mathbf{W}_{n,:}, ~\forall n \in \mathcal{N}.
\end{equation}
Moreover, we define a new set of slack variables $\mathbf{z}_{k}\in\mathbb{C}^{M\times 1}$ to represent the bilinear terms $\mathbf{A}\mathbf{d}_{k}$: 
\begin{equation}\label{bilinear_MU}
    \mathbf{z}_{k} = \mathbf{A} \mathbf{d}_{k} = \left[\mathbf{a}_{1}d_{1,k}, \mathbf{a}_{2}d_{2,k}, \dots, \mathbf{a}_{N}d_{N,k} \right],  ~ \forall k \in \mathcal{K}.
\end{equation}
Substituting \eqref{w_d_transfer} and \eqref{bilinear_MU} into \eqref{coupled_term_MU}, we can obtain 
$\mathbf{G} \mathbf{A} \mathbf{w}_{k} =   \mathbf{\widetilde{G}} \mathbf{A} \mathbf{d}_{k}=   \mathbf{\widetilde{G}} \mathbf{z}_{k}$.
Based on the above definitions, constraint \eqref{P2_SINR} can be equivalently converted into the following convex second-order cone  (SOC) constraints:
\begin{equation}\label{P2_convex_SINR}\begin{split}
    &\mathrm{Re}\! \left\{ \mathbf{h}_{k}^{H}\! \widetilde{\mathbf{G}}\mathbf{z}_{k}\right\} \! \geqslant \! 
\sqrt{\gamma^{\min}\!\!\left(\sum\limits _{k'\ne k}\!\left|\mathbf{h}_{k}^{H}\!\widetilde{\mathbf{G}}\mathbf{z}_{k'}\right|^{2}\!\!+\!\!\sigma^{2}\right)}, 
~ \forall k \in \mathcal{K},
    \\& \mathrm{Im}\left\{ \mathbf{h}_{k}^{H}\widetilde{\mathbf{G}}\mathbf{z}_{k}\right\} =0, ~ \forall k \in \mathcal{K}.
\end{split}
\end{equation}

Furthermore, the objective function can be rearranged as
\begin{equation}\label{MU_obj_convert}
    \left\Vert \mathbf{W}\right\Vert _{F}^{2}
    \overset{(a)}{=}\sum_{k\in\mathcal{K}}\sum_{n\in\mathcal{N}}\left\|\mathbf{a}_{n}d_{n,k}\right\|^{2}
    \!\overset{\eqref{bilinear_MU}}{=}\!\left\|\mathbf{Z}\right\|_{F}^{2},
\end{equation}
where matrix $\mathbf{Z}=\left[\mathbf{z}_{1},\mathbf{z}_{2},\dots,\mathbf{z}_{K}\right]\in\mathbb{C}^{M\times K}$ stacks vectors $\mathbf{z}_{k}^{T}$, $\forall k \in\mathcal{K}$.  
In \eqref{MU_obj_convert}, the equality (a) results from the fact that the binary variable satisfies $a_{l,n} = a_{l,n}^{2} \geqslant 0$, and thus 
\begin{equation} 
    \left|w_{n,k}\right|^{2}
    \!\overset{\eqref{w_d_transfer}}{=}\! L_{n}^{\mathrm{s}} \left|d_{n,k}\right|^{2}
    \!=\!\sum_{l=1}^{L}\!a_{l,n}^{2}\left|d_{n,k}\right|^{2}
    \!=\!\sum_{l=1}^{L}\!\left|a_{l,n}d_{n,k}\right|^{2}\!.
\end{equation}
Hence, problem (P0) can be equivalently reformulated as 
    \begin{equation*}\label{P2}
        \text{(P2)} ~   \min_{\mathbf{A}, \mathbf{D},  \mathbf{Z}}
        F\left(\mathbf{A}, \mathbf{D},  \mathbf{Z}\right) = \left\Vert \mathbf{Z}\right\Vert_{F} ^{2}, 
        ~ \text{s.t.} ~
        \eqref{P_act}, \eqref{bilinear_MU}, \eqref{P2_convex_SINR}.
    \end{equation*}

We derive the McCormick envelope for the complex-value bilinear equality constraint \eqref{bilinear_MU} as
\begin{subequations}\label{McCormickE_mu}
    \begin{equation}
        \mathrm{Re}\!\left\{ \mathbf{Z}\right\} \! \geqslant \! \mathbf{A}\underline{\mathbf{U}} \!+\!\underline{\mathbf{A}}\mathbf{U} \!-\! \underline{\mathbf{A}}\underline{\mathbf{U}},
        ~
        \mathrm{Im}\!\left\{ \mathbf{Z}\right\} \! \geqslant \! \mathbf{A}\underline{\mathbf{V}} \!+\! \underline{\mathbf{A}}\mathbf{V} \!-\! \underline{\mathbf{A}}\underline{\mathbf{V}},
    \end{equation}
    \begin{equation}
        \mathrm{Re}\!\left\{ \mathbf{Z}\right\} \! \geqslant \! \mathbf{A}\overline{\mathbf{U}} \!+\!\overline{\mathbf{A}}\mathbf{U} \!-\! \overline{\mathbf{A}}\overline{\mathbf{U}},
        ~
        \mathrm{Im}\!\left\{ \mathbf{Z}\right\} \! \geqslant \! \mathbf{A}\overline{\mathbf{V}} \!+\! \overline{\mathbf{A}}\mathbf{V} \!-\! \overline{\mathbf{A}}\overline{\mathbf{V}},
    \end{equation}
    \begin{equation}
        \mathrm{Re}\!\left\{ \mathbf{Z}\right\} \! \geqslant \! \mathbf{A}\underline{\mathbf{U}} \!+\!\overline{\mathbf{A}}\mathbf{U} \!-\! \overline{\mathbf{A}}\underline{\mathbf{U}},
        ~
        \mathrm{Im}\!\left\{ \mathbf{Z}\right\} \! \geqslant \! \mathbf{A}\underline{\mathbf{V}} \!+\! \overline{\mathbf{A}}\mathbf{V} \!-\! \overline{\mathbf{A}}\underline{\mathbf{V}},
    \end{equation}
    \begin{equation}
        \mathrm{Re}\!\left\{ \mathbf{Z}\right\} \! \geqslant \! \mathbf{A}\overline{\mathbf{U}} \!+\!\underline{\mathbf{A}}\mathbf{U} \!-\! \underline{\mathbf{A}}\overline{\mathbf{U}},
        ~
        \mathrm{Im}\!\left\{ \mathbf{Z}\right\} \! \geqslant \! \mathbf{A}\overline{\mathbf{V}} \!+\! \underline{\mathbf{A}}\mathbf{V} \!-\! \underline{\mathbf{A}}\overline{\mathbf{V}}, 
    \end{equation}
    \begin{equation}\label{varbound_MU}
        \underline{ \mathbf{U} } \leqslant \mathbf{U} \leqslant \overline{ \mathbf{U} }, \quad  
        \underline{ \mathbf{V} } \leqslant \mathbf{V} \leqslant \overline{ \mathbf{V} }, \quad 
        \underline{\mathbf{a}}_{n} \leqslant \mathbf{a}_{n} \leqslant \overline{\mathbf{a}}_{n},
    \end{equation}
\end{subequations}
where $\mathbf{A} \triangleq \text{blkdiag} \left(\mathbf{a}_{1}, \mathbf{a}_{2}, \dots, \mathbf{a}_{N}\right) \in \mathbb{Z}^{M \times N}$, 
and auxiliary variables $\mathbf{U}\in\mathbb{R}^{N\times K}$ and $\mathbf{V}\in\mathbb{R}^{N\times K}$ are defined as 
$\mathbf{U} \!=\! \mathrm{Re}\left\{\mathbf{D}\right\}$ and 
$\mathbf{V} \!=\! \mathrm{Im}\left\{\mathbf{D}\right\}$, respectively.
Therefore, the convex relaxation problem of (P2) can be constructed by relaxing binary constraints and replacing bilinear constraints with the McCormick envelope: 
\begin{equation*}\label{P2_convex}
    \text{(P2-C)} ~ \min_{\mathbf{A}, \mathbf{D},  \mathbf{Z}}
    ~ F\left(\mathbf{A}, \mathbf{D},  \mathbf{Z}\right) =  \left\Vert \mathbf{Z}\right\Vert_{F} ^{2}, ~
    \text{s.t.} ~ 
    \eqref{P2_convex_SINR}, \eqref{McCormickE_mu}.
\end{equation*}

\subsubsection{BnB Algorithm}
We perform branching over the $B$-dimension variables $\mathbf{b} = \big[\mathbf{a}_{1}^{T}, \mathbf{a}_{2}^{T}, \dots, 
\mathbf{a}_{N}^{T}, \mathrm{vec}\big(\mathrm{Re}\left\{\mathbf{D}\right\}\big), \allowbreak \mathrm{vec}\big(\mathrm{Im}\left\{\mathbf{D}\right\}\big)\big]^{T}\in\mathbb{R}^{B\times 1}$ 
to determine variable bounds in \eqref{varbound_MU}, where $B = M+2NK$.  
From definitions, the initial feasible region of $\mathbf{A}$ can be determined as $\underline{\mathbf{a}}_{n}=0$ and $\overline{\mathbf{a}}_{n}=1$, $\forall n$.
Moreover, the transmit beamforming coefficients are bounded by  $\underline{\mathbf{u}}_{k} = \underline{\mathbf{v}}_{k}= - \sqrt{P_0}$ 
and $\overline{\mathbf{u}}_{k} = \overline{\mathbf{v}}_{k}= \sqrt{P_0}$, 
where $P_{0}$ denotes a sufficiently large power budget for each pinching antenna such that problem (P2-C) is feasible. 

\begin{algorithm}[!tp]
    \caption{Optimal Beamforming for Multi-User Scenario}
    \label{alg:branch_and_bound_MU}
    \begin{algorithmic}[1]
    \REQUIRE Channel $\mathbf{H}$, $\mathbf{G}$, tolerance threshold $\varepsilon>0$.
    \STATE Initialize $\underline{\mathbf{b}} \!=\! \big[\mathbf{0}_{1\times M}, -\sqrt{P_0}\mathbf{1}_{1\times 2NK}\big]^{T}, 
    \overline{\mathbf{b}} \!=\! \big[\mathbf{0}_{1\times M}, \allowbreak \sqrt{P_0}\mathbf{1}_{1\times 2NK}\big]^{T}\!$,  
    $\mathcal{B}\!=\!\big[\underline{\mathbf{b}}, \overline{\mathbf{b}}\big]$, and $\mathcal{S} \!=\!\left\{\mathcal{B}\right\}$.
    \STATE Initialize $\mathrm{GUB} = +\infty$, $\mathrm{GLB} = - \infty$.
    \WHILE{$\mathcal{S}\ne\emptyset$ and $\mathrm{GUB}-\mathrm{GLB} > \varepsilon$}
        \item[] {//*~\textit{\textbf{Branching}}:}
        \STATE Select branching box $\mathcal{B}_{o}$ and edge $e$ by \eqref{box_selection} and \eqref{edge_selection}. 
        \STATE Obtain $\mathcal{B}_{-}$ and $\mathcal{B}_{+}$ by \eqref{child_boxes}. Update $\mathcal{S}$ by \eqref{box_list}. 
        \FOR{each children box $\mathcal{B} \in \left\{\mathcal{B}_{-}, \mathcal{B}_{+}\right\}$}
            \item[] {//*~\textit{\textbf{Bounding}}:}
            \STATE If (P2-C) is infeasible, prune $\mathcal{B}$ and turn to line 6.
            \STATE Update $f_{\mathrm{LB}}\left(\mathcal{B}\right)$ by \eqref{LLB_MU}. 
            \STATE Compute $\mathbf{A}_{\mathrm{prj}}, \mathbf{D}_{\mathrm{prj}}$ and $f_{\mathrm{UB}}\!\left(\mathcal{B}\right)$. 
            \STATE Update $\mathrm{GLB}$ and $\mathrm{GUB}$ by \eqref{GLB_SU} and \eqref{GUB_SU}.
            \item[] {//*~\textit{\textbf{Pruning}}:}
            \STATE Prune $\mathcal{B}$ if it meets fathomed condition \eqref{fathomed_box}. 
            \STATE Prune non-optimal boxes $\mathcal{B}'\in\mathcal{S}$ satisfying \eqref{nonoptimal_box}.
        \ENDFOR
    \ENDWHILE
    \STATE Obtain the optimal $\mathbf{W}_{n,:}^{*} = \sqrt{L_{n}^{\mathrm{s}}}\mathbf{D}_{n,:}^{*}$, $\forall n \in \mathcal{N}$. 
    \ENSURE Optimal $\mathbf{A}^{*}$, $\mathbf{W}^{*}$, and $f^{*}=\mathrm{GUB}$.
    \end{algorithmic}
\end{algorithm}

The BnB procedure performs the following branching, bounding, and pruning steps in each iteration.
First, a branching box $\mathcal{B}_{o}$ is selected from $\mathcal{S}$ based on BBF box selection rule \eqref{box_selection}.
The selected box is divided into children boxes $\mathcal{B}_{-} = \left[\underline{\mathbf{b}}, \overline{\mathbf{b}}_{\mathrm{new}}\right]$ and 
$\mathcal{B}_{+} = \left[\underline{\mathbf{b}}_{\mathrm{new}}, \overline{\mathbf{b}}\right]$  
along the longest edge $e$ based on MLF rule \eqref{edge_selection}. 
If edge $e$ corresponds to a binary variable in vector $\mathbf{b}$ (i.e., $1\leqslant e \leqslant M$), 
it will be branched into two discrete points at $0$ and $1$. 
Otherwise, edge $e$  represents a continuous variable  (i.e., $e > M$) and is equally divided into two parts.
Hence, $\overline{\mathbf{b}}_{\mathrm{new}}$ and $\underline{\mathbf{b}}_{\mathrm{new}}$ are given by
\begin{equation}\label{child_boxes}
    \begin{split}
        &\overline{\mathbf{b}}_{\mathrm{new}}\!\!=\!\!\begin{cases}
        \!\left[\overline{\mathbf{b}}_{1:\,e-1},~0,~\overline{\mathbf{b}}_{e+1\,:B},\right]^{T}\!, & \text{if $1\!\leqslant\! e\!\leqslant\! M$,}\\
        \!\left[\overline{\mathbf{b}}_{1:\,e-1},~\frac{\overline{b}_{e}+\underline{b}_{e}}{2},~\overline{\mathbf{b}}_{e+1\,:B}\right]^{T}\!, & \text{if $e>M$}.
        \end{cases}
\\&
    \underline{\mathbf{b}}_{\mathrm{new}}\!\!=\!\!\begin{cases}
        \!\left[\underline{\mathbf{b}}_{1:\,e-1},~1,~\underline{\mathbf{b}}_{e+1\,:B}\right]^{T}\!, & \text{if $1\!\leqslant\! e\!\leqslant\! M$},\\
        \!\left[\underline{\mathbf{b}}_{1:\,e-1},~\frac{\overline{b}_{e}+\underline{b}_{e}}{2},~\underline{\mathbf{b}}_{e+1\,:B}\right]^{T}\!, & \text{if $e>M$.}
        \end{cases}
    \end{split}\end{equation}

By solving the convex relaxation problem (P2-C) over the children box $\mathcal{B}\in\left\{\mathcal{B}_{-},\mathcal{B}_{+}\right\}$, 
the local lower bound $f_{\mathrm{LB}}\left(\mathcal{B}\right)$ is evaluated as 
\begin{equation}\label{LLB_MU}
    f_{\mathrm{LB}}\left(\mathcal{B}\right) = F\left(\mathbf{A}_{\mathrm{c}}, \mathbf{D}_{\mathrm{c}},  \mathbf{Z}_{\mathrm{c}}\right) \leqslant f^{*}\left(\mathcal{B}\right),
\end{equation}
where $F\left(\mathbf{A}_{\mathrm{c}}, \mathbf{D}_{\mathrm{c}},  \mathbf{Z}_{\mathrm{c}}\right)$ 
and $f^{*}\left(\mathcal{B}\right)$ denotes the optimal objective values of 
relaxed problem (P2-C) and original problem (P2) over feasible region $\mathcal{B}$, respectively. 
By projecting $\mathbf{A}_{\mathrm{c}}$ into binary variables $\mathbf{A}_{\mathrm{proj}}$, 
and solving the original problem (P2) for the given $\mathbf{A}_{\mathrm{proj}}$, 
a feasible solution $\left\{\mathbf{A}_{\mathrm{proj}},\mathbf{D}_{\mathrm{proj}}\right\}$ can be further 
obtained, which provides a local upper bound $f_{\mathrm{UB}}\left(\mathcal{B}\right)$ of the optimal objective value.
Similar to the single-user case, the global upper and lower bounds $\mathrm{GUB}$ and $\mathrm{GLB}$ can be further refined. 
Then, boxes unnecessary to explore can be identified and pruned from $\mathcal{S}$. 
\textbf{Algorithm \ref{alg:branch_and_bound_MU}} summarizes the entire BnB procedure for multi-user scenario.

\subsubsection{Convergence and Optimality Analysis}
The convergence, optimality, and the worst-case complexity of the proposed BnB in \textbf{Algorithm \ref{alg:branch_and_bound_MU}} can be mathematically proven as follows. 
\begin{lemma}\label{lemma:GLB_GUB_diff}
    The gap $\mathrm{GUB}-\mathrm{GLB}$ vanishes as the maximum edge length $\phi_{\max} \!=\! \max\limits_{i\in\{1,2,..,B\}} \left\{\overline{b}_{i} - \underline{b}_{i}\right\}$ decreases. 
    Given any tolerance $\varepsilon >0$,  if $\phi_{\max}$ becomes smaller than the threshold 
    \begin{equation}\label{eq:GLB_GUB_diff}
        \phi_{\max}  \leqslant \xi \triangleq \frac{\varepsilon}{\sqrt{2MP_{0}B}},
    \end{equation}
    then $\mathrm{GUB}-\mathrm{GLB} \leqslant \varepsilon$ and Algorithm \ref{alg:branch_and_bound_MU} terminates. 
    \begin{proof}
        See Appendix \ref{proof:lemma:GLB_GUB_diff}.
    \end{proof}
\end{lemma}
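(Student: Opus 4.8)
The plan is to show that the gap $\mathrm{GUB}-\mathrm{GLB}$ is controlled by the tightness of the McCormick relaxation \eqref{McCormickE_mu}, which is itself governed by the current edge lengths. First I would reduce the global gap to a single-box gap. Let $\mathcal{B}$ be the box currently attaining $\mathrm{GLB}=f_{\mathrm{LB}}(\mathcal{B})$ in \eqref{GLB_SU}. Since $\mathrm{GUB}$ is the best feasible objective recorded so far, it is no larger than the feasible value returned on $\mathcal{B}$, i.e.\ $\mathrm{GUB}\le f_{\mathrm{UB}}(\mathcal{B})$; hence $\mathrm{GUB}-\mathrm{GLB}\le f_{\mathrm{UB}}(\mathcal{B})-f_{\mathrm{LB}}(\mathcal{B})$, and it suffices to bound this per-box gap.

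On $\mathcal{B}$, let $(\mathbf{A}_{\mathrm c},\mathbf{D}_{\mathrm c},\mathbf{Z}_{\mathrm c})$ solve the relaxation (P2-C), so that $f_{\mathrm{LB}}(\mathcal{B})=\|\mathbf{Z}_{\mathrm c}\|_F^2$. I would then build a feasible point of (P2) by enforcing the exact bilinear relation, $\widehat{\mathbf{Z}}=\mathbf{A}_{\mathrm c}\mathbf{D}_{\mathrm c}$ while keeping the same $\mathbf{D}_{\mathrm c}$, giving $f_{\mathrm{UB}}(\mathcal{B})\le\|\widehat{\mathbf{Z}}\|_F^2$. The per-box gap is then $\|\widehat{\mathbf{Z}}\|_F^2-\|\mathbf{Z}_{\mathrm c}\|_F^2$, which by the difference-of-squares identity and Cauchy--Schwarz factors as $\le\|\widehat{\mathbf{Z}}-\mathbf{Z}_{\mathrm c}\|_F\,(\|\widehat{\mathbf{Z}}\|_F+\|\mathbf{Z}_{\mathrm c}\|_F)$. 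The second factor is a \emph{magnitude} term: every entry of $\mathbf{Z}$ obeys $|z|=|a\,d|\le|d|$ with $\mathbf{D}$ confined to the box $[-\sqrt{P_0},\sqrt{P_0}]$ componentwise by \eqref{varbound_MU}, so $\|\widehat{\mathbf{Z}}\|_F+\|\mathbf{Z}_{\mathrm c}\|_F$ is of order $\sqrt{MP_0}$. The first factor is a \emph{deviation} term measuring how far the relaxed $\mathbf{Z}_{\mathrm c}$ sits from the true product: both $\mathbf{Z}_{\mathrm c}$ and $\widehat{\mathbf{Z}}$ satisfy \eqref{McCormickE_mu}, so each real and imaginary entrywise discrepancy is bounded by the width of the McCormick envelope, which by the Remark collapses to zero as the box shrinks and is controlled linearly by the edge lengths $\overline{b}_i-\underline{b}_i\le\phi_{\max}$. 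Summing the factor $2$ (real/imaginary) over the $B$ branching coordinates yields $\|\widehat{\mathbf{Z}}-\mathbf{Z}_{\mathrm c}\|_F$ of order $\sqrt{B}\,\phi_{\max}$.

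Multiplying the magnitude and deviation factors gives $\mathrm{GUB}-\mathrm{GLB}\le \sqrt{2MP_0B}\,\phi_{\max}$, so imposing the right-hand side $\le\varepsilon$ produces exactly the stated threshold $\xi=\varepsilon/\sqrt{2MP_0B}$. For termination I would note that each branching step in \eqref{child_boxes} either halves the longest continuous edge or fixes a binary coordinate, so $\phi_{\max}$ is monotonically driven to $0$; once $\phi_{\max}\le\xi$ the bound forces $\mathrm{GUB}-\mathrm{GLB}\le\varepsilon$, the while-loop predicate $\mathrm{GUB}-\mathrm{GLB}>\varepsilon$ fails, and Algorithm \ref{alg:branch_and_bound_MU} halts. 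The main obstacle is the deviation estimate: one must verify that the McCormick gap for the \emph{complex} bilinear equality \eqref{bilinear_MU} is linear in each edge length and assemble the constants $2$, $M$, $P_0$, $B$ consistently, since a careless use of the tighter quadratic envelope width would misstate the constant; care is also needed to ensure that the box attaining $\mathrm{GLB}$ has all of its edges below $\xi$, not merely its longest one.
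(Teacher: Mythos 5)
Your proof is correct and follows essentially the same route as the paper's Appendix C: both arguments bound the gap by a magnitude factor of order $\sqrt{2MP_{0}}$ (from $\left\Vert\mathbf{Z}\right\Vert_{F}$ on the box $[-\sqrt{P_{0}},\sqrt{P_{0}}]$) times a deviation factor of order $\sqrt{B}\,\phi_{\max}$, and your difference-of-squares identity is exactly what the paper's Lagrange mean-value theorem plus Cauchy--Schwarz yields for the quadratic objective $F=\left\Vert\mathbf{Z}\right\Vert_{F}^{2}$. Your preliminary reduction to the per-box gap on the box attaining $\mathrm{GLB}$ is, if anything, slightly more careful than the paper, which directly asserts $\left\Vert\mathbf{x}_{\mathrm{GUB}}-\mathbf{x}_{\mathrm{c}}\right\Vert\leqslant\sqrt{B}\,\phi_{\max}$ without that intermediate step.
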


\begin{theorem}\label{theorem:MU_BnBConverge}
The proposed BnB algorithm converges to an $\varepsilon$-optimal solution in finite iterations. 
That is, the achieved optimum value $f^{*}$ can be arbitrarily close to the true optimum $f_{\mathrm{true}}^{*}$ of problem (P2), 
i.e., $f^{*} \leqslant f_{\mathrm{true}}^{*} + \varepsilon$, $\forall \varepsilon \geqslant 0$. 
\begin{proof}
    See Appendix \ref{proof:theorem:MU_BnBConverge}.
\end{proof}
\end{theorem}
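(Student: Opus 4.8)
The plan is to assemble three ingredients already available in the excerpt: the validity of the bounds from \textbf{Proposition \ref{proposition:bounding_property}}, the explicit gap-shrinking threshold $\xi$ from \textbf{Lemma \ref{lemma:GLB_GUB_diff}}, and the finiteness of the branch tree induced by the branching rule \eqref{child_boxes}. The overall structure is the standard sandwich argument for BnB: maintain $\mathrm{GLB}\leqslant f_{\mathrm{true}}^{*}\leqslant\mathrm{GUB}$ as an invariant throughout the iterations, and then show that the termination test $\mathrm{GUB}-\mathrm{GLB}\leqslant\varepsilon$ is reached after finitely many iterations.

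First I would establish the sandwich invariant. By property ii) of \textbf{Proposition \ref{proposition:bounding_property}}, every live box contributes a lower bound $f_{\mathrm{LB}}(\mathcal{B})$ that underestimates the optimum of (P2) restricted to $\mathcal{B}$, so $\mathrm{GLB}=\min_{\mathcal{B}'\in\mathcal{S}}f_{\mathrm{LB}}(\mathcal{B}')\leqslant f_{\mathrm{true}}^{*}$ provided no box containing a global minimizer is ever discarded. By property iii), each projected feasible solution yields $f_{\mathrm{UB}}(\mathcal{B})\geqslant f_{\mathrm{true}}^{*}$, so $\mathrm{GUB}\geqslant f_{\mathrm{true}}^{*}$ and $\mathrm{GUB}$ always equals the value of an actual feasible point. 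I would then verify that none of the three pruning rules destroys a global minimizer: the infeasible rule removes only boxes where (P2-C) is infeasible, which by property i) implies (P2) is infeasible there; the fathomed rule \eqref{fathomed_box} removes a box only after its upper bound has already updated $\mathrm{GUB}$; and the nonoptimal rule \eqref{nonoptimal_box} removes a box $\mathcal{B}'$ with $f_{\mathrm{LB}}(\mathcal{B}')>\mathrm{GUB}\geqslant f_{\mathrm{true}}^{*}$, which cannot contain a point attaining $f_{\mathrm{true}}^{*}$. Hence the invariant persists.

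Next I would prove finite termination using the branching mechanics. Along any root-to-leaf path, each branching on a binary coordinate ($1\leqslant e\leqslant M$) in \eqref{child_boxes} collapses that edge to length $0$, so each of the $M$ binary edges is resolved by at most one branching, while each continuous coordinate ($e>M$) is bisected, shrinking its edge length geometrically from the initial $2\sqrt{P_{0}}$. Consequently, after a bounded number of branchings the longest edge satisfies $\phi_{\max}\leqslant\xi$ with $\xi$ as in \eqref{eq:GLB_GUB_diff}; the maximum depth is at most $M+2NK\lceil\log_{2}(2\sqrt{P_{0}}/\xi)\rceil$. By \textbf{Lemma \ref{lemma:GLB_GUB_diff}}, any such box has $f_{\mathrm{UB}}(\mathcal{B})-f_{\mathrm{LB}}(\mathcal{B})\leqslant\varepsilon$ and is pruned by the fathomed rule, so no box is ever branched beyond this depth. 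Since the search tree is binary and of bounded depth, it contains finitely many nodes, and the algorithm performs only finitely many iterations.

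Finally I would combine the two parts: at termination the stopping test gives $\mathrm{GUB}-\mathrm{GLB}\leqslant\varepsilon$, and together with the invariant this yields $f^{*}=\mathrm{GUB}\leqslant\mathrm{GLB}+\varepsilon\leqslant f_{\mathrm{true}}^{*}+\varepsilon$, while $f^{*}\geqslant f_{\mathrm{true}}^{*}$ since $\mathrm{GUB}$ is the value of a genuine feasible solution, giving exactly the claimed $\varepsilon$-optimality. The hard part will be the finite-depth argument for the continuous beamforming coordinates: unlike the binary edges these are never exactly resolved, so the proof must lean on the geometric shrinkage of bisection together with the quantitative threshold $\xi$ of \textbf{Lemma \ref{lemma:GLB_GUB_diff}} to guarantee that deep boxes are fathomed rather than branched indefinitely; I would also need to confirm that the projection step always returns a feasible point, so that $\mathrm{GUB}$ becomes finite as soon as a feasible box is encountered.
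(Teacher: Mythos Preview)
Your proposal is correct and follows essentially the same route as the paper: both rely on the sandwich $\mathrm{GLB}\leqslant f_{\mathrm{true}}^{*}\leqslant\mathrm{GUB}$, invoke \textbf{Lemma~\ref{lemma:GLB_GUB_diff}} to certify that the gap closes once $\phi_{\max}\leqslant\xi$, and conclude $f^{*}=\mathrm{GUB}\leqslant f_{\mathrm{true}}^{*}+\varepsilon$ at termination. The only presentational difference is that the paper packages the argument by citing the three abstract BnB convergence conditions of \cite{BnB_convergence} (bound validity, exhaustiveness, bound convergence) and briefly checks each, whereas you make the proof self-contained by explicitly verifying that the pruning rules never discard a box containing a global minimizer and by deriving an explicit depth bound $M+2NK\lceil\log_{2}(2\sqrt{P_{0}}/\xi)\rceil$; both treatments rest on the same ingredients and neither adds a genuinely new idea.
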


\begin{theorem}\label{theorem:BnB_complexity}
    An $\varepsilon$-optimal solution can be obtained in at most
    $T_{\max} = \left\lceil \frac{\psi_{\mathrm{vol}}}{\xi^{2NK}}2^{B+1} - 1 \right\rceil$ branching 
    iterations, where $\psi_{\mathrm{vol}} \!=\! \big( 2\sqrt{P_{0}} \big) ^{2NK}$ denotes the volume of the initial feasible region for continuous variables
    at the root node.
    \begin{proof}
        See Appendix \ref{proof:theorem:BnB_complexity}.
    \end{proof}
\end{theorem}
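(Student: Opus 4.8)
The plan is to bound the size of the binary search tree that Algorithm~\ref{alg:branch_and_bound_MU} can generate before its stopping rule fires. Since every WHILE-loop iteration removes one box from $\mathcal{S}$ and inserts exactly its two children, the tree is binary and the number $T$ of branching iterations equals the number of internal nodes, i.e. $T = N_{\mathrm{leaf}}-1$, where $N_{\mathrm{leaf}}$ is the number of terminal boxes produced. Pruning can only remove nodes, so the largest possible $T$ occurs when nothing is ever pruned; it therefore suffices to count the leaves of the complete refinement that shrinks every edge below the threshold $\xi$ of \textbf{Lemma~\ref{lemma:GLB_GUB_diff}}. I would first invoke that lemma to guarantee finiteness: once $\phi_{\max}\le \xi = \varepsilon/\sqrt{2MP_{0}B}$ on a box, the gap $\mathrm{GUB}-\mathrm{GLB}$ on it is at most $\varepsilon$, so that box is never branched again, and the refinement halts at resolution $\xi$.

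Next I would count the per-coordinate subdivisions required to reach this resolution, separating the two coordinate types of the branching vector $\mathbf{b}$. For each of the $M$ binary coordinates, the splitting rule in \eqref{child_boxes} divides $[0,1]$ into the two points $\{0\}$ and $\{1\}$ in a single branching, contributing a factor $2^{M}$. For each of the $2NK$ continuous coordinates, the maximum-length-first rule \eqref{edge_selection} halves an edge of initial length $2\sqrt{P_{0}}$, so at most $\lceil \log_{2}(2\sqrt{P_{0}}/\xi)\rceil$ bisections bring it below $\xi$, yielding at most $\lceil 2\sqrt{P_{0}}/\xi\rceil \le 2\,(2\sqrt{P_{0}}/\xi)$ subintervals along that coordinate.

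I would then combine the two contributions. A complete uniform refinement is a product partition, so the continuous factor is $\prod$ over the $2NK$ dimensions, giving at most $2^{2NK}(2\sqrt{P_{0}}/\xi)^{2NK} = 2^{2NK}\psi_{\mathrm{vol}}/\xi^{2NK}$ continuous sub-boxes, with $\psi_{\mathrm{vol}}=(2\sqrt{P_{0}})^{2NK}$. Multiplying by the discrete factor $2^{M}$ yields $N_{\mathrm{leaf}} \le 2^{M+2NK}\psi_{\mathrm{vol}}/\xi^{2NK} = 2^{B}\psi_{\mathrm{vol}}/\xi^{2NK}$ with $B=M+2NK$. Accounting for all nodes processed (both children generated at every split), or applying the slack bound $\lceil x\rceil\le 2x$ once more, inflates this by a single factor of two to $2^{B+1}\psi_{\mathrm{vol}}/\xi^{2NK}$; taking the ceiling for integrality then gives $T \le \big\lceil \psi_{\mathrm{vol}}\,2^{B+1}/\xi^{2NK} - 1\big\rceil = T_{\max}$, which is the claimed bound.

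The hard part will be the second step: arguing that interleaving discrete and continuous branching under the longest-edge-first rule still produces a product-form leaf count (rather than something larger), and that the no-pruning worst case genuinely drives \emph{every} edge below $\xi$ so that \textbf{Lemma~\ref{lemma:GLB_GUB_diff}} applies uniformly across the final frontier. Once the per-coordinate subdivision counts are pinned down, the product bound and the conversion to $T_{\max}$ are routine arithmetic using the definitions of $\xi$ in \eqref{eq:GLB_GUB_diff}, of $\psi_{\mathrm{vol}}$, and of $B$.
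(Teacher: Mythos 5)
Your proposal follows essentially the same route as the paper's Appendix E: invoke Lemma~\ref{lemma:GLB_GUB_diff} to fix the terminal resolution $\xi$, separate the $2^{M}$ discrete configurations from the continuous refinement, and bound the worst-case tree size by a product of per-dimension counts. The only real difference is bookkeeping: you count leaves of the final partition and use (internal nodes) $=$ (leaves) $-\,1$, whereas the paper bounds the tree \emph{depth} $N_{\mathrm{tr}} \leqslant B + \log_{2}\left(\psi_{\mathrm{vol}}/\xi^{2NK}\right)$ via a volume-halving argument and then sums $2^{n}$ over levels to get $2^{N_{\mathrm{tr}}+1}-1$ nodes. The difficulty you flag --- whether interleaved longest-edge-first branching really yields a product-form leaf count --- is exactly where the paper's argument is cleaner: it does not assume a uniform product grid, but instead observes that no box with $\phi_{\max}\leqslant\xi$ is ever branched, so every leaf's continuous edges exceed $\xi/2$, every leaf's continuous volume exceeds $(\xi/2)^{2NK}$, and disjointness of the leaves inside a region of volume $\psi_{\mathrm{vol}}$ gives the same count regardless of the branching order; substituting that volume bound for your per-coordinate subdivision count closes the gap and the rest of your arithmetic goes through unchanged.
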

The complexity of solving problem (P2-C) using interior point method is $\mathcal{O}\left((M(K+1)+2NK)^{3.5}\right)$ \cite{InteriorPoint_Iteration,InteriorPointComplexity}. 
Hence, the worst-case complexity of \textbf{Algorithm \ref{alg:branch_and_bound_MU}} is given by $\mathcal{O}\left(2T_{\max}(M(K+1)+2NK)^{3.5}\right)$.

\section{Low-complexity Many-to-many Matching-Based Suboptimal Solution}
In this section, we develop a low-complexity welfare-driven many-to-many matching algorithm to solve problem (P0). 

\subsection{Many-to-Many Waveguide-Pinch Matching Model}
We optimize discrete activation $\mathbf{A}$ of pinching antennas in problem (P0) by the matching theory, and determine transmit beamforming $\mathbf{W}$ for the given $\mathbf{A}$ by the KKT theory. 

The pinching antenna activation can be modelled as a many-to-many matching game $\mu$ between two sets of agents: 
the waveguides $\mathcal{N}=\left\{1,2,\dots,N\right\}$, and 
the pinching antenna indices  $\mathcal{L}=\left\{1,2,\dots, L\right\}$. 
Each pinching antenna index $l \in \mathcal{L}$ denotes the same order of pinching antenna along different waveguides, 
but corresponds to physically distinct pinching antennas. 
Thus, $\mathcal{L}$ serves as a logical index set for modelling matching relationships.
A waveguide-pinching antenna pair $(l,n)$ represents the $l$-th pinching antenna at waveguide $n$, and a match between them indicates the activation of that pinching antenna. 
A waveguide-pinching antenna matching $\mu$ is defined as a mapping from waveguides to subsets of pinching antenna indices, 
which satisfies the following conditions: 
(i) Each pinching antenna index $l$ can be matched with up to $N$ waveguides, i.e., $\left|\mu\left(l\right)\right|\leqslant N$. 
(ii) Each waveguide $n$ can match (activate) at most $L$ pinching antennas, i.e., $\mu\left(n\right)\ne\emptyset$ and $1\leqslant \left|\mu\left(n\right)\right|\leqslant L$.  
(iii) The matching is bidirectional consistency, i.e., $l \in \mu\left(n\right)$ if and only if $n \in \mu\left(l\right)$. 

From definitions, the binary activation indicator is given by $a_{l,n}(\mu) = 1$ if $\left(l,n\right)$ is a matched waveguide-pinching antenna pair under $\mu$, 
and  $a_{l,n}(\mu) = 0$ otherwise. 
The above waveguide-pinching antenna matching $\mu$ is a two-sided many-to-many matching game. 
From problem (P0), the preference value of  waveguide $n$ over matching $\mu$ is defined as\footnote{
Based on the proposed matching algorithm, each waveguide seeks to match pinching antennas that reduce path losses for as many users as possible 
while concurrently alleviating multi-user interference, leading to activation patterns that naturally adapt to user distributions.
}
\begin{equation}
    U_{n}(\mu)= - \sum_{k \in \mathcal{K}}\left|w_{n,k}(\mu)\right|^2, ~ \forall n \in \mathcal{N}. 
\end{equation}
Similarly, the preference value of candidate pinching antenna location $l$ over matching $\mu$ is given by
\begin{equation}
    U_{l}(\mu)  = - \sum_{n\in\mathcal{N}} a_{l,n}(\mu) \sum_{k \in \mathcal{K}}\left|d_{n,k}(\mu)\right|^2, ~ \forall l \in \mathcal{L}, 
\end{equation}
where $\left|d_{n,k}\left(\mu\right)\right|^2 = \frac{1}{L_{n}^{\mathrm{s}}} \left|w_{n,k}\left(\mu\right)\right|^2$ denotes the equally radiated power of each pinching antenna over waveguide $n$. 
$\mathbf{W}(\mu)=[w_{n,k}(\mu)]\in\mathbb{C}^{N\times K}$ denotes the the optimal transmit beamforming for the given matching state $\mu$ and the corresponding pinching antenna activation 
$\mathbf{A}\left(\mu\right)$, 
which is obtained by solving the following convex SOC programming (SOCP):
\begin{equation*}
        \mathbf{W}\left(\mu\right) = \mathop{\arg\min}_{\mathbf{W}} ~ \left\Vert \mathbf{W}\right\Vert_{F} ^{2},
        \quad \text{s.t.} ~ \text{\eqref{P2_SINR}.}
\end{equation*}
Therefore, the minimum transmit power to guarantee SINR requirements based on $\mu$ can be evaluated.
Based on the KKT theory \cite{BF_Emil}, the optimal beamforming vector is given by 
$\mathbf{w}_{k}^{*}\left(\mu\right)=\sqrt{p_{k}^{*}\left(\mu\right)}\widetilde{\mathbf{w}}_{k}^{*}\left(\mu\right)$
with beamforming direction:
\begin{equation}\label{optimal_W_matching}
    \widetilde{\mathbf{w}}_{k}^{*}\left(\mu\right)\!=\!
    \frac{\left(\mathbf{I}_{N}\!+\!\sum\limits_{i=1}^{K}\frac{\lambda_{i}}{\sigma^{2}}
    \!\widetilde{\mathbf{h}}_{i}\!\left(\mu\right)\!\widetilde{\mathbf{h}}_{i}^{H}\!\left(\mu\right)\right)^{-1}\!
    \widetilde{\mathbf{h}}_{k}\!\left(\mu\right)}{\left\Vert\left(\mathbf{I}_{N}\!+\!\sum\limits_{i=1}^{K}\frac{\lambda_{i}}{\sigma^{2}}\!
    \mathbf{\widetilde{h}}_{i}\!\left(\mu\right)\!\widetilde{\mathbf{h}}_{i}^{H}\!\left(\mu\right)\right)^{-1}\!\mathbf{\widetilde{h}}_{k}\!\left(\mu\right)\right\Vert}.
\end{equation}
$\widetilde{\mathbf{h}}_{k}\left(\mu\right)=\mathbf{h}_{k}\mathbf{G}\mathbf{A}\left(\mu\right)$ denotes the effective channel vector of user $k$, 
and the Lagrangian multiplier $\lambda_{k}$ can be computed using numerical methods, 
such as the interior-point method \cite{InteriorPointComplexity} relying on Newton iterations. 
Similar to \cite{BF_Emil}, the beamfoming power $p_{k}^{*}\left(\mu\right)$, $\forall k \in\mathcal{K}$, can be obtained by 
\begin{equation*}
    p_{k}^{*}\left(\mu\right)\!=\!\frac{1}{\sigma^{2}\gamma_{\min}}\!\left|\widetilde{\mathbf{h}}_{k}^{H}\!\left(\mu\right)\!\mathbf{\widetilde{\mathbf{w}}}_{k}^{*}\!
    \left(\mu\right)\right|^{2}\!-
    \!\sum_{k'\ne k}\frac{1}{\sigma^{2}}\left|\widetilde{\mathbf{h}}_{k}^{H}\!\left(\mu\right)\!\mathbf{\widetilde{\mathbf{w}}}_{k'}\!\left(\mu\right)\right|^{2}\!\!.
\end{equation*}
Typically, a many-to-many matching problem can be solved by Gale-Shapley algorithm (also known as \textit{deferred acceptance} algorithm). 
If agents' preference lists are fixed and independent, Gale-Shapley algorithm can converge to a stable matching, in the sense that none of the waveguides/pinching antennas could change their current matching state without degrading other agents' satisfactions. 
However, the formulated waveguide-pinching antenna matching does not meet the above assumptions, as analyzed below. 

\begin{definition}[Externalities]
    A many-to-many matching exhibits \textit{externalities} if agents' preferences dynamically depend on the matching decisions or states of other agents.
    \end{definition}

\begin{definition}[Substitutability]
    A matching satisfies \textit{substitutability} if an agent prefers to remain matched with another agent when any subset of its existing matches is removed.
\end{definition}

\begin{proposition}
    The discrete activation of pinching antennas constitutes a many-to-many matching with externalities and non-substitutable preferences. Therefore, the existence of a stable matching is not guaranteed.
    \end{proposition}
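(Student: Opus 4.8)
The plan is to verify the two structural properties named in the statement---\emph{externalities} and \emph{non-substitutability}---and then argue that, once both hold simultaneously, the standard sufficient conditions that guarantee a stable many-to-many matching no longer apply, so existence cannot be guaranteed. I would organise the argument in three steps: (i) show that the preference values $U_n(\mu)$ and $U_l(\mu)$ depend on the \emph{global} matching state $\mu$ rather than on the individual matched pair alone; (ii) show that the induced choice functions violate the substitutability property defined above; and (iii) conclude from classical many-to-many matching theory that, in the absence of these conditions, no deferred-acceptance-type procedure is guaranteed to return a stable outcome, reinforcing the conclusion with a minimal cyclic-blocking counterexample.

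For step (i), I would trace the dependence of the utilities on $\mu$ through the beamforming subproblem. The optimal transmit beamforming $\mathbf{W}(\mu)$ is obtained by solving the \emph{coupled} power-minimisation problem under the multi-user constraints \eqref{P2_SINR}, whose effective channels $\widetilde{\mathbf{h}}_k(\mu)=\mathbf{h}_k\mathbf{G}\mathbf{A}(\mu)$ aggregate the activation pattern of \emph{all} waveguides through $\mathbf{A}(\mu)$. Consequently, flipping the match of a single pair $(l',n')$ alters every user's effective channel and, via the joint minimisation in \eqref{optimal_W_matching}, perturbs the entire solution $\{w_{n,k}(\mu)\}$, including entries with $n\neq n'$. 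Substituting into $U_n(\mu)=-\sum_k|w_{n,k}(\mu)|^2$ then shows that waveguide $n$'s preference changes in response to an activation decision made on a \emph{different} waveguide, which is precisely the definition of externalities; the same reasoning applies to $U_l(\mu)$ through the normalised powers $|d_{n,k}(\mu)|^2=|w_{n,k}(\mu)|^2/L_n^{\mathrm{s}}$.

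For step (ii), I would exploit the equal-power normalisation \eqref{equal_power_radiation}, $\beta_{l,n}=1/\sqrt{L_n^{\mathrm{s}}}$, which ties every activated antenna on a waveguide to the cardinality $L_n^{\mathrm{s}}$ of the activated set. Because each waveguide's per-user effective channel is a \emph{coherent} sum rescaled by $1/\sqrt{L_n^{\mathrm{s}}}$, the utility of a set of antennas is not decomposable into individual marginal contributions: removing a subset simultaneously rescales the radiation ratios and reshapes the phase alignment used to manage cross-waveguide interference, so an antenna that raises $U_n$ when co-activated with others may lower it when those others are removed. I would make this concrete with a one-waveguide instance carrying two candidate antennas and a tight $\gamma^{\min}$, computing the induced choice function directly to exhibit a pair $\{l_1,l_2\}$ that is strictly preferred to either singleton while $l_1$ ceases to be chosen once $l_2$ is removed. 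This complementarity is exactly the failure of the substitutability property, establishing non-substitutable preferences.

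For step (iii), I would close the argument by recalling that substitutability together with the absence of externalities is the standard condition under which the generalised Gale--Shapley procedure is guaranteed to converge to a pairwise-stable outcome, so that dropping either property removes the guarantee. To show the failure is genuine rather than merely a lapse of a sufficient condition, I would construct a small counterexample---a few waveguides and users with tuned channel gains and a tight $\gamma^{\min}$---in which the externality-driven utilities generate a cyclic blocking structure, so that every candidate matching admits a blocking pair and no stable matching exists. \textbf{The main obstacle} is this explicit construction: since the preferences are defined \emph{implicitly} through the solution of the coupled problem \eqref{optimal_W_matching}, verifying that the blocking relation is cyclic requires evaluating the jointly optimised beamforming at several matching states and checking strict inequalities, with parameters chosen so that the externalities dominate without rendering any intermediate state infeasible. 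Should a closed-form cycle prove elusive, I would fall back on the weaker but sufficient conclusion that, because both guaranteeing conditions are violated, no existence guarantee for a stable matching remains.
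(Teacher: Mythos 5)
Your proposal is correct and follows essentially the same route as the paper: externalities are established because the jointly optimised beamforming $\mathbf{W}(\mu)$ in \eqref{optimal_W_matching} depends on the entire activation pattern, non-substitutability follows from the coupling of all activated antennas through the $1/\sqrt{L_n^{\mathrm{s}}}$ radiation ratios, and the conclusion is that the classical sufficient conditions for stable-matching existence fail. Your step (iii) counterexample of an actual cyclic blocking structure would prove something strictly stronger than the stated claim (which only asserts that existence is \emph{not guaranteed}); the paper does not attempt it, and your stated fallback is exactly the paper's argument.
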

    
    \begin{proof}
    As shown in \eqref{optimal_W_matching}, the optimal transmit beamforming $\mathbf{W}(\mu)$ depends on matching state $\mu$, which changes if any $a_{l,n}$ is updated. Hence, the preferences of waveguides and pinching antennas are influenced by the matching decisions of others, leading to externalities.
    Moreover, to achieve signal enhancement and interference mitigation, a pinching antenna or waveguide's matching behavior may vary with the composition of its matched group, violating the substitutability condition.
    Therefore, the classical conditions that ensure the existence of stable matchings do not hold in this setting~\cite{matching_externalities,matching_substitutability}.
\end{proof}

\subsection{Proposed Low-Complexity Solution}
We resort to swap matching theory to overcome the externalities.
Note that conventional swap matching mostly improves individual preferences of agents by searching for beneficial matching swaps.
However, since agents act selfishly, they may refuse a swap that harms agents' own utilities even though this swap improves the overall system performance (namely \textit{social welfares}). 
Hence, vanilla swap matching may converge to a pairwise stable matching that is far from locally or globally optimal solutions of problem (P0).
However, for solving problem (P0), a pinching antenna/waveguide is encouraged to increase individual power consumption if this helps reduce the total power consumption. 
Motivated by this, we develop a many-to-many matching algorithm that enables welfare-improving swap operations.  
This allows pinching antennas and waveguides to cooperatively minimize the total transmit power in problem (P0) whilst ensuring users' rate requirements.   
Compared to vanilla swap matching, the proposed algorithm guarantees both pairwise stability and local optimality.

\subsubsection{Welfare-Driven Matching Algorithm}
We begin by defining swap matchings and swap-blocking pairs, and then propose 
the concept of welfare-blocking pairs.

\begin{algorithm}[!t]
    \caption{Many-to-Many Matching Based Beamforming}
    \label{alg:swap_matching}
    \begin{algorithmic}[1]
    \REQUIRE Pinching antenna indices $\mathcal{L}$,  waveguides $\mathcal{N}$, $\mathbf{H}$, $\mathbf{G}$.
    \STATE Initialize $\mathbf{A}\left(\mu\right)=0$.
    \STATE Match each waveguide $n$ with its most preferred activated location $l$, and set $a_{l,n}(\mu)=1$.
    \item[] {//*~\textit{{Search for welfare-blocking pairs}}}
    \REPEAT
    \FOR{each unmatched pair $(l,n)$ in $\mu$}
        \STATE If $U\!\left(\mu_{\emptyset,n}^{l,\emptyset}\right)\!>\!U\!\left(\mu\right)$, update $\mu \!\leftarrow \!\mu_{\emptyset,n}^{l,\emptyset}$. {//~\textit{Add a match}} 
    \ENDFOR
    \FOR{each matched pair $(l,n)$ in $\mu$}
        \FOR{each unmatched pair $(l',n)$, $l' \ne l$}
            \STATE If $U\left(\mu_{l,n}^{l',\emptyset}\right)\!>\!U\left(\mu\right)$, update $\mu \!\leftarrow\! \mu_{l,n}^{l',\emptyset}$. {//~\textit{Replace}}
        \ENDFOR
        \FOR{each matched pair $(l',n') \ne (l,n)$}
            \STATE If $U\!\left(\mu_{l,n}^{l',n'}\right)\!\!>\!\!U\!\left(\mu\right)$, update $\mu \!\leftarrow\! \mu_{l,n}^{l',n'}$. {//~\textit{Exchange}}
        \ENDFOR
    \ENDFOR
    \UNTIL{no globally swap-blocking pairs remain.}
    \ENSURE Many-to-many matching $\mu$, pinching antenna activation $\mathbf{A}(\mu)$, and transmit beamforming $\mathbf{W}(\mu)$.
    \end{algorithmic}
\end{algorithm}

\begin{definition}[Matching Swap]
    A swap matching $\mu_{l,n}^{l',n'}$ swaps two existing pairs $(l,n)$, $(l',n')$ in matching $\mu$ into two 
    new pairs $(l,n')$, $(l',n)$.
\end{definition}

\begin{definition}[Vanilla Swap-Blocking Pair]
    A swap-blocking pair $(l,l',n,n')$ forms if
    \begin{enumerate}
        \item [(i)] For each agent $i\in(l,l',n,n')$, $U_{i}\left(\mu_{l,n}^{l',n'}\right) \geqslant U_{i}(\mu)$, i.e., individual utility is not decreased by $\mu_{l,n}^{l',n'}$. 
        \item [(ii)] For at least one agent $i\in(l,l',n,n')$,  individual utility is strictly improved by $\mu_{l,n}^{l',n'}$, i.e., $U_{i}\left(\mu_{l,n}^{l',n'}\right) > U_{i}(\mu)$.
    \end{enumerate}
\end{definition}

\begin{definition}[Welfare-Blocking Pair]
    A welfare-blocking pair $(l,l',n,n')$ exists if 
    (i) swap $\mu_{l,n}^{l',n'}$ is feasible, and 
    (ii) the social-welfare utility of all agents can be improved by swap $\mu_{l,n}^{l',n'}$, i.e., $U\left(\mu_{l,n}^{l',n'}\right) > U\left(\mu\right)$.
\end{definition}

We identify potential welfare-blocking pairs by examining following three types of swap operations, 
which allows the pinch/waveguide to add, replace, or exchange their matching: 
\begin{enumerate}
        \item[(i)] \textit{Add}: Add a new match $(l,n)$ into $\mu$ via a swap operation $\mu_{\emptyset,n}^{l,\emptyset}$. 
        Here, $\emptyset$ denotes an empty value. 
        Notations $(\emptyset,n)$ and $(l,\emptyset)$ are symbolic representations that do not alter the current matching states of waveguide $n$ and pinching antenna $l$, 
        which are purely introduced for convenience. 
        \item[(ii)] \textit{Replace}: Replace an existing match $(l,n)$ with a new match $(l',n)$ via the swap $\mu_{l,n}^{l',\emptyset}$, thereby removing $(l,n)$ from current matching.
        \item[(ii)] \textit{Exchange}: Exchange two existing matches $(l,n)$ and $(l',n')$ 
        to form new matches $(l,n')$ and $(l',n)$  via $\mu_{l,n}^{l',n'}$.
\end{enumerate}
The welfare-driven many-to-many matching game is summarized in \textbf{Algorithm \ref{alg:swap_matching}}. 

\subsubsection{Performance and Complexity Analysis}
To analyze the convergence behaviors of the developed many-to-many matching algorithm, we define the pairwise stability (also known as exchange stability), 
which ensures that no waveguide-pinching antenna pair desires to deviate from the current matching.

\begin{definition}[Welfare-Based Pairwise Stability]
    A matching $\mu$ is welfare-based pairwise stable if no welfare-blocking pair remains, i.e., 
    no feasible swap can improve the total utility.
\end{definition}

The convergence of \textbf{Algorithm \ref{alg:swap_matching}} is analyzed as follows.
\begin{theorem}\label{theorem:matching_conv}
    The proposed welfare-driven matching algorithm converges in finite steps to a welfare-based pairwise stable matching $\mu^*$ 
    and achieves a locally optimal solution of (P0).
    
    \begin{proof}
        At each iteration, a feasible swap is performed only if it leads to a strict increase in the system utility, i.e., 
        $U(\mu^{(t+1)}) > U(\mu^{(t)})$.
        Since each accepted swap strictly improves $U(\mu)$, previously visited matchings are never revisited, and cycles are avoided. 
        Moreover, the number of feasible matchings is finite, and the utility (e.g., the negative total transmit power) is upper bounded due to users' rate constraints. Hence, the algorithm must terminate in a finite number of iterations.
        
        Upon termination, there exists no feasible swap that can further increase the utility. Therefore, the final matching $\mu^*$ contains no welfare-improving blocking pairs and satisfies welfare-based pairwise stability. 
        Furthermore, since all feasible swaps are exhausted and none can improve $U(\mu^*)$, the matching is locally optimal within the 
        swap neighborhood. 
        Given that $\mathbf{W}(\mu^*)$ is the optimal transmit beamforming under the fixed matching $\mu^*$, no neighboring pair $(\mu', \mathbf{W}')$ can further reduce the total transmit power. Thus, the joint solution $(\mu^*, \mathbf{W}^*(\mu^*))$ is locally optimal.
        This ends the proof.
    \end{proof}
\end{theorem}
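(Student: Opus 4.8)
The plan is to split the claim into two assertions — finite-step convergence to a welfare-based pairwise stable matching, and local optimality of the returned joint solution — and to drive both from a single monotonicity invariant: every accepted move (add, replace, or exchange) in Algorithm \ref{alg:swap_matching} strictly increases the social-welfare utility $U(\mu)$. First I would recognize $U(\mu)$ as a potential function. Since the welfare equals the negative of the minimum transmit power returned by the SOCP/KKT subproblem for the activation $\mathbf{A}(\mu)$, i.e. $U(\mu)=-\|\mathbf{W}(\mu)\|_F^2$, it is bounded above by $0$. I also need it bounded below, which holds provided the algorithm only ever visits feasible matchings — those for which the SINR constraints \eqref{P2_SINR} admit a finite-power beamformer. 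I would therefore begin by checking that the initialization (each waveguide matched to at least one location, so $\mu(n)\ne\emptyset$) is feasible and that every accepted swap preserves feasibility, so $U(\mu)$ stays finite and the potential lives in a bounded interval.

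Next I would establish termination. The matching space is finite because each $a_{l,n}(\mu)\in\{0,1\}$ over $M=NL$ entries, giving at most $2^{M}$ distinct states. Algorithm \ref{alg:swap_matching} moves to a new matching $\mu'$ only when $U(\mu')>U(\mu)$ strictly, so the trajectory yields a strictly increasing sequence $U(\mu^{(0)})<U(\mu^{(1)})<\cdots$. A strictly increasing real sequence over a finite state set cannot repeat a state, so no matching is revisited, no cycle occurs, and the procedure halts after at most $2^{M}$ accepted swaps. Because the stopping rule is exactly the absence of any welfare-blocking pair, the terminal $\mu^{*}$ satisfies welfare-based pairwise stability by construction.

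Finally I would argue local optimality of $(\mu^{*},\mathbf{W}(\mu^{*}))$ for (P0). I would fix the swap neighborhood as the set of matchings reachable from $\mu^{*}$ by a single add, replace, or exchange operation, and note that upon termination none of these lowers the total power. Since $\mathbf{W}(\mu^{*})$ is the \emph{global} minimizer of $\|\mathbf{W}\|_F^2$ for the frozen activation $\mathbf{A}(\mu^{*})$ — obtained in closed form from the KKT conditions \eqref{optimal_W_matching} — the continuous block is already optimal, so the only remaining avenue to decrease the objective would be a discrete move, which has been excluded. Combining the two blocks shows that no neighboring feasible point $(\mu',\mathbf{W}')$ reduces $\|\mathbf{W}\|_F^2$, which is precisely local optimality of (P0) with respect to the swap-induced neighborhood.

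The hard part, and the step I would scrutinize most, is the local-optimality claim rather than convergence: "locally optimal solution of (P0)" must be given a precise meaning for a mixed-integer program, since the guarantee only certifies optimality against single-swap discrete perturbations composed with re-optimized beamforming. I would be careful to state that the neighborhood is the one-swap neighborhood — so the result is a swap-local, not a general combinatorial-local, optimum — and to verify that the three move types in Algorithm \ref{alg:swap_matching} truly exhaust all single-pair deviations that a welfare-blocking pair can represent; otherwise the "all feasible swaps exhausted" assertion, on which both stability and local optimality rest, would be incomplete.
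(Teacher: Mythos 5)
Your proposal is correct and follows essentially the same route as the paper's proof: a strict-improvement potential argument over the finite matching space for termination, stability by the stopping rule, and local optimality by combining the exhausted swap neighborhood with the KKT-optimal beamforming for the fixed matching. Your added scrutiny of feasibility preservation and the precise meaning of the swap-local neighborhood is a reasonable refinement but does not change the argument.
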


The computational complexity of \textbf{Algorithm \ref{alg:swap_matching}} can be analyzed as follows. 
To evaluate the utility for each matching, the time complexity for solving the optimal $\mathbf{W}^{*}$ is given by 
$\mathcal{O}\left(N^{3} K^{3}\right)$ based on interior point method \cite{InteriorPointComplexity,InteriorPoint_Iteration}. 
During initialization, $M$ new matches need to be evaluated (line 2 of Algorithm \ref{alg:swap_matching}), and the required time complexity is $\mathcal{O}\left(MN^{3}K^{3}\right)$.
Since there are at most $M$ unmatched pairs, the worst-case complexity of line 4-line 6 is given by $\mathcal{O}\left(MN^{3}K^{3}\right)$. 
Moreover, the worst-case complexity of line 8-line 10 and line 11-line 13 can be given by $\mathcal{O}\left((L-1)N^{3}K^{3}\right)$ and $\mathcal{O}\left((M-1)N^{3}K^{3}\right)$, respectively.
Since the system contains at most $M$ matched pairs, the worst-case complexity of line 7-line 14 is $\mathcal{O}\left(M(L+M-2)N^{3}K^{3}\right)$. 
Hence, the worst-case complexity of \textbf{Algorithm \ref{alg:swap_matching}} is given by $\mathcal{O}\left(I_{\mathrm{match}}M(L+M)N^{3}K^{3}\right)$, 
where $I_{\mathrm{match}}$ denotes the number of outer iterations of swap matching.

\begin{figure}[!tp]
    \vspace{-1.2em}
    \centering
    \subfloat[]{\includegraphics[width=1.7in]{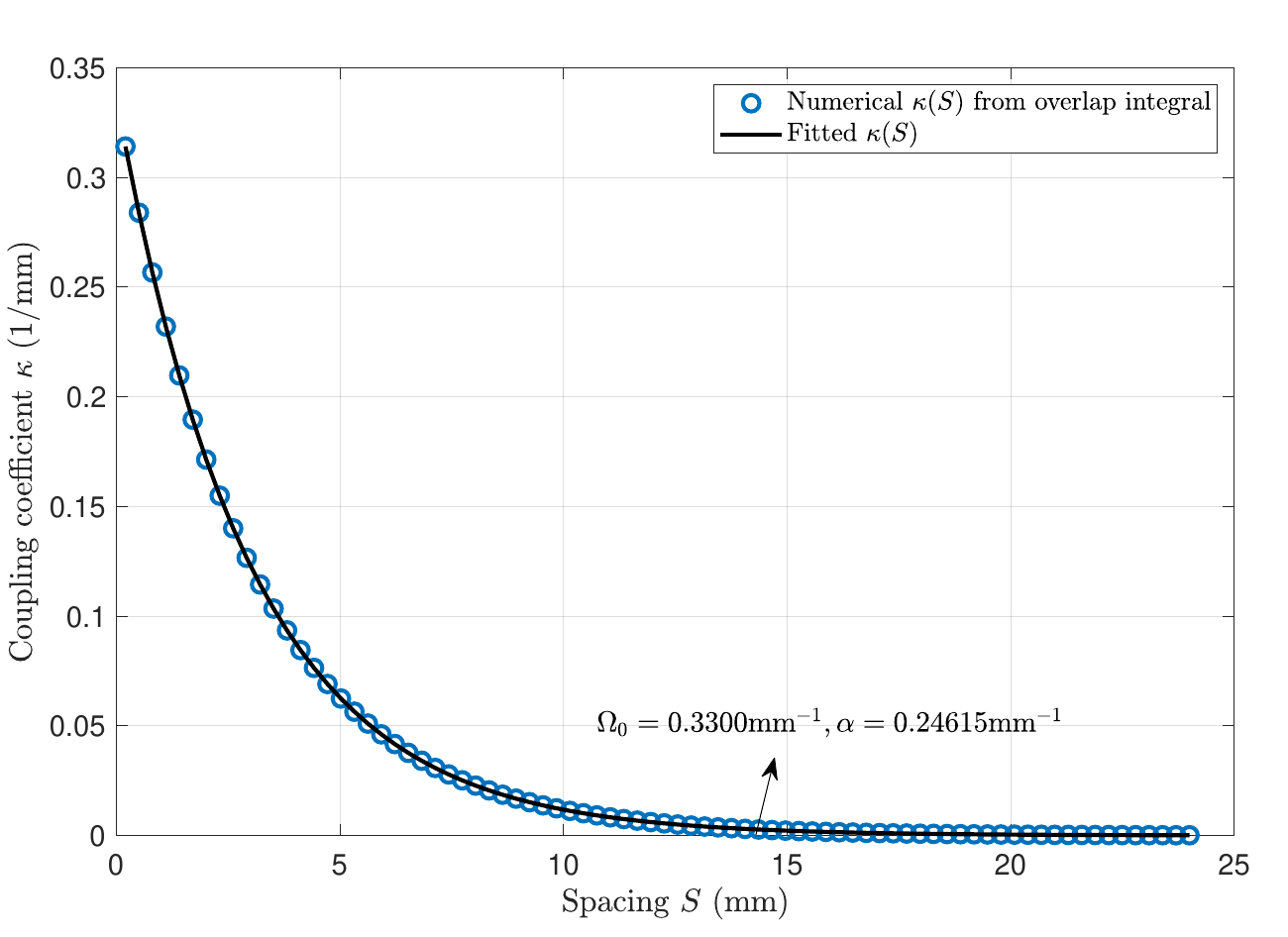}
   \label{fig:kappa_S}}
    \subfloat[]{\includegraphics[width=1.7in]{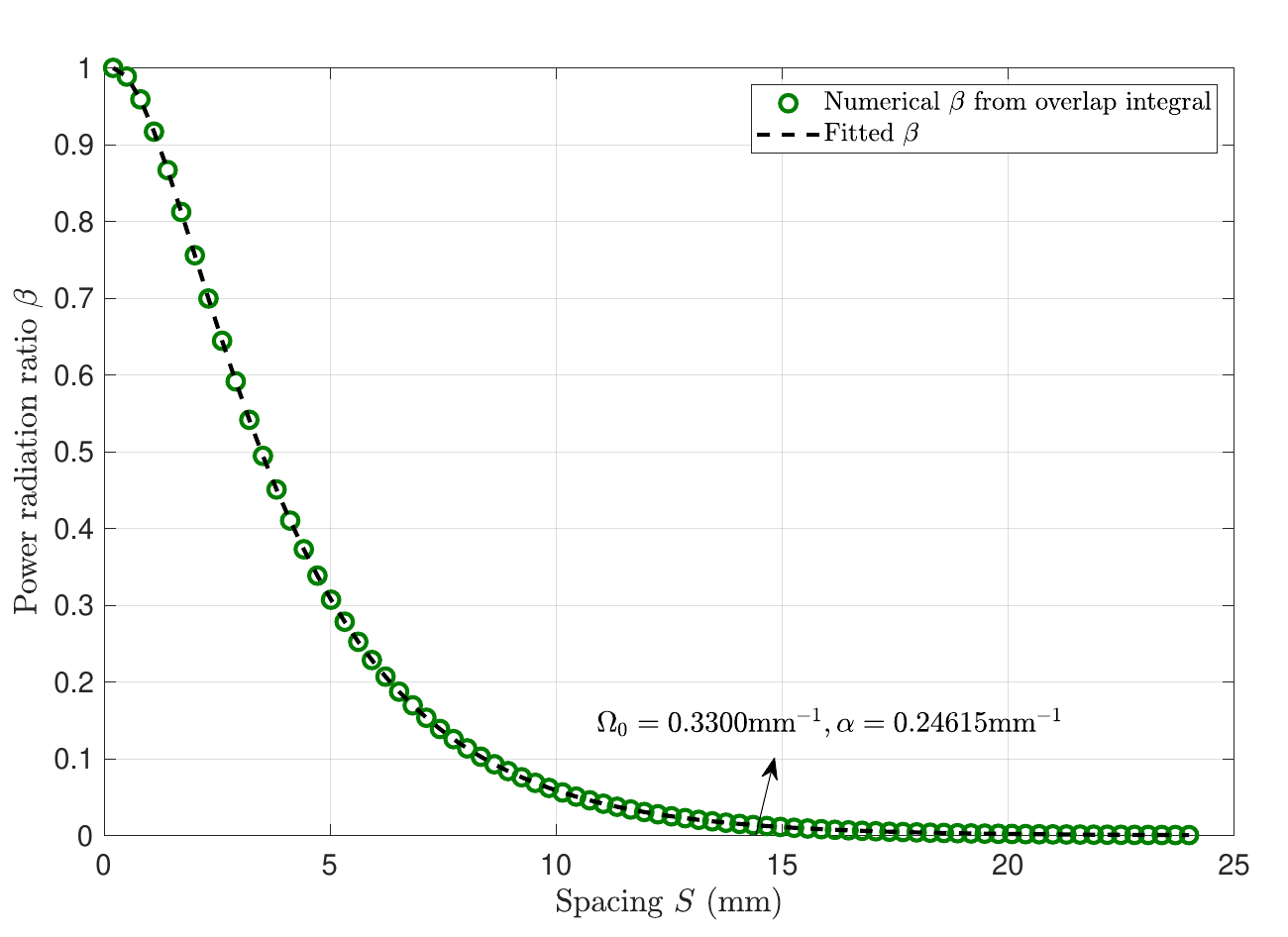}
    \label{fig:beta_S}}
    \caption{Coupling effects versus waveguide-antenna spacing.}
    \label{fig:fitting_S}
\end{figure}


\section{Simulation Results}
We present numerical results for both single-user and multi-user scenarios. 
The operating frequency is $f=15$ GHz, and the noise power is $\sigma^{2} = -80$ dBm. 
The refraction index is $n_{\mathrm{eff}}=1.4$. 
We consider a resource-limited scenario with $N=K$, which corresponds to the minimum RF chains (each connected with a waveguide) required to support $K$ simultaneous data streams.
Without special indications, we use the following setup in simulations. 
The number of users and RF chains/waveguides are $N=K=\{2,4\}$, the minimum SINR requirement is $\gamma_{\min}=20$ dB. 
The spatial ranges are  $S_{\mathrm{x}}=\{5,10,15,20,25,30\}$ and $S_{\mathrm{y}}=10$ meters.
The height of the BS is $H_{\rm PA}=5$ meters. 
For $N=2$, two waveguides are deployed parallel to the $x$-axis, and the feed points are given by $x_{n}^{\rm W} = 0$ and $y_{n}^{\rm W} = nS_{y}/2$, $n\in\{1,2\}$. 
For $N=4$, another two waveguides are deployed parallel to the $y$-axis with feed points $x_{n}^{\rm W} = (n-3)S_{x}/2$ and $y_{n}^{\rm W} = 0$, $n\in\{3,4\}$. 
Each waveguide has $L$ pre-mounted pinching antennas uniformly spaced along its span with intervals $S_x/L> \lambda_{f}$ or $S_y/L>\lambda_{f}$, 
and thus inter-antenna mutual coupling is neglected.

To verify waveguide's coupling effects, we compute the ground-truth coupling coefficient $\kappa(S)$ from the coupled-mode overlap integral using the evanescent field, 
and then estimate $(\Omega_{0},\alpha)$ via linear least-squares fitting of model \eqref{eq:coupling_coefficient}. 
A rectangular waveguide of cladding index $n_{\mathrm{clad}}=1.0$ and $2b=10$ mm is considered. 
The fitted parameters are obtained as
$\alpha = 0.24615~\mathrm{mm}^{-1}$ and $\Omega_{0}=0.3300~\mathrm{mm}^{-1}$.
As shown in Fig.~\ref{fig:fitting_S}, the fitted curve exhibits accurate approximation with the simulation data, 
confirming the effectiveness of model \eqref{eq:coupling_coefficient}.
Given an effective pinching antenna length $D^{\mathrm{PA}}=5\,\mathrm{mm}$, a minimum spacing $S_{\min}=0.1999\approx0.2\,\mathrm{mm}$ can be set,  
so that $\sin\left(\kappa(S_{\min})D^{\mathrm{PA}}\right)=\sin(\pi/2)=1$. 
For instance, when $L^{\mathrm{s}}=6$, the equal-power radiation spacings are given by $\{S_l\}_{l=1}^{6}=\{5.554,5.157,4.633,4.006,3.016,0.200\}$ (mm), which
monotonically decrease along the waveguide.

Two conventional MIMO systems deployed at the BS  with half-wavelength antenna spacing are considered as baselines.
\begin{itemize}
    \item \textbf{Massive MIMO}: A hybrid beamforming architecture \cite{SubConnectedHybridBF} is exploited, which equip $N$ RF chains, and each RF chain is connected to $L$ antennas via phase shifters. 
    Penalty-based method \cite{PDD_HybridBF} is employed to obtain the hybrid beamforming coefficients.
    \item \textbf{MIMO}: A conventional MIMO architecture is exploited, where each RF chain is connected with a single antenna. 
    Both the numbers of RF chains and antennas are equal to the number of users, i.e., $N=L=K$.
\end{itemize}

\subsection{Single-User Scenario}
In \textbf{Algorithm \ref{alg:branch_and_bound_SU}}, we consider the following schemes to determine the numbers of activated pinching antennas:
\begin{itemize}
    \item \textbf{BnB-Optimal}: Exhaustively search all combinations of possible $L_{n}^{\mathrm{s}}$ for each waveguide $n\in\mathcal{N}$, 
    and obtain the globally optimal solution by \textbf{Algorithm \ref{alg:branch_and_bound_SU}}.
    \item \textbf{BnB-Equal}: Force all waveguides to activate an equal number of pinching antennas, i.e., $L_{n}^{\mathrm{s}} = L^{\mathrm{s}}$, $\forall n$. 
    Hence, only a very small set $L^{\mathrm{s}} \in \{1,2,\dots,L\}$ needs to be enumerated in \textbf{Algorithm \ref{alg:branch_and_bound_SU}}, thus reducing complexity.
\end{itemize}

Fig. \ref{fig_SU_conv} demonstrates the convergence behaviors of the developed BnB algorithm for single-user scenario, where $N=2$, $L=12$, $S_x = S_y = 15$.
Specifically, the gaps between $\mathrm{GUB}$ and $\mathrm{GLB}$ converge to $0$ with only $10$ branching operations. 
This verifies that the developed BnB algorithm can efficiently search for the optimal solution of problem (P1-1) when the number of activated pinching antennas are fixed. 
Moreover, the BnB-Equal strategy achieves a similar performance with the BnB-Optimal strategy. 
This implies that for single-user scenario, using an equal number of pinching antennas at different waveguides can approximate the optimal performance while reducing the computational costs.

\begin{figure}[!htbp]
    \vspace{-0.8em}
    \centering
    \includegraphics[width=3.3in]{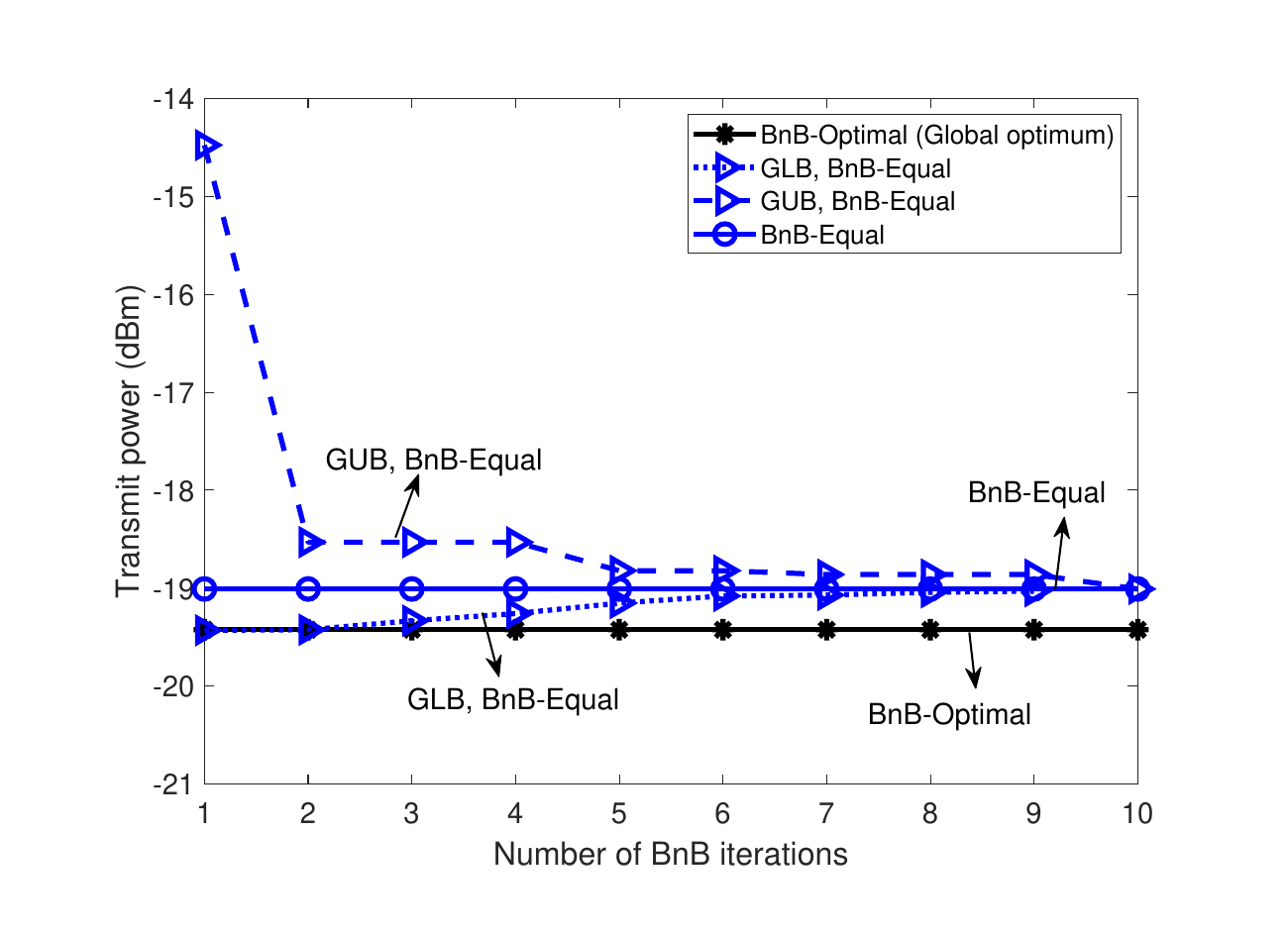}
    \caption{Convergence of BnB for single-user scenario.} 
    \label{fig_SU_conv}
\end{figure}

\begin{figure}[!htbp]
    \vspace{-0.4em}
    \centering
    \includegraphics[width=3.1in]{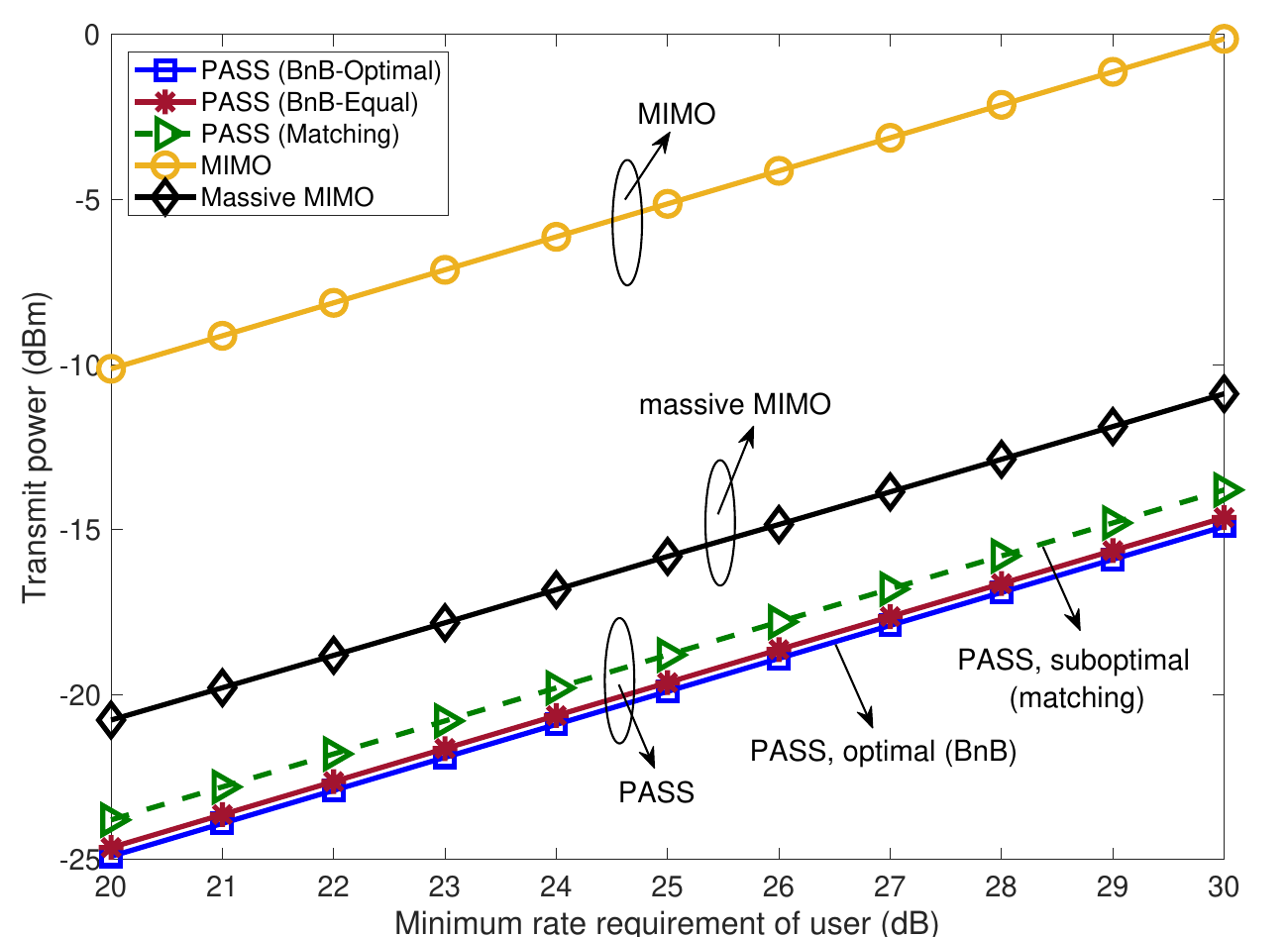}
    \caption{Performance comparisons in single-user scenario. }
    \label{fig_SU_gamma}
\end{figure}


Fig. \ref{fig_SU_gamma} compares the performance of single-user scenario and conventional single-user MIMO systems under different SINR requirement $\gamma_{\min}$, where 
$N=2$, $L=12$, $S_{x}=S_{y} = 15$.
The matching algorithm developed in Section IV is further extended to deal with the single-user scenario.
As shown in Fig. \ref{fig_SU_gamma}, conventional MIMO system requires the highest power consumption to satisfy user's SINR requirement. 
In comparison, massive MIMO system significantly reduces the power consumption by adopting the hybrid beamforming architecture that is more energy-efficient.  
By flexibly activating pinching antennas next to the user, PASS can reduce over $30\%$ power consumption compared to massive MIMO in single-user scenarios. 
Compared to BnB-Optimal, both BnB-Equal and matching algorithm realize neal-optimal performance.

\subsection{Multi-User Scenario}
\vspace{-0.2em}

\begin{figure}[!htbp]
    \vspace{-1.2em}
    \centering
    \includegraphics[width=3.1in]{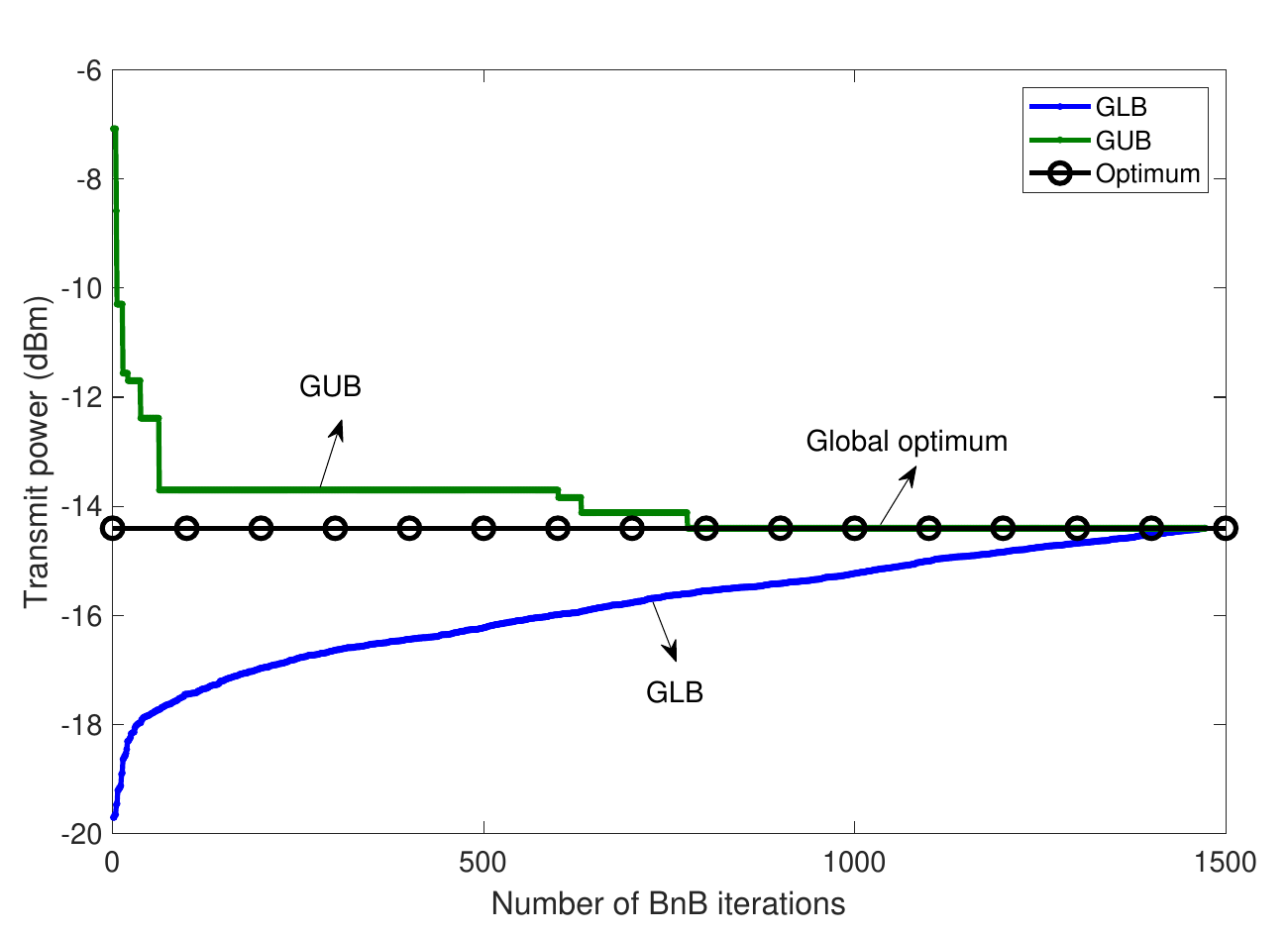}
    \caption{Convergence of BnB for multi-user scenario.} 
    \label{fig_BnBconv_MU}
\end{figure}

We evaluate the multi-user scenario performance in this part. 
Fig. \ref{fig_BnBconv_MU} demonstrates the convergence behavior of the proposed globally optimal BnB algorithm (see \textbf{Algorithm \ref{alg:branch_and_bound_MU}}), 
where $L=6$, $K=N=4$. 
As observed, the global lower bound increases monotonically, while the global upper bound decreases with each branching step, 
which confirms the validity of the convex relaxation and bounding functions.
The gap between $\mathrm{GUB}$ and $\mathrm{GLB}$ monotonically narrows and eventually approaches zero. 
This verifies that the proposed BnB algorithm guarantees convergence to the $\varepsilon$-optimal solution within a finite number of iterations, 
which is consistent with the theoretical analysis in \textbf{Theorem \ref{theorem:MU_BnBConverge}}.
Compared to exhaustive search, which requires evaluating $2^{NL} = 2^{24} \approx 1.6777 \times 10^{7}$ combinations of antenna activations and is computationally forbiditive, 
the proposed BnB algorithm achieves the global optimum within $2000$ iterations.
Nevertheless, the computational complexity remains high, motivating the development of more efficient low-complexity algorithm.

\begin{figure}[!htbp]
    \vspace{-0.6em}
    \centering
    \includegraphics[width=3.2in]{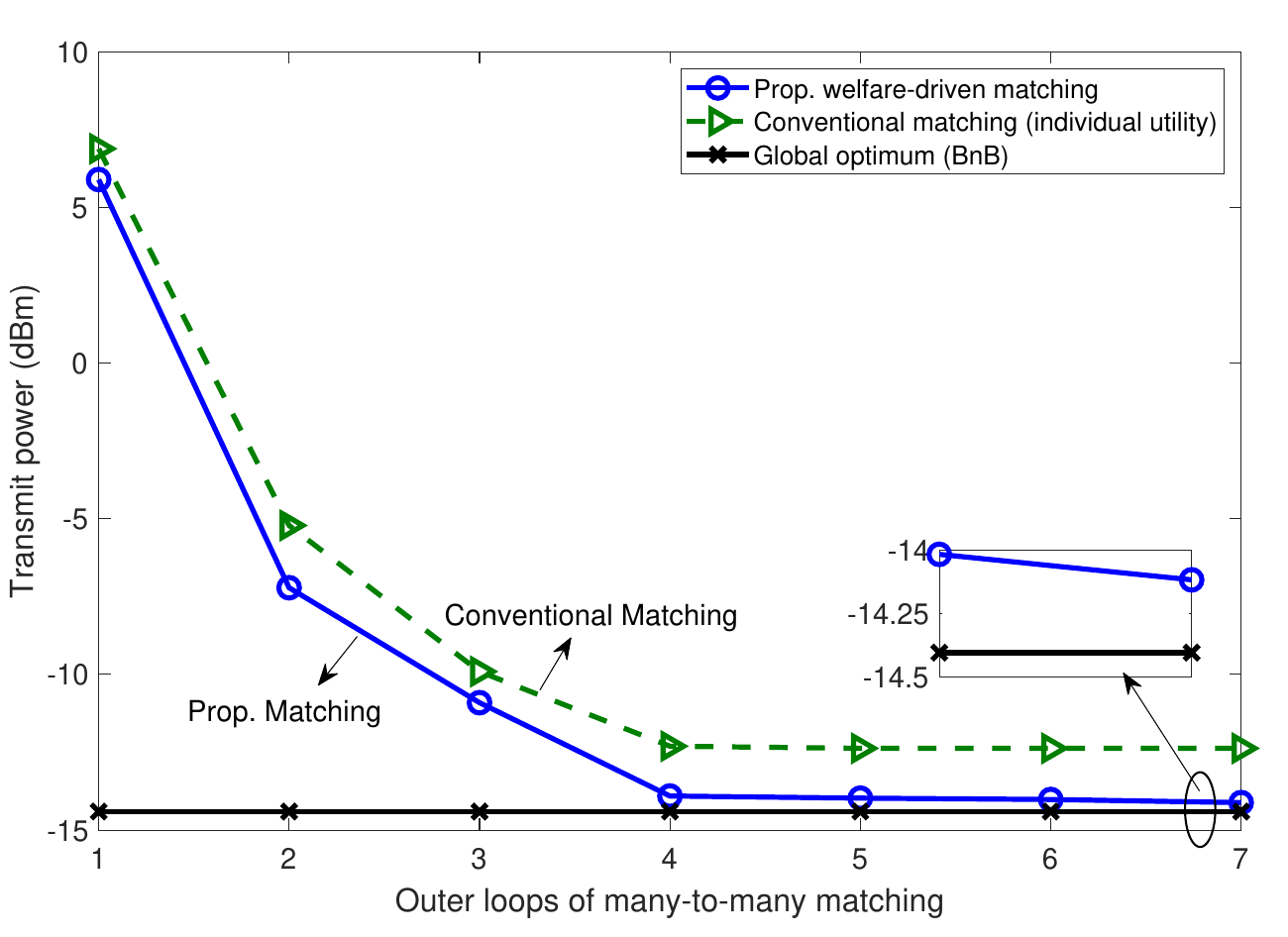}
    \caption{Convergence of many-to-many matching algorithm.} 
    \label{fig_MU_SMconv}
\end{figure}

We further evaluate the convergence behavior of the proposed  welfare-driven many-to-many matching algorithm. 
Fig. \ref{fig_MU_SMconv} exhibits the total transmit power versus the number of outer iterations in \textbf{Algorithm \ref{alg:swap_matching}}. 
It can be observed that the total transmit power decreases monotonically with swap operations 
and eventually converges to a stable value within approximately $I_{\mathrm{match}}=10$ outer loops, making it a computationally tractable solution suitable for practical implementation. 
The algorithm terminates once no further welfare-improving blocking pairs exist, thereby reaching a pairwise stable matching, aligned with \textbf{Theorem~\ref{theorem:matching_conv}}.
Notably, the proposed welfare-driven many-to-many matching realizes near-optimal performance with only marginal loss compared to the globally optimal BnB algorithm, 
while substantially reducing computational complexity.
In contrast, conventional swap matching based on individual utility fails to approach the global optimum, as the agents are not incentivized to contribute to the overall utility. 

\begin{figure}[!htbp]
    \vspace{-0.6em}
    \centering
    \includegraphics[width=3.2in]{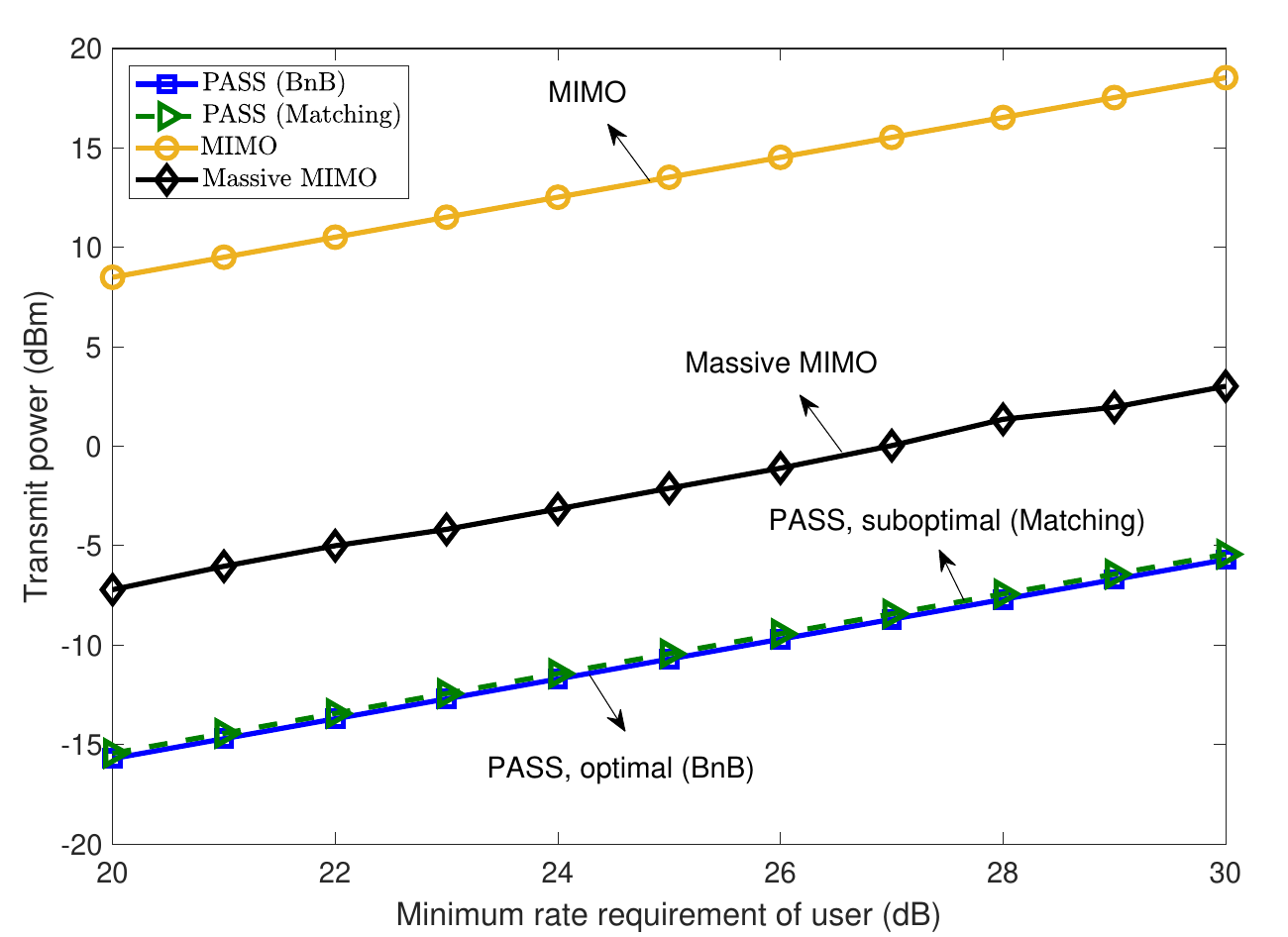}
    \caption{Transmit power versus $\gamma_{\min}$. }\label{fig_MU_gamma}
\end{figure}

Fig. \ref{fig_MU_gamma} compares the total transmit power of different architectures versus the minimum SINR requirement $\gamma_{\min}$, 
where $L=6$, $K=2$.
The transmit power consumption increases monotonically with the SINR requirement for all schemes, 
as higher rate requirements demand stronger signal power.
Among the compared architectures, conventional MIMO and massive MIMO consume significantly higher transmit power, particularly under high SINR requirements.
In contrast,  PASS with the globally optimal BnB algorithm achieves the lowest transmit power, 
reducing power consumption by over $22$ dBm and $7.5$ dBm compared to MIMO and massive MIMO, respectively. 
This means that PASS achieves over $99\%$ and $80\%$ power savings then MIMO and Massive MIMO, respectively, 
confirming its capabilities in adjusting large-scale path loss and achieving energy savings. 
Furthermore, the proposed welfare-driven matching algorithm achieves near-optimal performance despite different SINR requirements, 
which demonstrates its effectiveness and practicality.

\begin{figure}[!htbp]
    \vspace{-0.6em}
    \centering
    \includegraphics[width=3.2in]{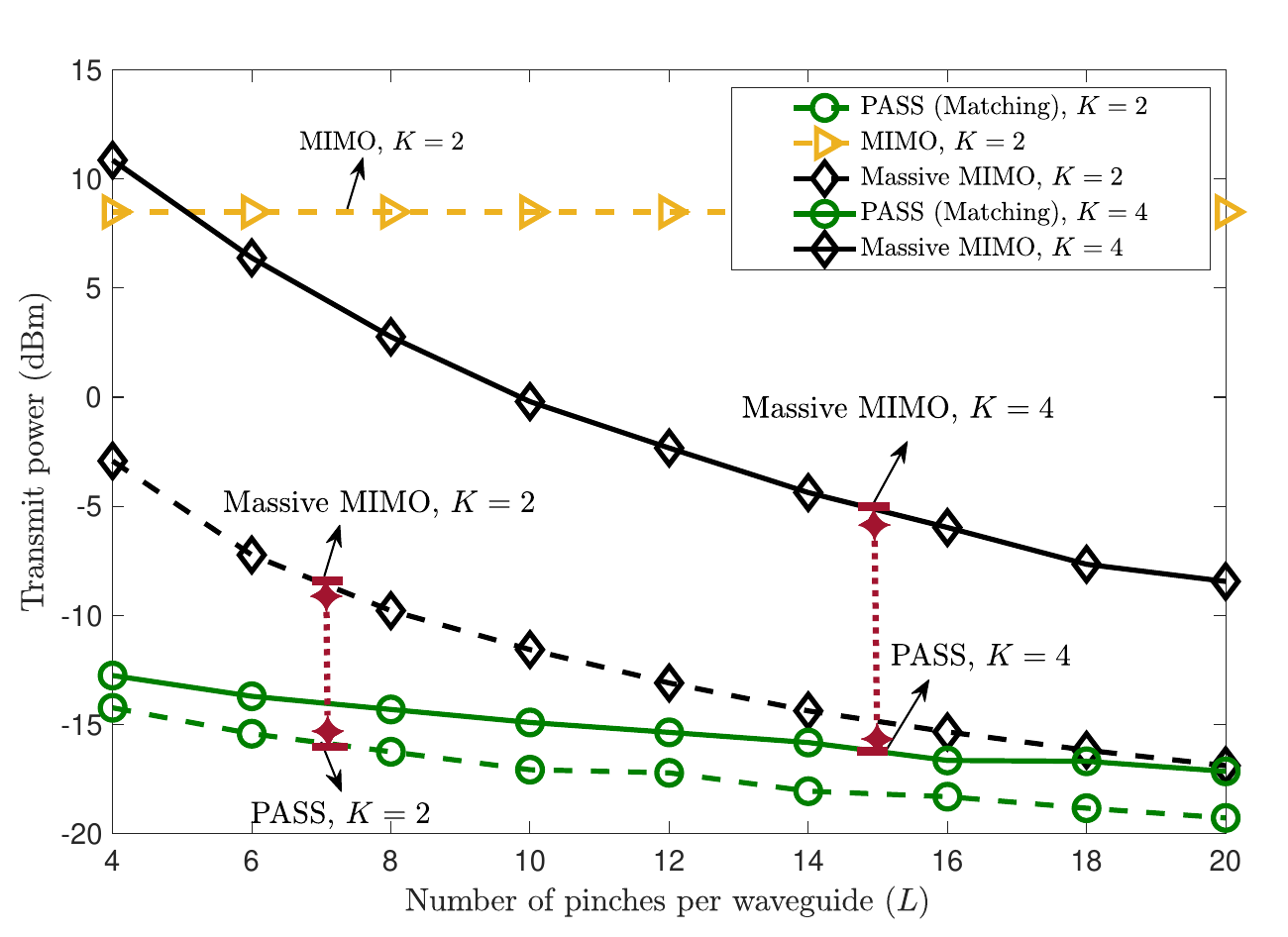}
    \caption{Transmit power versus $L$. }
    \label{fig_MU_L}
\end{figure}

Fig. \ref{fig_MU_L} illustrates the impact of the number of pinching antennas per waveguide $L$ 
on the transmit power of different architectures. 
It is observed that the transmit power required by both PASS and massive MIMO decreases with increasing $L$, 
as the additional pinching antennas provide greater spatial degrees of freedom. 
Compared to massive MIMO, PASS achieves a lower transmit power across all $L$, highlighting its reconfigurability and scalability advantages.
Moreover, the performance gain increases with the number of users. 
For instance, when $K=4$ and $L=14$, PASS reduces over $10$ dBm transmit power than massive MIMO. 
This confirms the capability of PASS to efficiently serve users distributed across different spatial regions through adaptive pinching antennas. 

\begin{figure}[!htbp]
    \vspace{-0.6em}
    \centering
    \includegraphics[width=3.2in]{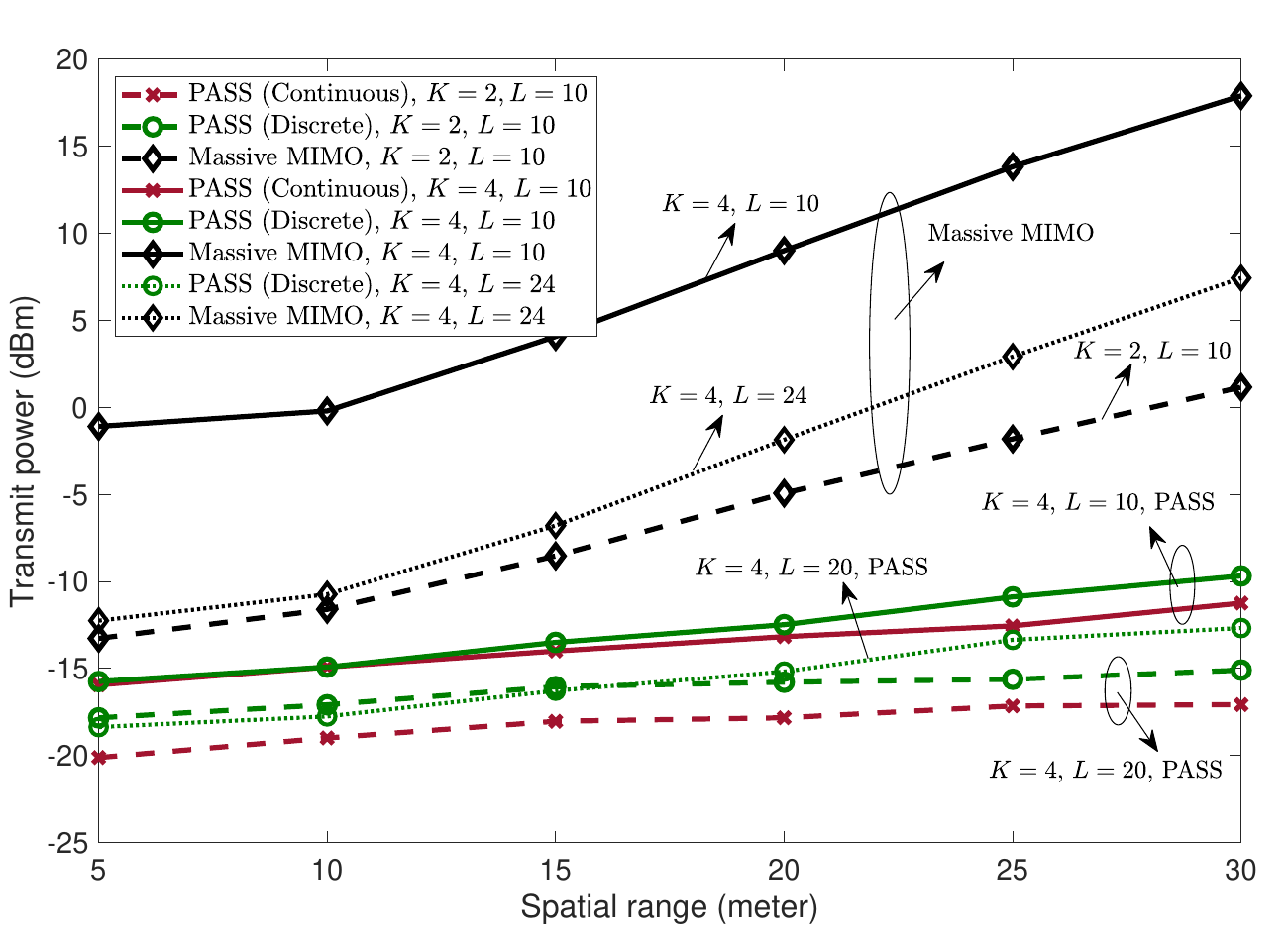}
    \caption{Transmit power versus spatial range $S_{\mathrm{x}}$.} 
    \label{fig_MU_S}
\end{figure}

Fig. \ref{fig_MU_S} presents the system performance of different architectures versus the spatial range $S_{\mathrm{x}}$, with $\gamma_{\min} = 20$ dB and $L=10$. 
The continuous activation is included as an additional benchmark, 
where each pinching antenna's location is flexibly adjusted in a continuous space through grid searching over $50$ grids. 
As the spatial range increases, the power consumption of conventional massive MIMO system significantly grows to compensate for larger path loss of users. 
In contrast, both discrete and continuous activations require slight increments in transmit power when spatial range increases. 
This is because pinching antennas can be activated near the users,  thereby maintaining communication quality over the coverage area.
While continuous activation leads to the best behaviors,  
the discrete structure provides a practical and cost-effective alternative with reduced implementation complexity, 
which suggests its viability and potential in practical applications. 
Furthermore, as the number of pinching antennas increases, the required transmit power effectively decreases, which confirms the scalability of the proposed algorithm.

\section{Conclusion}
A novel adjustable power radiation model for PASS has been proposed, which enables power radiation ratios to be flexibly adjusted by configuring spacing between pinching antennas and waveguides. 
The closed-form pinching antenna spacing arrangement to achieve equal-power radiation was derived for any arbitrary number of activated antennas.
Exploiting this in downlink PASS communications with practical discrete activation, pinching beamforming and transmit beamforming have been jointly optimized to minimize total transmit power, subject to users' 
SINR constraints. 
Globally optimal BnB algorithms have been proposed for both single-user and multi-user scenarios, with theoretical guarantees on convergence and optimality.
To reduce the computational complexity, a welfare-driven many-to-many matching algorithm was further proposed to obtain locally optimal and pairwise-stable solutions within polynomial-time complexity. 
Simulation results confirmed that PASS outperforms traditional multi-antenna architectures, particularly when the user number and the spatial range increase, and 
the proposed welfare-driven matching attains near-optimal performance with much lower complexity. 
These results highlight the potentials of PASS as a promising architecture for next-generation wireless systems.
Further research may explore advanced designs to support large-scale user deployments and mitigate dynamic blockages.

\appendices
\section{Proof of Proposition \ref{proposition:coupling_strength}}\label{proof:proposition:coupling_strength}
Note that \textbf{Proposition \ref{proposition:coupling_strength}} generalizes analytical results in \cite{Okamoto_waveguide}, and is aligned with experimental results in \cite{dynamic_coupling_spacing}. 
The analytical expressions of coupling coefficients for both rectangular and circular waveguides can be derived following \cite{Okamoto_waveguide}. 
\textit{First}, for a rectangular waveguide and a rectangular non-contact coupler (the pinching antenna) with an equal core width $2b$ (along the $y$-axis in Fig. \ref{fig_waveguide}),  
using the evanescent field in the cladding yields the analytical expression of the coupled-mode overlap integral \cite[Eq.~(4.91)]{Okamoto_waveguide} 
    \begin{equation}
    \label{eq:ok_rect_491}
    \kappa_{\mathrm{rect}}=\tfrac{\sqrt{2\Delta_{0}}}{b}\tfrac{k_0^2\alpha^{2}b^{4}}{\left(1+\alpha b\right)v^3}~e^{-\alpha(S-2b)},
    \end{equation}
    where $S$ is the center-to-center spacing (so $S-2b$ is the edge-to-edge gap), 
    $k_0$ is the transverse core wavenumber, $\alpha = \sqrt{4\pi^2/\lambda_{f}^2 (n_{\rm eff}^2 - n_{\rm clad}^2)-k_0^2}$ is the cladding decay constant along $z$-axis \cite[Eq. (2.48)]{Okamoto_waveguide}, 
    with propagation constant $\gamma_{0} = \sqrt{4\pi^2/\lambda_{f}^2n_{\rm eff}^2-k_0^2}$, 
    $\Delta_{0}$ denotes the relative index contrast to the core,  and $v=2\pi/\lambda ~ n_{\rm eff}b\sqrt{2\Delta_{0}}$ is the normalized frequency. 
    By setting $\Omega_{0} = \tfrac{\sqrt{2\Delta_{0}}}{b}\tfrac{k_0^2\alpha^{2}b^{4}}{\left(1+\alpha b\right)v^3} e^{-2\alpha b}$, 
\eqref{eq:ok_rect_491} reduces to \eqref{eq:coupling_coefficient}. 
\textit{Secondly}, for circular dielectric guides (core radius $b$, center separation $D$), 
the coupling coefficient formula can be derived based on the exterior guided field with modified Bessel functions, which leads to \cite[Eq. (4.121)]{Okamoto_waveguide}
\begin{equation}
    \label{eq:ok_circular_121}
    \kappa_{\mathrm{circ}}=\tfrac{\sqrt{\Delta_{0}}}{b}\tfrac{u^2}{K_1^2(w)v^3}\tfrac{\sqrt{\pi b}}{wS}~e^{-\alpha(S-2b)},
\end{equation}
where $v\!=\!2\pi/\lambda_{f} b \sqrt{n_{\rm eff}^2 \!-\! n_{\rm clad}^2}$, $u \!=\! b\sqrt{4\pi^2/\lambda^2n_{\mathrm{eff}}^2\!-\!\gamma_{0}^2}$, and $w \!=\! b\alpha$. 
Thus, $\kappa_{\rm circ}$ also reduces to (3).
This ends the proof.

\section{Proof of Lemma \ref{lemma:equalpower}}\label{proof:lemma:equalpower}
    We ignore the waveguide index $n$ here, and define $\delta_l \triangleq \sin\left(\kappa_{l} D^{\mathrm{PA}}\right)$. Let sequence \(\{\delta_l\}_{l=1}^{L_n^{\mathrm{s}}}\) satisfy the relationship
$\delta_l \prod_{i=1}^{l-1} \sqrt{1 - \delta_i^2} = \beta, ~ \forall l \in \{1, \dots, L^{\mathrm{s}}\}$.
By recursively solving this equation we have 
$\delta_l = \tfrac{\beta}{\sqrt{1 - \sum_{i=1}^{l-1} \delta_i^2}}$.
Substituting recursively yileds the general expression:
\begin{equation}\label{general_radiating_control}
    \delta_l \!=\! \tfrac{\beta}{\sqrt{1 \!-\! (l-1)\beta^2}}
    \overset{(a)}{=} 
    \tfrac{1}{\sqrt{L^{\mathrm{s}} \!-\! (l - 1)}}, ~ \text{for } l = 1, 2, \dots, L^{\mathrm{s}},
\end{equation}
where (a) is obtained by setting $\beta = \tfrac{1}{\sqrt{L^{\mathrm{s}}}}$. 
Gathering the activated antennas from all the candidate antennas, the closed-form solution of $\delta_{l}$ is written as
\begin{equation} 
    \delta_{l}\!=\!\tfrac{\beta}{\sqrt{1\!-\!\rho_{l}\beta^{2}}}
    \!=\! \tfrac{1}{\sqrt{L^{\mathrm{s}} \!-\! \rho_{l}}}, ~
    \text{for $l = 1,2,...,L$}.
\end{equation}
From the definition of $\delta_{l}$ it follows that
$\kappa_l=\tfrac{\arcsin(\delta_l)}{D^{\mathrm{PA}}}$ and 
$S_l=\tfrac{1}{\alpha}\,\ln\!\left(\tfrac{\Omega_{0}}{\kappa_l}\right)$, $l=1,\dots,L^{\mathrm{s}}$,
which completes the proof.

\section{Proof of Lemma \ref{lemma:GLB_GUB_diff}}\label{proof:lemma:GLB_GUB_diff}
Let $\mathbf{x}_{\mathrm{c}}=\big\{\mathbf{A}_{\mathrm{c}},\mathbf{D}_{\mathrm{c}},\mathbf{Z}_{\mathrm{c}}\big\}$  and 
$\mathbf{x}_{\mathrm{GUB}}=\big\{\mathbf{A}_{\mathrm{proj}},\allowbreak\mathbf{D}_{\mathrm{proj}},\mathbf{Z}_{\mathrm{proj}}\big\}$ 
denote the vectorized solutions that achieve $\mathrm{GLB}$ and $\mathrm{GUB}$.
From Lagrange mean-value theorem, we have
\begin{equation}\label{GUB_GLB_diff}
    \begin{split}
        &\mathrm{GUB} \!-\! \mathrm{GLB} \!=\! F\!\left(\mathbf{x}_{\mathrm{GUB}}\right) \!-\! F\!\left(\mathbf{x}_{\mathrm{c}}\right) 
        \!=\! \nabla_{\mathbf{x}} F^{T}\!\left(\mathbf{x}\right)\left(\mathbf{x}_{\mathrm{GUB}} \!-\! \mathbf{x}_{\mathrm{c}}\right) \\
    &\overset{(a)}{\leqslant}\!\! \left\Vert \nabla_{\mathbf{x}} F\left(\mathbf{x}\right) \right\Vert \!
    \left\Vert \mathbf{x}_{\mathrm{GUB}} - \mathbf{x}_{\mathrm{c}} \right\Vert
    \!\overset{(b)}{=}\! \sqrt{2MP_0} \left\Vert \mathbf{x}_{\mathrm{GUB}} - \mathbf{x}_{\mathrm{c}} \right\Vert,
    \end{split}
\end{equation}
where $\mathbf{x} \in \left\{\mathbf{y} \mid \mathbf{y} = t \mathbf{x}_{\mathrm{GUB}} + (1-t) \mathbf{x}_{\mathrm{c}}, ~ t\in\left[0,1\right]\right\}$. 
Inequality (a) comes from Cauchy-Schwarz inequality, 
and inequality (b) results from the fact that the $l_{2}$-norm of the gradient 
$\nabla_{\mathbf{x}}F\left(\mathbf{x}\right)=\left[\mathbf{0}_{M\times1},\mathbf{0}_{NK\times1},2\mathbf{z}_{1}^{T},2\mathbf{z}_{2}^{T},\dots,2\mathbf{z}_{K}^{T}\right]^{T}$ 
is given by $\left\Vert \nabla_{\mathbf{x}}F\left(\mathbf{x}\right)\right\Vert =\sqrt{2}\left\Vert \mathbf{Z}\right\Vert _{F}\leqslant\sqrt{2MP_{0}}$.  
Combining $\left\Vert \mathbf{x}_{\mathrm{GUB}} - \mathbf{x}_{\mathrm{c}} \right\Vert \leqslant \sqrt{B} \phi_{\max}$ and 
\eqref{GUB_GLB_diff}, we have 
\begin{equation}
    \mathrm{GUB}-\mathrm{GLB} \leqslant \sqrt{2MP_{0}B}\phi_{\max}.
\end{equation}
Hence, by selecting $\phi_{\max} \leqslant \varepsilon/\sqrt{2MP_{0}B}$, we have $\mathrm{GUB}-\mathrm{GLB} \leqslant \varepsilon$. 
This completes the proof.

\section{Proof of Theorem \ref{theorem:MU_BnBConverge}}\label{proof:theorem:MU_BnBConverge}
We first demonstrate that the proposed method satisfies classic BnB convergence conditions \cite{BnB_convergence}. 
Specifically, $GLB$ and $GUB$ of \textbf{Algorithm \ref{alg:branch_and_bound_MU}} converge in a finite number of iterations if the following conditions hold \cite{BnB_convergence}:
    \begin{itemize}
        \item[1)] \textbf{Bound validity}: The upper and local bounds become tight as the length of boxes shrinks to a point. 
        \item[2)] \textbf{Exhaustiveness}: The length of branched box decreases to zero as the number of iterations approaches infinity.
        \item[3)] \textbf{Bound convergence}: The gap $\mathrm{GUB}-f_{\mathrm{true}}^{*}$ vanishes as the maximum edge length approaches zero.  
    \end{itemize}
    First, since the equalities in constraints \eqref{McCormickE_mu} hold true when $\overline{\mathbf{b}}=\underline{\mathbf{b}}$, 
    McCormick envelope shrinks to bilinear constraints as the length of boxes shrinks to a point. 
    Hence, \textbf{Algorithm \ref{alg:branch_and_bound_MU}} satisfies condition 1).
    Moreover, condition 2) holds true based on the employed box selection and branching rules.
    From definitions of $\mathrm{GUB}$ and $\mathrm{GLB}$, we have $\mathrm{GLB} \leqslant f^{*} \leqslant \mathrm{GUB}$, which implies that
        $0 \leqslant \mathrm{GUB} - f_{\mathrm{true}}^{*} \leqslant \mathrm{GUB}-\mathrm{GLB}$.
    Combining \textbf{Lemma \ref{lemma:GLB_GUB_diff}}, when $\phi_{\max}\leqslant \varepsilon/\sqrt{2MP_{0}B}$, we have 
    \begin{equation}\label{convergence_BnB}
        \mathrm{GUB} - f_{\mathrm{true}}^{*} \leqslant \mathrm{GUB}-\mathrm{GLB} \leqslant \varepsilon, ~ \forall \varepsilon \geqslant 0.   
    \end{equation}
    Hence, the bounding procedure converges as $\phi_{\max} \rightarrow 0$, and condition 3) is also satisfied.
    Since the algorithm terminates with $f^* = \mathrm{GUB}$, inequality \eqref{convergence_BnB} guarantees that \( f^* \leq f_{\mathrm{true}}^{*} + \varepsilon \). 
    This ends the the proof.

\section{Proof of Theorem \ref{theorem:BnB_complexity}}\label{proof:theorem:BnB_complexity}
For binary discrete variables $\mathbf{A}$, it is obvious that at most $2^{M}$ partitions with $M$ tree depths need to be searched. 
For continuous variables, at most $T_{\mathrm{c}}$ (branching) iterations with $N_{\mathrm{c}}$ tree depths are required, as analyzed below. 
In the worst case, assume that \textbf{Algorithm \ref{alg:branch_and_bound_MU}} converges at the $T$-th iteration, and 
the BnB tree depth is $N_{\rm tree}=M+N_{\rm c}$. 
The edge lengths of the best node that reaches $GUB$ are denoted by 
$\phi_{1}, \phi_{2}, \dots, \phi_{B}$. 
Combining the branch rule and \textbf{Lemma \ref{lemma:GLB_GUB_diff}}, 
before each edge $i$ performing its last partition, its length is $2\phi_{i}$ and satisfies
\begin{equation}\label{edge_length_bound}
    2\phi_{i} \geqslant \xi, ~ \forall i = 1,2, \dots, B.
\end{equation}
Hence, the volume of the branching node  over $B_{\rm c} \triangleq 2NK$ continuous variables at the $N_{\rm tr}$-th depth level satisfies
    $\psi_{N_{\rm tr}}=\frac{\psi_{\mathrm{vol}}}{2^{N_{\mathrm{c}}}}=\prod_{i=1}^{B_{\rm c}}\phi_{i}
    \overset{\eqref{edge_length_bound}}{\geqslant}\left(\tfrac{\xi}{2}\right)^{B_{\rm c}}.$
After rearrangement and combining $B=B_{\rm c} + M$, the tree depth can be bounded by 
\begin{equation*}
        N_{\mathrm{c}}\!\leqslant\! \!\left\lceil\!B_{\rm c} \!+ \! \log_{2} \!\left(\frac{\psi_{\mathrm{vol}}}{\xi^{B_{\rm c}}} \!\right)\!\right\rceil
        \! \Rightarrow \!
        N_{\rm tr} \!\! = \!\! N_{\mathrm{c}} \!+\! M \!
        \!\leqslant\!\left\lceil\! B \! +\!\log_{2}\!\left(\frac{\psi_{\mathrm{vol}}}{\xi^{2NK}}\right)\right\rceil.
\end{equation*}
Since the number of candidate nodes at the $n$-th depth level is $2^{n}$, 
at most $\sum_{n=1}^{N_{\rm tr}}2^{n}=2^{N_{\rm tr}+1} -1 $ candidate nodes exist at the maximum tree depth $N_{\rm tr}$. 
In the worst case, all candidate nodes need to be fathomed. 
Thus, the algorithm terminates after performing at most $T=2^{N_{\rm tr}+1}-1\!\leqslant\! \left\lceil\frac{\psi_{\mathrm{vol}}}{\delta^{2NK}}2^{B+1} -1\right\rceil$ branching iterations, 
which ends the proof.

\end{document}